\newcommand{\acmart}{\True}
\begin{document}

\newcommand{\PublishedDOI}{10.1145/3632893}
\newcommand{\arxivId}{2311.07369} 

\Extended{
  \setcopyright{cc}
  \setcctype{by-sa}
}{
  \setcopyright{rightsretained}
}
\acmDOI{\PublishedDOI}
\acmYear{2024}
\copyrightyear{2024}
\acmSubmissionID{popl24main-p252-p}
\acmJournal{PACMPL}
\acmVolume{8}
\acmNumber{POPL}
\acmArticle{51}
\acmMonth{1}
\received{2023-07-11}
\received[accepted]{2023-11-07}

\title[Unboxed Data Constructors]{Unboxed Data Constructors: \\ Or, How \texttt{cpp} Decides a Halting Problem}

\author{Nicolas Chataing}
\orcid{0009-0006-4174-2088}
\affiliation{%
  \institution{ENS Paris}
  \country{France}
}
\email{nicolas.chataing@gmail.com}

\author{Stephen Dolan}
\orcid{0000-0002-4609-9101}
\affiliation{%
  \institution{Jane Street}
  \country{UK}
}
\email{stedolan@stedolan.net}

\author{Gabriel Scherer}
\orcid{0000-0003-1758-3938}
\affiliation{%
  \institution{Inria}
  \country{France}
}
\email{gabriel.scherer@gmail.com}

\author{Jeremy Yallop}
\orcid{0009-0002-1650-6340}
\affiliation{%
  \institution{University of Cambridge}
  \country{UK}
}
\email{jeremy.yallop@cl.cam.ac.uk}

\begin{abstract}
  We propose a new language feature for ML-family languages, the ability to selectively \emph{unbox} certain data constructors, so that their runtime representation gets compiled away to just the identity on their argument.
  Unboxing must be statically rejected when it could introduce confusion, that is, distinct values with the same representation.
  
  We discuss the use-case of big numbers, where unboxing allows to write code that is both efficient and safe, replacing either a safe but slow version or a fast but unsafe version.
  We explain the static analysis necessary to reject incorrect unboxing requests.
  We present our prototype implementation of this feature for the OCaml programming language, discuss several design choices and the interaction with advanced features such as Guarded Algebraic Datatypes.

  Our static analysis requires expanding type definitions in type expressions, which is not necessarily normalizing in presence of recursive type definitions.
  In other words, we must decide normalization of terms in the first-order $\lambda$-calculus with recursion.
  We provide an algorithm to detect non-termination on-the-fly during reduction, with proofs of correctness and completeness.
  Our algorithm turns out to be closely related to the normalization strategy for macro expansion in the \texttt{cpp} preprocessor.
\end{abstract}

\begin{CCSXML}
<ccs2012>
   <concept>
       <concept_id>10011007.10011006.10011008.10011024.10011028</concept_id>
       <concept_desc>Software and its engineering~Data types and structures</concept_desc>
       <concept_significance>500</concept_significance>
       </concept>
   <concept>
       <concept_id>10011007.10011006.10011008.10011009.10011012</concept_id>
       <concept_desc>Software and its engineering~Functional languages</concept_desc>
       <concept_significance>500</concept_significance>
       </concept>
   <concept>
       <concept_id>10003752.10010124.10010125.10010130</concept_id>
       <concept_desc>Theory of computation~Type structures</concept_desc>
       <concept_significance>300</concept_significance>
       </concept>
   <concept>
       <concept_id>10011007.10011006.10011041.10011047</concept_id>
       <concept_desc>Software and its engineering~Source code generation</concept_desc>
       <concept_significance>300</concept_significance>
       </concept>
 </ccs2012>
\end{CCSXML}

\ccsdesc[500]{Software and its engineering~Data types and structures}
\ccsdesc[500]{Software and its engineering~Functional languages}
\ccsdesc[300]{Theory of computation~Type structures}
\ccsdesc[300]{Software and its engineering~Source code generation}

\keywords{data representation, sum types, tagging, boxing,
  recursive definitions, termination}

\newcommand{\PublishedURL}{https://doi.org/\PublishedDOI}
\newcommand{\ExtendedWebsiteURL}{https://arxiv.org/abs/\arxivId}
\newcommand{\ExtendedDocURL}{https://arxiv.org/pdf/\arxivId}

\begin{version}{\Extended}
\titlenote{
    A version of this document was published at POPL'24
    (\url{\PublishedURL}) without appendices.
    The present extended version includes appendices, and is archived
    at \url{\ExtendedWebsiteURL}.
}
\end{version}
\begin{version}{\BNot\Extended}
\titlenote{
    The extended version of this work, with appendices,
    is available at \url{\ExtendedWebsiteURL}.
}
\end{version}

\maketitle

\newcommand{\appendixref}[1]{%
  \Extended{Appendix~\ref{#1}}{%
    an appendix in our \href{\ExtendedDocURL}{extended version}%
  }%
}

\section{Introduction}

\subsection{Sum Types and Constructor Unboxing}

A central construct of ML-family programming languages is
\emph{algebraic datatypes}, in particular \emph{(disjoint) sum types},
also called \emph{variant types}. In OCaml syntax:

\begin{minipage}{0.5\linewidth}
\begin{lstlisting}
type rel_num =
| Pos of nat
| Zero
| Neg of nat
\end{lstlisting}
\end{minipage}
\hfill
\begin{minipage}{0.5\linewidth}
\begin{lstlisting}
let non_negative = function
| Pos(_) -> true
| Zero -> true
| Neg(_) -> false
\end{lstlisting}
\end{minipage}

Values of this sum type are of the form \lstinline{Pos(n)} or
\lstinline{Zero} or \lstinline{Neg(n)}; in particular they start with
a \emph{(data) constructor}, here \lstinline{Pos} or \lstinline{Zero}
or \lstinline{Neg}, followed by \emph{arguments} (zero, one or
several arguments), which is data carried along the constructor. In
mathematical terms this corresponds to a \emph{sum} or
\emph{coproduct} $A + B$ between sets of values, rather than a
\emph{union} $A \cup B$, because one can always tell from which side
of the sum a value is coming from, by pattern-matching.

In mathematics, the coproduct between sets is typically implemented as
a (disjoint) union:
\begin{mathline}
  A + B
  \ \eqdef\ %
  (\{0\} \times A) \uplus (\{1\} \times B)

  \sum\Fam {i \in I} {A_i}
  \ \eqdef\ %
  \biguplus_{i \in I}\, (\{i\} \times A_i)
\end{mathline}
using a cartesian product of the form $\{i\} \times S$ to build
the pairs of a \emph{tag} value $i$ and an element of $S$.

Software implementations of programming languages use the same
approach: the representation of sum types in memory typically includes
not only the data for their arguments, but also a \emph{tag}
representing the constructor; pattern-matching on the value is
implemented by a test on this tag -- often followed by accessing the
arguments of the constructor. For example, the standard OCaml compiler
will represent the value \lstinline{Neg(n)} by a pointer to a block of
two consecutive machine words in heap memory, with the first word
(the block \emph{header}) containing the tag (among other things),
followed by the argument of type \lstinline{nat} in the second
word. We say that the parameter of type \lstinline{nat} is
\emph{boxed} in this representation: it is contained inside another
value, a ``box''.

Boxing induced by data constructors introduces a performance
overhead. In the vast majority of cases it is negligible, as ML-family
language implementations heavily optimize the allocation of heap
blocks and often benefit from good spatial locality. There are still
some performance-critical situations where the overhead of boxing
(compared to carrying just a \lstinline{nat}) is significant.

One common, easy case where boxing can be avoided is for sum types
with a single constructor: \lstinline{type t = Foo of arg}. Values of
this type could be represented exactly as values of type
\lstinline{arg}, as there are no other cases to distinguish than
\lstinline{Foo}. Haskell provides an explicit \lstinline{newtype}
binder for this case. In OCaml, this is expressed by using the
\lstinline{@@unboxed} attribute on the datatype declaration.\\
\begin{minipage}{0.45\linewidth}
\begin{lstlisting}[language=Haskell]
newtype Fd = Fd Int  -- Haskell
\end{lstlisting}
\end{minipage}
\begin{minipage}{0.55\linewidth}
\begin{lstlisting}
type fd = Fd of int [@@unboxed] (* OCaml *)
\end{lstlisting}
\end{minipage}

Note that the programmer could have manipulated values of type
\lstinline{int} directly, instead of defining a single-constructor
type \lstinline{fd} (file descriptor) isomorphic to \lstinline{int};
but often the intent is precisely to define an isomorphic but
incompatible type, to have explicit conversions back and forth in the
program, and avoid mistaking one for the other. Single-constructor
unboxing makes efficient programming more expressive or safe, or
expressive/safe programming more efficient.

\subsection{Constructor Unboxing}

In the present work, we introduce a generalization of
single-constructor unboxing, that we simply call \emph{constructor
  unboxing}. It enables unboxing one or several
constructors of a sum type, as long as disjointedness is
preserved.

Our main example for this selective unboxing of constructors is
a datatype \lstinline{zarith} of (relative) numbers of arbitrary size,
that are represented either by an OCaml machine-word integer
(\lstinline{int}), when they are small enough, or a ``big number'' of
type \lstinline{Gmp.t} implemented by the GMP library~\citep*{gmp}:
\begin{lstlisting}
type zarith = Small of int [@unboxed] | Big of Gmp.t
\end{lstlisting}

The \lstinline{[@unboxed]} attribute requests the \emph{unboxing} of
the data constructor \lstinline{Small}, that is, that its application
be represented by just the identity function at runtime. This request
can only be satisfied if it does not introduce \emph{confusion} in
the datatype representation, that is, two distinct values (at the
source level) that get the same representation. In our \code{zarith}
example, the definition can be accepted with an unboxed constructor,
because the OCaml representation of \lstinline{int} (immediate values)
is always disjoint from the boxed constructor \lstinline{Big of Gmp.t}
(heap blocks). Otherwise we would have rejected this definition
statically.

\begin{lstlisting}
type clash = Int of int [@unboxed] | Also_int of int [@unboxed]
/[Error]/: This declaration is invalid, some [@unboxed] annotations introduce
        overlapping representations.
\end{lstlisting}

While constructor boxing is cheap in general, unboxing the
\code{Small} constructor of the \code{zarith} type does make
a noticeable performance difference, because arithmetic operations in
the fast path are very fast, even with an explicit overflow check, so
the boxing overhead would be important. On a synthetic benchmark,
unboxing the \lstinline{Small} constructor provides a 20\% speedup.

\subsection{Head Shapes}

We propose a simple criterion to statically reject unboxing requests
that would introduce confusion -- several distinct values with the
same runtime representation -- parameterized on two notions:
\begin{itemize}
\item The \emph{head} of a value, an approximation/abstraction of
  its runtime representation.
\item The \emph{head shape} of a type, the (multi)set of heads of
  values of this type.
\end{itemize}

Our static analysis computes the head shape of datatype definitions,
and rejects definitions where the same head occurs several times.

In addition, we require that the head of a value be efficiently
computable at runtime. In presence of an unboxed constructor, the head
shape of its argument type is used to compile pattern-matching. The
generated code may branch at runtime on the head of its scrutinee.

We provide a definition of heads that is specific to the standard
OCaml runtime representation. Other languages would need to use
a different definition, but the static checking algorithm and the
pattern-matching compilation strategy are then fully
language-agnostic.

\subsection{A Halting Problem}

To compute the head shape of types, and therefore to statically check
unboxed constructors for absence of confusion, we need to unfold
datatype abbreviations or definitions. Example:\\
\begin{minipage}{0.5\linewidth}
\begin{lstlisting}
type num = int
and name = Name of string [@unboxed]
\end{lstlisting}
\end{minipage}
\hfill
\begin{minipage}{0.5\linewidth}
\begin{lstlisting}
type id =
  | By_number of num [@unboxed]
  | By_name of name [@unboxed]
\end{lstlisting}
\end{minipage}

To check the final definition of \lstinline{id}, we must determine
that \lstinline{num} is the primitive type \lstinline{int} and that
\lstinline{name} has the same representation as a primitive
\lstinline{string}, through a definition-unfolding process that
corresponds to a form of $\beta$-reduction.

In the general case, type definitions may contain arbitrary
recursion. In particular, unfolding definitions may not terminate for
some definitions; we need to detect this to prevent our static
analysis from looping indefinitely.

\begin{lstlisting}
type 'a id = Id of 'a [@unboxed]
type loop = Loop of loop id [@unboxed]
\end{lstlisting}

This practical problem is in fact exactly the halting problem --
deciding whether terms have a normal form -- for the \emph{pure,
  first-order $\lambda$-calculus with recursion}. We present an algorithm to
detect non-termination on the fly, during normalization. Running the
algorithm on a term either terminates with a normal form, or it stops
after detecting a loop in finite time. We prove that our algorithm is
correct (it rejects all non-terminating programs) and complete
(it accepts all terminating programs). The proof of correctness
requires a sophisticated termination argument.

It turns out that this termination-monitoring algorithm is related to
the approach that the \texttt{cpp} preprocessor uses to avoid
non-termination for recursive macros -- another example where
non-termination must be detected on the fly.

\subsection{Contributions and Outline}

We claim the following contributions:
\begin{enumerate}
\item We specify the feature of ``constructor unboxing''.

\item We implement this feature in an experimental version of the
  OCaml compiler\footnote{We included an anonymized git repository
    with our submission.}, and hope to eventually merge it
  upstream. Section~\ref{sec:scaling} details various design
  considerations that arose when scaling the feature to a full
  programming language.

\item We perform a case study on \texttt{zarith}
  (Section~\ref{sec:zarith}), demonstrating that the feature can
  noticeably improve the performance of safe OCaml code, or noticeably
  improve safety by removing the need for unsafe OCaml code that was
  previously used for this purpose.

\item We provide a static analysis to reject unboxing requests that
  would introduce confusion. Our formal treatment
  (Section~\ref{sec:head-shapes}) is independent from the OCaml
  programming language.

\item As an unplanned side-effect, we propose an \emph{on-the-fly
    termination checking} algorithm for the pure, first-order
  $\lambda$-calculus with recursion
  (Section~\ref{sec:halting-problem}). We prove that this algorithm is
  correct (it rejects all non-terminating programs) and complete
  (it accepts all terminating programs).

\item There are unexpected parallels between our termination checking
  work and an algorithm written in 1986 by Dave
  Prosser~\citep*{prosser-86}, to ensure termination of \code{cpp}
  macro expansions. In Section~\ref{sec:cpp} we compare our algorithm
  to \code{cpp}'s -- for which correctness has, to our knowledge, not
  been proved.
\end{enumerate}

Our Related Work Section~\ref{sec:related-work} has a mix of
discussions on production programming languages, experimental language
designs, and theoretical work on termination of recursive
$\lambda$-calculi. We list unboxing-related features in other
languages and implementations (GHC Haskell, MLton, \Fsharp{},
Scala, Rust). We mention research projects that provide an explicit
description language for data layout (Hobbit, Dargent, Ribbit). Finally, we
discuss Rust niche-filling optimizations in more detail.

\section{Case Study: Big Integers}
\label{sec:zarith}


\subsection{A Primer on OCaml Value Representations}
\label{subsec:ocaml-value-representation}

In the reference implementation of OCaml, sum types have the following
in-memory representation:

\begin{itemize}
\item \emph{constant} constructors (without parameters), such as
  \code{[]} or \code{None}, are represented as \emph{immediate
    values}, exactly like OCaml integers. The immediate values used
  are \code{0} for the first constant constructor in the type,
  \code{1} for the second one, etc.

\item \emph{non-constant} constructors, with one or several
  parameters, are represented by pointers to memory \emph{blocks},
  that start with a header word followed by the representation of each
  parameter. The header word contains some information of interest for
  the GC and runtime system, including a one-byte \emph{tag}: \code{0}
  for the first non-constant constructor, \code{1} for the second one,
  etc., and the \emph{arity} of the constructor -- its number of
  parameters.
\end{itemize}

Some primitive types that are not sum types -- strings, floats, lazy
thunks, etc. -- have special support in the OCaml runtime. They are
also represented as memory blocks with a header word, using a dozen
reserved high tag values that cannot be used by sum constructors. They
also include a tag \texttt{Custom\_tag} (255) for blocks whose
parameters are foreign data words accessed only through the C Foreign
Function Interface (FFI).

OCaml distinguishes immediate values from pointers to blocks
(both word-sized) by reserving the least significant bit for this
purpose: immediate values are encoded in odd words, while pointers are
even words. (In particular, OCaml integers are 63-bits on
64-bit machines.) Pattern-matching can check this
immediate-or-pointer bit first, to distinguish constant from
non-constant constructors, and then switch on a small immediate
value or block tag.

\subsection{Unsafe Zarith}

\newcommand{\Zarith}{\href{https://github.com/ocaml/Zarith}{Zarith}}

The \Zarith{} library~\citep*{zarith} provides an efficient OCaml
implementation of arbitrary-precision integers, on top of the
reference C library \code{Gmp}~\citep*{gmp}.

Some users of arbitrary-precision integers perform a majority of their
computations on very large integers, way larger than the ``small''
integers that fit a machine word. On the other hand, many users
perform computations that rarely, if ever, overflow, but they need the
guarantee that the result will remain correct even in presence of
occasional overflows. For this latter use-case, we want to minimize
the overhead of \Zarith{} compared to using machine-sized integers
directly -- in OCaml, the \code{int} type. We want to ensure that when
operating on ``small'' integers, the operation only performs
machine-size arithmetic and bound checks, without any memory
allocation, nor any call to non-trivial C functions; in other words,
we want a ``fast path'' for small integers.

\Zarith{} uses a type \code{Zarith.t} whose inhabitants are
either machine-sized OCaml integers (type \code{int}) or a ``custom''
value, a pointer to a memory block with tag \texttt{Custom\_tag} and
the \code{gmp} digits as arguments. This type cannot be expressed in
OCaml today, so \Zarith{} has to use the low-level, unsafe compiler
intrinsics to perform unsafe checks and casts, giving up on the type-
and memory-safety usually guaranteed by the OCaml programming
language.\footnote{A previous version of Zarith would even implement
  the small-integer fast path in assembly code on some architectures
  to use native overflow-checking instructions instead of bit-fiddling
  checks in OCaml. But the cost of switching from OCaml to the
  assembly FFI in the fast path in fact made this version slower than
  the OCaml version -- it was also painful to maintain.}

\begin{lstlisting}
type t                     (* int or gmp integer (in a custom block) *)
external is_small_int: t -> bool = "%obj_is_int" (* imm-or-block bit *)
external unsafe_to_int: t -> int = "%identity"        (* unsafe cast *)
external of_int: int -> t = "%identity"               (* unsafe cast *)

external c_add: t -> t -> t = "ml_z_add"          (* slow path, in C *)
let add x y =
  if is_small_int x && is_small_int y then begin (* ``fast path'' addition *)
    let z = unsafe_to_int x + unsafe_to_int y in
    (* Overflow check -- Hacker's Delight, section 2.12 *)
    if (z lxor unsafe_to_int x) land (z lxor unsafe_to_int y) >= 0
    then of_int z else c_add x y
  end else (* ``slow path'' *) c_add x y
\end{lstlisting}

\subsection{Unboxed Zarith}
\label{subsec:unboxed-zarith}

Using our experimental OCaml compiler with constructor unboxing, we can write instead:
\begin{lstlisting}
type custom_gmp_t [@@shape [custom]] (* gmp integer (in custom block) *)
type t = Small of int [@unboxed] | Big of custom_gmp_t [@unboxed]

external c_add: t -> t -> t = "ml_z_add"
let add a b = match a, b with
  | Small x, Small y ->
      let z = x + y in
      (* Overflow check -- Hacker's Delight, section 2.12 *)
      if (z lxor x) land (z lxor y) >= 0
      then Small z else c_add a b
  | _, _ -> c_add a b
\end{lstlisting}

This code is equivalent to the previous version, generates exactly the
same machine code, but does not require any unsafe casts. One still has
to trust the FFI code of \code{c_add} to respect the intended memory
representation, and we trust the annotation \code{shape [custom]} that
claims that the abstract type \code{custom_gmp_t} is only inhabited
(via the FFI) by \texttt{Custom\_tag}-tagged blocks. But the unsafe
boundary has been pushed completely off the fast path; it can
disappear completely in other examples not involving bindings to
C libraries.

On a synthetic microbenchmark, we observed that our new version has
essentially the same performance as the previous unsafe code, and is
20\% more efficient than a boxed version -- using a sum type without
\code{[@unboxed]} annotations.

\paragraph{Case-study conclusion} Unboxing relieves some of the
tension between safety and efficiency in performance-critical
libraries. Users sometimes have to choose between safe, idiomatic sum
types or more efficient encodings that are unsafe and require more
complex code. In some cases, such as \Zarith{}, constructor unboxing
provides a safe, clear and efficient implementation.

\subsection{Other Use-Cases} Let us briefly mention a few other
use-cases for constructor unboxing.

\paragraph{A ropes benchmark} Our original design proposal
\citet*{unboxing-RFC} includes a similar performance experiment on
ropes (trees formed by concatenating many small strings), reporting
a 30\% performance gain on an example workload -- a similar
performance ballpark to our 20\%. The example was implemented using
unsafe features only as unboxing was not implemented at the time; we
can express it as follows:
\begin{lstlisting}
type rope =
  | Leaf of string [@unboxed]
  | Branch of { llen: int; l:rope; r:rope }
\end{lstlisting}

\paragraph{Coq's \code{native_compute} machinery}

Another use-case where constructor unboxing could provide safety is
the representation of Coq values in the \code{native_compute}
implementation of compiled reduction, first introduced in
\citet*{native-compute}. \code{native_compute} is a Coq tactic that
compiles a Coq term into an OCaml term such that evaluating the OCaml
term (by compilation then execution, in the usual call-by-value
OCaml strategy) computes a strong normal form for the original Coq
term. It uses an unsafe representation of values that mixes (unboxed)
functions, sum constructors, and immediate values, and could be
defined as a proper OCaml inductive if constructor unboxing was
available. (The relation with constructor unboxing was
\href{https://github.com/ocaml/RFCs/pull/14#issuecomment-674442570}{pointed
out to us} by Jacques-Henri Jourdan, Guillaume Melquiond and Guillaume
Munch-Maccagnoni.)

\paragraph{A partial sum-type presentation of dynamic values}

This feature could make some forms of dynamic introspection of runtime
values more ergonomic than what is currently exposed in the \code{Obj}
module. We could think of defining a sort of ``universal type'' as
follows:
\begin{lstlisting}
type dyn =
| Immediate of int [@unboxed]
| Block of dyn array [@unboxed]
| Float of float [@unboxed]
| String of string [@unboxed]
| Function of (dyn -> dyn) [@unboxed]
| Custom of custom [@unboxed]
| ...
and custom [@@shape [custom]]

let to_dyn : 'a -> dyn =
  fun x -> (Obj.magic x : dyn)
\end{lstlisting}

This interface cannot cover all needs -- for example it is not
possible to distinguish between \code{Int32}, \code{Int64} by their
tags, they would be lumped together in the \code{Custom} case -- and
we may need to restrict it due to shape approximations required by
portability concerns -- see Section~\ref{subsec:portability}. But it
would still provide a pleasant pattern-matching interface for unsafe
value introspection code that people write today using the \code{Obj}
module directly.\footnote{A Github code search for \code{Obj} usage suggests
\url{https://github.com/rickyvetter/bucklescript/blob/cbc2bd65ce334e1fc83e1c4c5bf1468cfc15e7f9/jscomp/ext/ext_obj.ml\#L25}
for example.}

\paragraph{Non-use-case: magic performance gains in many places} One
should \emph{not} hope that there are plenty of performance-sensitive
OCaml codebases lying around today that would get a noticeable
performance boost by sprinkling a few (or many) unboxing
annotations. In the vast majority of cases, unboxing provides no
noticeable performance improvement. There are two reasons:
\begin{enumerate}
\item Allocating values in the OCaml minor heap is \emph{really
    fast}. In the boxed version of the Zarith benchmark, checking
  against integer overflow is slower than allocating the \code{Small}
  constructor. Most programmers overestimate the performance cost of
  boxing, it makes little difference for most workloads.\footnote{One
    should be careful that the cost of boxing is not only the extra
    allocation, but the reduced memory locality of wider data
    representations. Locality effects are hard to measure accurately
    and in particular are not captured well in
    micro-benchmarks. Still, our prediction remains that unboxing
    makes a very small difference for most use-cases.}

\item In the few cases of performance-sensitive programs where boxing
  would add noticeable overhead, the authors of the program already
  chose a different implementation strategy to avoid this boxing, for
  example using unsafe tricks as Zarith.
\end{enumerate}

The second point is common to most language design for performance:
\emph{existing} performance-sensitive codebases are written with the
existing feature set in mind, and typically do not present low-hanging
fruits for a new performance-oriented feature. The benefits rather
come from giving more, better options (safer, simpler, more idiomatic)
to write performant code in the \emph{future}.

It is also reassuring for users to know that a new idiom made
available is ``zero-cost''. Even in cases where in fact the
non-optimized approach would have perfectly fine performance, there is
a real productivity benefit for users to know that a given change has
zero performance impact. For example, this can avoid the need to write
specific benchmarks to reassure reviewers of a change.

\section{Heads and Head Shapes}
\label{sec:head-shapes}

To formalize constructor unboxing, we use a simple language of types
$\tau$ and datatype definitions $d$. This captures sum types in
typical ML-inspired, typed functional programming languages.

\paragraph{Notation} We write $\Fam {i \in I} {e_i}$ for a family of
objects $e_i$ indexed over $I$. (Placing the domain as a superscript
is reminiscent of the exponential notation $A^I$ for function spaces.)
We often omit the indexing domain $I$, writing just
$\Fam i {e_i}$. Indexing domains $I$, $J$, $K$, etc.~are
meta-variables that we understand as denoting finite, totally ordered
sets -- for example, integer intervals $[0; n]$.

\subsection{Types, Datatype Declarations, Values}
\begin{mathpar}
\begin{array}{rcl}
\mathsf{Types} \ni \tau
& \bnfeq
& \alpha
  \bnfor \tyconstr t {\Fam i {\tau_i}}
  \bnfor \tyconstr {\prim{t}} {\Fam i {\tau_i}}
\\
\mathsf{TyDecls} \ni d
& \bnfeq
& \typedecl
    {\tyconstr t {\Fam i {\alpha_i}}}
    {\Fam j {C_j \of \Fam {k \in K_j} {\tau_{j,k}}}}
    {\Fam l {C^\unboxed_u \of \tau_l}}
\\
\end{array}
\end{mathpar}

A datatype definition introduces a datatype constructor $t$ parameterized
over the family of type variables $\Fam i {\alpha_i}$ as a sum type
made of a (possibly empty) family of boxed constructors
$\Fam j {C_j \of \Fam k {\tau_{j,k}}}$, where each $C_j$ expects
a family of arguments at types $\Fam {k \in K_j} {\tau_{j,k}}$, and
a (possibly empty) family of unboxed constructors
$\Fam l {C^\unboxed_l \of \tau_l}$ each expecting a single argument of type
$\tau_l$.

A type $\tau$ is either a type variable $\alpha$, an instance
$\tyconstr t {\Fam i {\tau_i}}$ of a datatype (the $\Fam i {\tau_i}$
instantiate the datatype parameters $\Fam i {\alpha_i}$), or an
instance $\tyconstr {\prim t} {\Fam i {\tau_i}}$ of some primitive
type constructor $\prim t$, such as integers, floats, functions, tuples,
strings, arrays, custom values, etc.

Closed types (without type variables) in this grammar are inhabited by
values $v$ defined by the following grammar, with a simple typing judgment
$v : \tau$ expressing that $v$ has type $\tau$.
\begin{mathpar}
\begin{array}{rcl}
\mathsf{Values} \ni v
& \bnfeq
& C ~ \Fam {k \in K} {v_k} \bnfor C^\unboxed ~ v \bnfor \prim{v}
\end{array}

\infer
{\typedecl
  {\tyconstr t {\Fam i {\alpha_i}}} {\Fam j {C_j \of \Fam k {\tau_{j,k}}}} \dots
 \\\\
 \Fam k {v_k : \subsFam {\tau_{j,k}} i {\alpha_i \leftarrow \tau'_i}}
}
{C_j ~ {\Fam k {v_k}} : \tyconstr t {\Fam i {\tau'_i}}}
\qquad
\infer
{\typedecl
  {\tyconstr t {\Fam i {\alpha_i}}} \dots {\Fam l {C^\unboxed_l \of \tau_l}}
 \\\\
 v : \subsFam {\tau_l} i {\alpha_i \leftarrow \tau'_i}
}
{C^\unboxed_l ~ v : \tyconstr t {\Fam i {\tau'_i}}}

\infer
{\text{language-specific rules for primitive values at primitive types}}
{\prim{v} : \tyconstr {\prim{t}} {\Fam i{\tau_i}}}
\end{mathpar}

\subsection{Low-Level Representation of Values}

Unboxed constructors intrinsically depend on a notion of low-level data
representation.

We assume given a set $\Data$ of low-level
representations, and a function
\begin{mathline}
  \repr : \mathsf{Value} \times \mathsf{Type} \to \Data
\end{mathline}
that determines the data representation of each value.

We further assume two properties of the $\repr$ function:
\begin{enumerate}
\item Injectivity: if $v_1, v_2 : \tau$ have no unboxed constructors
  and $v_1 \neq v_2$, then $\repr(v_1, \tau) \neq \repr(v_2, \tau)$.
\item Unboxing: for any $C^\unboxed \of \tau'$ in $\tau$ and
  $v : \tau'$, we have
  \begin{mathline}
    \repr(C^\unboxed~v, \tau) = \repr(v, \tau')
  \end{mathline}
\end{enumerate}

Our injectivity assumption merely states that our representation was
correct before the introduction of constructor unboxing. Our static
analysis rejects some unboxed constructor definitions to extend this
property to well-typed values with unboxed constructors.

\subsubsection{For OCaml} In the specific case of OCaml,
writing $\machZ$ for the set of machine integers, we claim that the
representation of values in the reference OCaml implementation can be
modeled as:
\begin{mathpar}
\begin{array}{rcl}
\Data_{\text{OCaml}} \ni w & \bnfeq &
  \Imm~(n \in \machZ)
  \bnfor
  \Block ~ {(t \in \machZ)} ~ (a_0, \dots, a_{n-1}) \\
\mathsf{BlockArgs} \ni a & \bnfeq & (w \in \Data) \bnfor (n \in \machZ) \\
\end{array}
\end{mathpar}

As we mentioned in Section~\ref{subsec:ocaml-value-representation},
the low-level representation of OCaml values is either an
\emph{immediate} value, which we approximate as living in $\machZ$, or
a \emph{block} starting with a header word containing a \emph{tag} in
$\machZ$ followed by several words of block arguments
$a_0 \dots a_{n-1}$. Block arguments can be either valid OCaml values
themselves or arbitrary machine words. Note that this representation
loses some information, for example: immediate values live in
a smaller space with one less bit available, and the tag $t$ of
a block determines whether its arguments must be valid OCaml values
(most tags) or machine words (\code{Custom_tag}, \code{String_tag},
\code{Double_tag}, \code{Double_array_tag}, etc.).

We can now define the $\repr$ function going from well-typed source
values to their low-level representation. In OCaml, constant
constructors (taking no argument) are represented as immediates, while
non-constant constructors are represented as blocks, and the
representation of a constructor in each category depends on its
position (indexed starting at $0$) in the type declaration.
\begin{mathpar}
\begin{array}{rcll}
\repr (C ~ \emptyset, \tau) & \eqdef
  & \Imm~i
  & \text{$C$ is $\tau$'s $i$-th constant constructor}
\\
\repr (C ~ \Fam {k \in K} {v_k}, \tau) & \eqdef
  & \Block~i~\Fam K {\repr(v_k, \tau_k)}
  & \text{$C \of \Fam k {\tau_k}$ is $\tau$'s $i$-th non-constant constructor}
\\
\repr (C^\unboxed ~ v, \tau) & \eqdef & \repr (v, \tau')
  & \text{$C^\unboxed \of \tau'$ is an unboxed constructor of $\tau$} \\
\end{array}
\end{mathpar}

As required, this $\repr$ function is injective on boxed constructors
and erases unboxed constructors.

The representations of some primitive values include:
\begin{mathpar}
\begin{array}{rcl}
\repr (\mathsf{true}, \mathsf{bool}) & \eqdef & \Imm~0 \\
\repr ((v_1, v_2), (\tau_1 \times \tau_2)) & \eqdef & \Block ~ 0 ~ (\repr(v_1, \tau_1), \repr(v_2, \tau_2)) \\
\repr (\text{\code{fun x -> x+y}}, \tau_1 \to \tau_2) & \eqdef & \Block ~ {\mathtt{Closure\_tag}} ~ \dots \\
\repr (\mathtt{3.14}, \mathsf{float}) & \eqdef & \Block ~ \mathtt{Double\_tag} ~ \dots \\
\repr (\mathtt{"Hello"}, \mathsf{string}) & \eqdef & \Block ~ {\mathtt{String\_tag}} ~ \dots \\
\end{array}
\end{mathpar}

\subsection{Heads and Head Shapes}

We assume given a set $\mathsf{Head}$ of value \emph{heads}. The head
$h$ of a value $v$ represents an easily computable abstraction/approximation of the
low-level representation of the value $v$: we assume a function
\begin{mathline}
  \head_{\mathsf{data}} : \Data \to \mathsf{Head}
\end{mathline}
computing the head of a value representation, and define
\begin{mathline}
  \head(v, \tau) \quad\eqdef\quad \head_{\mathsf{data}}(\repr(v, \tau))
\end{mathline}
Note that if two values have different heads, then they are
necessarily distinct.

Our static analysis will run on arbitrary type definitions allowed by
our syntax of type declarations, and reject certain type declarations
that would introduce conflicts, that is, allow distinct values with
the same representation. The cleanest way we found to model this was
to define the head shape of a type expression as a \emph{multiset} of
heads, which may contain duplicate elements.

\begin{notation}[$\MSet(S)$, $M(x)$, $\dbraces \ldots$,
  $\msettoset{M}$, $\max(M_1, M_2)$, $M_1 + M_2$]
  We write $\MSet(S)$ for the set of multisets of elements of $S$,
  $M(x)$ for the number of occurrences of $x$ in the multiset $M$,
  $\dbraces \ldots$ for multiset comprehension, and $\msettoset{M}$
  (in $\Set(S)$) for the set of elements of a multiset $M$
  (in $\MSet(S)$).

  We use two standard union-like operations on multisets: the
  maximum and the sum, defined by:
  \begin{mathline}
    \max(M_1, M_2)(x) \eqdef \max(M_1(x), M_2(x))

    (M_1 + M_2)(x) \eqdef M_1(x) + M_2(x)
  \end{mathline}
\end{notation}

We define the \emph{head shape}
$\headshape_{\mathsf{ClosedTypes}}(\tau)$ of a closed type expression
$\tau$ as the multiset of heads of values of this type. We extend this
notation to constructor declarations instantiated at a closed return type,
which we call ``type components'' $\epsilon$ as they come up in type
declarations.
\begin{mathpar}
  \begin{array}{rcl}
    \mathsf{TyComps} \ni \epsilon & \bnfeq & \tau
                                             \bnfor C \of \Fam i {\tau_i} : \tau
                                             \bnfor C^\unboxed \of \tau' : \tau
    \\
  \end{array}

  \begin{array}{lcl}
    \headshape_{\mathsf{ClosedTypes}} & : & \mathsf{ClosedTypes} \to \MSet(\mathsf{Heads}) \\
    \headshape_{\mathsf{ClosedTypes}}(\tau) & \eqdef & \dbraces{\head(v, \tau) \mid v : \tau} \\
    \\
    \headshape_{\mathsf{ClosedTyComps}}(C \of {\Fam i {\tau_i}} : \tau)
      & \eqdef
      & \dbraces{\head(C ~ \Fam i {v_i}, \tau) \mid \Fam i {v_i : \tau_i}} \\
    \headshape_{\mathsf{ClosedTyComps}}(C^\unboxed \of \tau' : \tau)
      & \eqdef
      & \headshape_\mathsf{ClosedTypes}(\tau')
  \end{array}
\end{mathpar}

Finally, we can extend the notion of head shapes to \emph{open types}
(or type components) containing type variables $\alpha$, by taking the
union of all their closed instances. We write $\tau \uparrow \tau'$ if
$\tau'$ is a closed type or type component) that instantiates the free
type variables of $\tau$.
\begin{mathpar}\label{def:headhsape-open-types}
  \infer
  {\epsilon' ~ \text{closed} \\\\ \epsilon' = \subsFam \epsilon i {\alpha_i \leftarrow \tau_i}}
  {\epsilon \uparrow \epsilon'}

  \begin{array}{rcl}
    \headshape_{\mathsf{TyComps}} & :
      & \mathsf{TyComps} \to \MSet(\mathsf{Heads}) \\
    \headshape_{\mathsf{TyComps}}(\epsilon) & \eqdef
      & \max \{ \headshape_{\mathsf{ClosedTyComps}}(\epsilon') \mid \epsilon \uparrow \epsilon' \} \\
  \end{array}
\end{mathpar}

Note that we take the maximum of all closed instances, not their
sum. In particular, if all the closed instances have head shapes that
are sets (they do not contain any duplicates), then the headshape of
the open type is itself a set. In particular, in absence of unboxed
constructors, $\headshape(\alpha)$ is typically equal to the set
$\mathsf{Heads}$ of all heads seen as a multiset (assuming that each
shape is in the image of at least one well-typed value). If we had
used a sum in our definition above, then $\headshape(\alpha)$ would
have duplicates as soon as two distinct types have heads in
common.

\subsubsection{For OCaml}
\label{subsubsec:for-ocaml}

In the specific case of OCaml, we define the head of an immediate $n \in \machZ$ as just the
pair $(\Imm, n)$, and the head of a block of tag $t \in \machZ$ as the
pair $(\Block, t)$.
\begin{mathpar}
\mathsf{Heads}_{\OCaml} \eqdef \{\Imm, \Block\} \times \machZ

\begin{array}{rcl}
\head_{\mathsf{data},\OCaml}(\Imm~n) & \eqdef & (\Imm, n) \\
\head_{\mathsf{data},\OCaml}(\Block~t~(a_0, \dots, a_{n-1})) & \eqdef & (\Block, t) \\
\end{array}
\end{mathpar}

Note that, while our previous choice of $\Data$ and $\repr$ functions
for OCaml model an existing representation, and are not visible to the
users -- even in presence of unboxed constructors -- the definition of
heads and the function $\head : \mathsf{Value} \to \mathsf{Heads}$ are
new design choices that have the user-visible impact of accepting or
rejecting certain unboxed constructors in datatype declarations.

We could use a finer-grained notion of head (for example we could
include the arity $n$ of a block in its head), which allows to
distinguish more types and thus accept more unboxed type definitions.

Conversely, a coarser-grained notion of head would be more portable to
other implementations. For example, an implementation that would
represent constructors by user-visible name rather than position could
not use our notion of head as is. We discuss this portability question
in Section~\ref{subsec:portability}.

Another reason to choose a coarser-grained notion of head is to have
a simpler model to explain to users, at the cost of rejecting more
declarations; for example, one could restrict unboxing to immediate
types by using a pessimistic $\top$ shape for all types containing
blocks.

Finally, our implementation defines a concrete syntax of head shapes
that denote \emph{sets} of heads and is easy to use in
computations. Elements of this \emph{head shape syntax} are pairs of
approximations, one for immediates and one for blocks. Approximations
are defined as either a finite set of machine words (including in
particular the empty set $\emptyset$) or the wildcard shape $\top$
representing all heads.
\begin{mathline}
\begin{array}{rcl}
\mathsf{HeadShapeStx} \ni H & \eqdef & \mathsf{ImmShape} \times \mathsf{BlockShape} \\
\mathsf{ImmShapes}, \mathsf{BlockShapes} & \eqdef & \{ \top \} \cup \FinSet(\machZ) \\
\end{array}
\end{mathline}
\begin{mathline}
\begin{array}{rcl}
\sem \_ & : & \mathsf{HeadShapeStx} \to \Set(\mathsf{Heads}) \\
\sem {(I, B)} & \eqdef & \{ \Imm ~ i \mid i \in \sem I \} \\
              &        & \uplus\; \{ \Block~t \mid t \in \sem B \} \\
\end{array}
~
\begin{array}{rcl}
\sem \_ & : & \mathsf{ImmShapes} \cup \mathsf{BlockShapes} \to \Set(\machZ) \\
\sem \top & \eqdef & \machZ \\
\sem {S} & \eqdef & S \hfill (S \in \FinSet(\machZ)) \\
\end{array}
\end{mathline}

For example, our shape syntax for OCaml integers
is $(\top, \emptyset)$, our shape syntax for OCaml booleans is
$(\{0,1\}, \emptyset)$, our shape syntax for lists or options (one constant
constructor and one non-constant constructor) is
$(\{0\}, \{0\})$, and our shape syntax for custom blocks is
$(\emptyset, \{\mathtt{Custom\_tag}\})$.

This syntax is a correct abstraction of multisets that happen to be
mere sets. It does not let us express multisets with conflicts. We can
directly implement the non-disjoint union $H_1 \cup H_2$ of two
syntactic shapes, and also implement the disjoint union
$H_1 \uplus H_2$ as a partial operation that returns a syntactic shape
if $H_1, H_2$ are disjoint, and is undefined otherwise -- if the
resulting multiset has duplicates, and cannot be represented as
a syntactic shape.
\begin{mathline}
\begin{array}{rcl}
(I_1, B_1) \cup (I_2, B_2) & \eqdef & (I_1 \cup I_2, B_1 \cup B_2) \\
\top \cup x\ ,\ x \cup \top & \eqdef & \top \\
\end{array}

\begin{array}{rcl}
(I_1, B_1) \uplus (I_2, B_2) & \eqdef & (I_1 \uplus I_2, B_1 \uplus B_2) \\
\end{array}
\end{mathline}

In the general case, the definition $\headshape(\tau)$ of head shapes
for open types may be difficult to compute, as it contains
a quantification over all closed extensions of $\tau$. The OCaml value
representation is very regular, which makes it easy to compute shapes
of type variables, constructor declarations and primitive types. We
can represent them directly in our head shape syntax:
\begin{mathpar}
\begin{array}{rcll}
\headshape_\OCaml(\alpha) & \eqdef & {(\top, \top)} & \\
\headshape_\OCaml(C \of \emptyset) & \eqdef & {(\{i\}, \emptyset)}
  & \text{\small $C$ is the $i$-th constant constructor at its type} \\
\headshape_\OCaml(C \of \Fam k {\tau_k}) & \eqdef & {(\emptyset, \{i\})}
  & I \neq \emptyset,\ \text{\small $C$ is the $i$-th non-constant constructor at its type} \\
\headshape_\OCaml(\tyconstr {\prim t} {\Fam i {\tau_i}}) & \eqdef & 
  & \text{\small the immediates and tags of primitive type constructor $\prim t$} \\
\end{array}
\end{mathpar}

This definition of $\headshape_\OCaml$ on base types and boxed
constructor definitions agrees with the generic definition
$\headshape$, in the sense that
$\headshape_{\mathsf{TyComps}}(\epsilon) = \sem{\headshape_\OCaml(\epsilon)}$
for the OCaml value representation and our choice of heads.

\subsection{Sum Normal Form}
\label{subsec:sum-normal-form}

To compute the head shape of a type expression $\tau$, we must unfold
type definitions and traverse unboxed constructors. This
transformation is of independent interest, we formalize it in this
section.

We define a grammar of \emph{sum normal forms} $S$ that capture the
result of this unfolding process, and a (partial) normalization
judgment $\jnorm \tau S$ that computes the head normal form of
a type.
\begin{mathpar}
  \begin{array}{lcl}
    S & \bnfeq & \emptyset \mid \eta \mid S + S \\
    \eta & \bnfeq & \alpha
                    \mid C \of {\Fam i {\tau_i}}
                    \mid \tyconstr {\prim{t}} {\Fam i {\tau_i}} \\
  \end{array}

  \infer[var]
  { }
  {\jnorm \alpha \alpha}

  \infer[prim]
  { }
  {\jnorm
    {\tyconstr {\prim{t}} {\Fam i {\tau_i}}}
    {\tyconstr {\prim{t}} {\Fam i {\tau_i}}}}

  \infer[constr]
  {\typedecl
    {\tyconstr t {\Fam i {\alpha_i}}}
    {\Fam j {C_j \of {\Fam k {\tau_{j,k}}}}}
    {\Fam l {C^\unboxed_l \of \tau_l}}
   \\
   \Fam l {\jnorm {\subsFam {\tau_l} i {\alpha_i \leftarrow \tau_i}} S_l}
  }
  {\jnorm
    {\tyconstr {t} {\Fam i {\tau_i}}}
    {\sum_j {C_j \of {\Fam k {\subsFam {\tau_{j, k}} i {\alpha_i \leftarrow \tau_i}}}}
      + \sum_l S_l}}
\end{mathpar}

A sum normal form is a multiset of components $\eta$ written as
a formal sum, that are either a boxed constructor, a type variable or
a primitive type constructor. The normalization judgment unfolds
datatype declarations, sums boxed constructors and the normal form of
the unboxed arguments.

In presence of mutually-recursive definitions, some type expressions
may ``loop'' forever: they don't have a sum normal form. Consider for
example:
\begin{mathline}
  \typedecl {\mathsf{loop}} {} {\mathsf{Int}^\unboxed \of \prim{int} \mid \mathsf{Loop}^\unboxed \of \mathsf{loop}}
\end{mathline}
Fortunately, the problem of whether a given type expression $\tau$ has
a sum normal form is in fact decidable. We discuss our decision
procedure in Section~\ref{sec:halting-problem}.

\subsection{Rejecting Conflicts}

Finally, given a type declaration $d$, our static analysis computes
a head shape $\headshape(S)$ of its sum normal form $S$ by summing the
head shape of each component of the sum. Our static analysis accepts
the definition if and only if $\headshape(S)$ does not contain any
duplicates.
\begin{mathpar}
  \infer
  {\jnorm {\tyconstr t {\Fam i {\alpha_i}}} S}
  {\headshape_{\mathsf{decl}}(\typedecl {\tyconstr t {\Fam i {\alpha_i}}} \dots {}) \eqdef \headshape_{\mathsf{snf}}(S)}
\end{mathpar}
\begin{mathpar}
  \begin{array}{lcl}
    \headshape_{\mathsf{snf}}(\emptyset) & \eqdef & \emptyset \\
    \headshape_{\mathsf{snf}}(\eta) & \eqdef & \headshape_{\mathsf{TyComps}}(\eta) \\
    \headshape_{\mathsf{snf}}(S_1 + S_2) & \eqdef & \headshape_{\mathsf{snf}}(S_1) + \headshape_{\mathsf{snf}}(S_2) \\
  \end{array}
\end{mathpar}

The result of this analysis can be easily computed, in the case of
OCaml, by using our head shape syntax: the head shape of $S_1 + S_2$
is conflict-free if
$\headshape_\OCaml(S_1) \uplus \headshape_\OCaml(S_2)$ is defined and
has a conflict otherwise.

\subsection{Pattern-Matching Compilation}

When checking a type declaration with unboxed constructors, we record
for each unboxed constructor $C^\unboxed \of \tau$ the head shape of
its type parameter $\tau$. When compiling pattern-matching clauses
using an unboxed constructor in a pattern, say $C^\unboxed p$, we
implement matching on $C^\unboxed \of \tau$ as a condition that the
head of the scrutinee must belong to the shape of $\tau$. (The details
of how to do this depends of course on the pattern-matching
compilation algorithm of the language.)

We know that this approach is always sound, thanks to the property
that none of the other scrutinees (starting, at the source level, with
a different constructor) may have a head belonging to the head shape
of $\tau$. Note that this property, enforced by our static analysis,
is in fact slightly stronger than the absence of conflicts: not only
must the representation of inhabitants of $\tau$ be distinct from all
the other possible scrutinees, they should furthermore have distinct heads.

The runtime cost of checking the head depends on the language and the
notion of head chosen. For our choice of heads for OCaml, it is
exactly as costly as checking the head constructor of a value, so this
does not add overhead on pattern-matching. A finer-grained notion of
head that would inspect the value ``in depth'' could add a higher
cost -- to balance against the space savings of accepting more
unboxing.

\section{Scaling to a Full Language}
\label{sec:scaling}

In this section, we describe less formally all the ``other issues''
that we had to consider to scale this proposed feature to a full
programming language, namely OCaml.

\subsection{Handling All the Tricky Cases}
\label{subsec:tricky-cases}

We did not encounter any conceptual issue when scaling this approach
to all the primitive types supported by the OCaml runtime. (This is
not too surprising given that our heads are closely modeled on the
existing runtime data representation.) In the interest of
demonstrating the difference between the simple situation of datatypes
with a simple representation and everything else, let us give here an
\emph{exhaustive} list of all the tricky cases.

\begin{enumerate}
\item \code{Double_array_tag} is used to represent nominal records
  whose fields are all \code{float}, and also values of type
  \code{float array}.

  To determine the shape of a record type, we must call the same logic
  that the type-checker uses to the decide the ``unboxed float record''
  criterion and use $(\emptyset, \{\mathtt{Double\_array\_tag}\})$
  instead of $(\emptyset, \{0\})$ in that case.

  For arrays, we use the head shape
  $(\emptyset, \{0, \mathtt{Double\_array\_tag}\})$ in all cases. Note
  that OCaml supports a configuration option to disable the unboxing
  of float array (supporting this representation adds some dynamic
  checks on array operations), but we decided to use the pessimistic
  shape with both tag values independently of the configuration value,
  to avoid having the compiler statically reject some programs only in
  some specific configuration.

\item Values of type \code{ty Lazy.t} have an optimized representation
  where they may be represented by a lazy thunk of tag
  \code{Lazy_tag}, a computed value of tag \code{Forward_tag},
  \emph{or} directly a value of type \code{ty}, under some conditions
  on \code{ty}. The corresponding shape is the maximum of
  $(\emptyset, \{\mathtt{Lazy\_tag}, \mathtt{Forcing\_tag}, \mathtt{Forward\_tag}\})$
  and of the shape of \code{ty}. (OCaml 5 added a third tag
  \code{Forcing_tag} to detect concurrent forcing; it was trivial to
  adapt our analysis.)

  Note that Section~\ref{subsubsec:for-ocaml} defined
  $\headshape_\OCaml{\tyconstr {\prim t} {\Fam i {\tau_i}}}$ as
  depending only on $\prim t$, not the $\tau_i$; here we are handling
  lazy values in a more precise way due to their non-uniform
  representation. (We could also approximate them to the uniform shape
  $\top$.)

\item Function closures may have either tags \code{Closure_tag} or
  \code{Infix_tag} (used for some mutually-recursive functions). The
  pattern-matching code that we generate on sum types with an unboxed
  function type (which is useful for Coq \code{native_compute}) is
  slightly less good than it could be because these two tags are not
  consecutive (247 and 249), so we are slightly tempted to renumber
  the tags in the future.
\item Exceptions, and in general inhabitants of extensible sum types,
  use tag \code{Object_tag}. Object types themselves have an obscure
  and complex data representation due to various optimizations, and we
  just assigned them the top shape $\top$.
\end{enumerate}

\subsection{Portability of our Heads}
\label{subsec:portability}

The language of head shapes makes some aspects of the low-level
representation of values visible to users of the surface
language. This comes at the risk of complexity, but also at the risk
of reducing the portability of the language by setting in stone
certain representation choices, that would rule out other
implementations.

One could think of making constructor-unboxing a ``best effort''
feature to avoid this downside, by simply emitting a warning in the
case where an unboxing annotation would introduce a conflict under the
current implementation, and keeping the constructor boxed. We decided
against this for now, because we believe that advanced performance
features such as constructor unboxing are used when users reason about
the performance of their application, that is, when data
representation is part of their specification for the code they are
writing. In this context, silently ignoring representation requests is
arguably a bug: it breaks the specification the user has in mind.

Instead we are trying to discuss with other implementors of OCaml to
find whether we should make our heads more coarse-grained in some
places, to increase portability without breaking relevant examples of
interest. In particular, the alternative backend \code{js_of_ocaml}
compiles OCaml to JavaScript, and uses native JavaScript numbers for
most OCaml numeric types (\code{int}, \code{float}, \code{nativeint},
\code{int32}). We are planning to quotient the difference between
those types in our language of shape, to improve portability.

\subsection{Abstract Types with Shapes}
\label{subsec:abstract-with-shapes}

Abstract types can readily be given the top shape $\top$. We also
support annotating an abstract type with a shape restriction
\code{[@shape ..]}, which gives a head shape for this type.

For abstract types coming from the interface of modules or functor
parameters, those shape annotations are checked when checking that the
interface conforms to the implementation.

For abstract types used to represent values only populated by the FFI,
these shape annotations have to be trusted, in the same way that the
OCaml FFI trusts foreign functions to respect their type provided on
the OCaml side.

We used this feature in our Zarith example in
Section~\ref{subsec:unboxed-zarith} to allow unboxing the \code{Big}
constructor, whose argument is an abstract type \code{custom_gmp_t} of
GMP numbers implemented through the FFI.

\subsection{Are We Really First-Order?}

Parametrized type definitions fall in the first-order fragment because
OCaml does not support higher-kinded types.

Note that some designs for higher-kinded types in related languages
are restricted to higher-order ``type constructors'' that do not
create $\beta$-redexes, so they do not necessarily have the
expressiveness of the full higher-order $\lambda$-calculus.

On the other hand, the OCaml module system does provide higher-order
abstractions through functors: a type in a functor may depend on
a parametrized type in the functor argument. However, unfolding of
type definitions remain first-order in nature:
\begin{itemize}
\item When we are checking the body of a functor and encounter a type
  that belongs to a module parameter, it is handled as any other type
  declaration.
\item When we encounter a type expression containing a functor application,
  e.g.~\code{Set.Make(Int).t}, the type-checker has access to the signature
  of the functor application \code{Set.Make(Int)} and we check its type \code{t}.
\end{itemize}

Another way to think of the treatment of functor application is that
the OCaml type-checker performed $\beta$-reduction of functor
applications before we compute shapes. In other words, in this work,
we consider the module language as a strongly-normalizing higher-order
subset whose normal forms are first-order.

\subsection{OCaml Features Subsumed by Head Shapes}

The OCaml type-checker currently contains three subsystems that we
believe would be subsumed by our head shape analysis:
\begin{enumerate}
\item It contains an analysis of the ``unboxed form'' of a type
  (due to the presence of unboxing for single-constructor variants and
  single-field records) that corresponds to our notion of sum normal
  form of a type, and would benefit from our normalizing algorithm to
  compute those in presence of recursion.
\item It defines a property called \code{[@@immediate]} for abstract
  types, which claims that the inhabitant of the type are all
  immediate values. (This is used by the runtime to specialize ad-hoc
  polymorphic functions such as comparison and serialization.)
  Computing the head shape subsumes immediateness-checking.
\item To implement unboxing of single-variant GADTs, it must perform
  an intricate static analysis to reject attempts to unbox
  existentials, which would make the \code{float array} optimization
  unsound. (Don't ask.) We believe that this static analysis, detailed
  in~\citet*{mutual-unboxing}, could be subsumed by our head shape
  computation.
\end{enumerate}

We are planning to simplify the compiler implementation by removing
all the existing logic to implement these separate aspects, and
replace them by a shape computation.

\section{Potential Extensions}

We have considered the following aspects, but have not implemented
them. They are not necessary to consider upstreaming a first useful
version of constructor unboxing.

\subsection{Shape Constraints on Type Variables}
\label{subsec:shape-constraints-on-type-variables}

Section~\ref{subsec:abstract-with-shapes} shows how abstract types
can now be annotated with constraints on the head shapes of their
inhabitants. A related feature would be to constrain the type
variables of parametrized types:

\begin{lstlisting}
type ('a [@shape any_block]) block_option = None | Some of 'a [@unboxed]
type ('a [@shape immediate]) imm_option = None of unit | Some of 'a [@unboxed]
\end{lstlisting}

The type \code{'a block_option} is similar to the standard
\code{'a option} datatype, but its type parameter \code{'a} may only
be instantiated with type expressions whose shape is included in the
shape \code{any_block} -- any block tag, but no immediate value. This
restriction allows unboxing the \code{Some} constructor without
conflicting with the \code{None} immediate value.
Conversely, \code{imm_option} may only be instantiated with immediate
types, and its \code{Some} constructor can also be unboxed as we made
\code{None of unit} a block constructor.

Similarly to abstract types with shapes
(Section~\ref{subsec:abstract-with-shapes}), this feature provides
modularity. We can construct large types with specialized
representations by composing together smaller parameterized types
(or functors), with shape assumptions on the boundaries between the
various definitions.

This change is easy conceptually, but requires non-trivial
changes to the OCaml compiler where type variables do
not carry kind information. It goes in the same direction as other
``layout'' changes experimented with by Jane Street, so some of the
implementation work can be shared.

\subsection{Harmless Cycles}

Our algorithm to compute shapes unfolds potentially-recursive type
definitions and monitors termination: it stops when encountering
a cycle in the definition. Currently our prototype rejects all
definition that contain such cycles. But the cycles fall in two
categories: most cycles are ``harmful cycles'' that must be rejected,
but there are ``harmless cycles'' that could be accepted.


\begin{lstlisting}
type harmful = A | Loop of harmful [@unboxed]
type harmless = Loop of harmless [@unboxed]
\end{lstlisting}

Both those examples are rejected by our prototype as their shape
computation detects a cycle. \code{harmful} cannot soundly be
accepted, as there would be a confusion between the values \code{A},
\code{Loop(A)}, \code{Loop(Loop(A))}, etc. On the other hand, allowing
the unboxing of \code{harmless} would not in fact introduce any
confusion as the type would be empty -- without any inhabitant. Said
otherwise, cycles in shape computations can be interpreted as smallest
fixpoints; most of those fixpoints contain conflicts but a few are the
empty set of value.

Accepting harmless cycles should be of medium difficulty. From an
implementation perspective it is not easy to distinguish harmless
cycles and accept them, it is substantially more work than rejecting
all cycles. Besides, there is no point in writing types such as
\code{harmless} in practice -- just write an empty type directly. So
this is naturally left as future work.

There is however one good reason to do more work there, which is
related to data abstraction. Consider the following example:
\begin{lstlisting}
type 'a foo                     type weird = Loop of weird foo [@unboxed]
\end{lstlisting}

This \code{weird} definition is accepted by our shape analysis. For
the abstract type \code{'a foo} we assume the shape $\top$ of any
possible value -- this does not depend on the parameter
\code{'a}. Then \code{weird foo} has the same shape $\top$ and the
definition of \code{weird} is accepted. (Adding any other constructor
to \code{weird} would make it rejected.)

However, we could later learn that the type \code{'a foo} is in fact
defined as \code{type 'a foo = 'a}. If we perform the substitution, we
get a harmless cycle. In other words, rejecting harmless cycles breaks
the substituability property for abstract types. This is a nice
meta-theoretical property, and breaking it may result in surprising
software engineering situations that are problematic in practice.

\subsection{Unboxing by Transformation}
\label{subsec:unboxing-by-transformation}

In our work, unboxed constructors act as the identity on the
representation of their arguments. One could generalize this by
allowing constructor unboxing to be realized by a non-identity
transformation on its arguments -- chosen to be more efficient than
the default constructor representation. Applying non-identity
transformations could avoid conflicts in value representations,
allowing more unboxing requests. Consider for example:
\begin{lstlisting}
type 'a t = A of bool | B of 'a option
\end{lstlisting}

Our work only supports unboxing the constructor \code{A} in this
example. Unboxing \code{B} is not supported:
\begin{itemize}
\item if \code{A} is not unboxed, then we would have a confusion
  between blocks constructed by \code{A} and those coming from the
  \code{Some} constructor of the option.
\item if \code{A} is unboxed, then we would have a confusion between
  immediates corresponding to the \code{false} value (in \code{A}) and the
  \code{None} value (in \code{B}).
\end{itemize}

It would however be possible to unbox the constructor \code{B} if we
accepted to change the representation of the constructor \code{A of
  bool [@unboxed]}. Instead of storing a \code{bool} value directly
(an immediate in $\{0, 1\}$), we could transform the \code{bool} value
to store it as an immediate in $\{1, 2\}$ for example, avoiding
a conflict with the \code{None} value (immediate $0$) unboxed from
\code{B}. Examples of representative approaches follow.

\begin{lstlisting}
(* no shifting, but a different boolean type *)
\end{lstlisting}\vspace{-.5em}
\begin{minipage}{0.5\linewidth}
\begin{lstlisting}
type 'a t1 =
  | A of fake_bool [@unboxed]
  | B of 'a option [@unboxed]
\end{lstlisting}
\end{minipage}
\begin{minipage}{0.5\linewidth}
\begin{lstlisting}
and fake_bool =
  | Fake_false [@tag 1]
  | Fake_true [@tag 2]
\end{lstlisting}
\end{minipage}

The type \code{t1} is in fact not an example of a transformation
associated with an unboxed constructor: instead we assume that it is
possible to specify a non-standard choice of tag at declaration
time -- the imaginary \code{[@tag 2]} attribute. Supporting this would
be an easy change, but it requires using a non-standard boolean type
and thus requires code changes for the user, making it cumbersome or
impractical in many situations.

\begin{minipage}{0.5\linewidth}
\begin{lstlisting}
(* explicit shifting *)
type 'a t2 =
  | A of bool [@unboxed by (add 1)]
  | B of 'a option [@unboxed]
\end{lstlisting}
\end{minipage}
\begin{minipage}{0.5\linewidth}
\begin{lstlisting}
(* fully inferred transformation *)
type 'a t3 =
  | A of bool [@unboxed]
  | B of 'a option [@unboxed]
\end{lstlisting}
\end{minipage}

The type \code{t2} performs a transformation at unboxing time
(adding \code{2} to the value) that is specified by the
user. Pattern-matching code would then have to be careful to undo this
transformation on the fly (by subtracting \code{1}). Note that
\code{(add 1)} is not an arbitrary OCaml term here, it must be part of
a dedicated transformation DSL that we know to invert efficiently.

Another valid choice would be to have the constructor \code{B}
transform its argument in a way that leaves its block values unchanged
but shifts its \code{None} value from the immediate $0$ to an
immediate outside $\{0,1\}$ or to a block of non-zero tag
(with no argument). Note that unboxing \code{B} would come at a higher
runtime cost as the transformation (to apply at construction time and
unapply at matching time) is more complex.

Finally the type \code{t3} assumes a version of constructor unboxing
that implicitly infers such transformations to satisfy the user's
unboxing request. This is not the design approach that we have used in
our OCaml work, but it corresponds to unboxing strategies in some
other programming languages -- see our discussion of Rust's
niche-filling optimizations in our Related Work
Section~\ref{subsec:rust-niche-filling}.

Arbitrary transformations can be supported, as long as we can express
the corresponding abstract transformation on shapes. For OCaml and the
notion of heads that we proposed, a natural space of transformations
are those that would change the head of a value, and leave the rest
unchanged, by:
\begin{enumerate}
\item applying a mapping to its immediate values, and/or turning some
  of them into blocks (constant blocks with fixed tags, or
  non-constant blocks with the transformed immediate as argument)

\item applying a mapping to the tag of its blocks, leaving its arity and
  arguments unchanged
\end{enumerate}

The set of transformations of interest is also constrained by
performance considerations. In particular, turning an immediate into
a non-constant block requires an allocation and memory indirection
(constant blocks can be preallocated), which is precisely what we
wanted to avoid by unboxing the constructor. It may still be
beneficial if this transformation occurs only for some inputs that are
rare in practice, with all other cases unboxed.

Supporting tag choice requests as in \code{t1} should be easy;
user-specified transformations as in \code{t2} would be of medium
difficulty, depending on the expressiveness of the transformations. We
are not planning to work on full transformation inference in the
context of OCaml.

\subsection{Using Unboxing to Describe Existing Representation Tricks}
\label{subsec:unboxing-existing-tricks}

Some subtle data-representation choices of the OCaml compiler and
runtime, mentioned in Section~\ref{subsec:tricky-cases}, could in fact
be presented as unboxing, possibly with further extensions.

\paragraph{Flat float arrays} OCaml arrays use a uniform
representation (using tag \code{0}) except for arrays of elements
represented as floats, which have the tag \code{Double_array_tag} and
a custom representation.  The OCaml runtime (written in C) checks the
array tag on each low-level operation to determine how to access the
array.
OCaml cannot currently express the type of custom arrays of float-represented
values, but our proposed shape annotations
(Section~\ref{subsec:shape-constraints-on-type-variables}) would make
it possible to do so:

\begin{lstlisting}
type ('a [@shape double]) double_array [@@shape double_array]
\end{lstlisting}

\noindent
With constructor unboxing we can then express the array
representation trick in safe OCaml code:

\begin{lstlisting}
type 'a array =
 | Any : 'a generic_array -> 'a array
 | Double : ('d [@shape double]). 'd double_array -> 'd array
\end{lstlisting}

The type \code{'a generic_array} is a type that does not exist in
OCaml today, of uniform arrays with
tag \code{0}.  The constraint \code{'d [@shape double]} indicates that
(in our proposed OCaml extension) matching a value of type \code{'d
array} with the \code{Double} constructor reveals that \code{'d} is
represented as \code{double}.
This definition would suffice for defining array-accessing
functions in pure OCaml, but for array creation the runtime checks
dynamically if its argument is a \code{float}. Implementing
the check in OCaml would require exposing an extra (inelegant and
non-parametric, but safe) primitive:

\begin{lstlisting}
type 'a double_check =
 | IsAny : 'a double_check
 | IsDouble : ('d [@shape double]). 'd double_check
val check_if_double : 'a -> 'a double_check
\end{lstlisting}

\paragraph{Lazy forwarding} The representation of a lazy value may be
a block of tag \code{Lazy_tag}, for a thunk that has not yet been
evaluated to a result, or a block of tag \code{Forward_tag} storing
a result, or sometimes this resulting value directly. The OCaml
runtime sometimes ``shortcuts'' forward blocks when they are moved
around, replacing them by their value directly, with a dynamic
check that this does not introduce an ambiguity -- the value should
not itself have tag \code{Lazy_tag} or \code{Forward_tag}.

Let us try to express this shortcutting trick in OCaml rather than in
the runtime code in C. Let us assume an imaginary \code{[@unboxed unsafe]}
attribute that does not perform any static confusion check for this
constructor -- we intentionally do not provide this in our current
design proposal, which focuses on safe uses. We could then write:

\begin{lstlisting}
type 'a lazy_state =                             and 'a lazy = 'a lazy_state ref
| Thunk of (unit -> 'a) [@tag lazy_tag]
| Forward of 'a         [@tag forward_tag]
| Forward_unboxed of 'a [@unboxed unsafe]

let make_forward (v : 'a) : 'a lazy_state =
  if List.mem (Obj.tag (Obj.repr v)) Obj.[lazy_tag; forward_tag; double_tag]
  then Forward v
  else Forward_unboxed v
\end{lstlisting}

One could even think of an imaginary \code{[@unboxed dynamic]} variant where the compiler is in charge of inserting this dynamic check:

\begin{lstlisting}
type 'a lazy_state =
| Thunk of (unit -> 'a) [@tag lazy_tag]
| Forward of 'a         [@tag forward_tag] [@unboxed dynamic]
\end{lstlisting}

\section{Our Halting Problem}
\label{sec:halting-problem}

In Section~\ref{subsec:sum-normal-form}, we mention that computing the
head shape of a type requires unfolding datatype definitions, and that
this unfolding process may not terminate in presence of
mutually-recursive datatype definitions.

In the present section, we discuss this problem in more detail, and
present a novel algorithm to normalize safely in presence of
recursion: it either returns the sum normal form of a type, or reports
(in a finite amount of time) that the definition loops and no sum
normal form exists.

First, we remark that this problem corresponds to the \emph{halting
  problem} for a specific fragment of the pure $\lambda$-calculus
(just function types, no products, booleans, natural numbers etc.),
namely the first-order fragment with arbitrary recursion. Consider the
following example:

\begin{minipage}{0.5\linewidth}
\begin{lstlisting}
type 'a id = 'a
type name = Name of string [@unboxed]
\end{lstlisting}
\end{minipage}
\begin{minipage}{0.5\linewidth}
\begin{lstlisting}
type handle =
  | By_number of int id [@unboxed]
  | By_name of name [@unboxed]
  | Opaque of string
\end{lstlisting}
\end{minipage}

It can be rephrased as a $\lambda$-term with recursive definitions as follows:
\begin{lstlisting}
let rec id(a) = a
and name = string
and handle = sum (id int) (sum name (box string))
\end{lstlisting}
In this translation, we use a free variable \code{sum} as a binary
operator to separate constructor cases, and a free variable \code{box}
over the translation of type expressions appearing under
a constructor. (Other free variables encode primitive types.) The sum
normal form of any type in the definition environment above can be
read back from a normal form of the translated $\lambda$-terms in
presence of the recursive definitions. (More precisely, we only need
a ``weak'' normal form that does not reduce under \code{box}
applications.)

The algorithm that we present in this section decides the halting
problem for the first-order pure $\lambda$-calculus with recursive
definitions, also called ``order-1 recursive program schemes'' in the
literature, with an arbitrary reduction strategy. This is not
unreasonable, given that the halting problem for this fragment is
already known to be decidable, as demonstrated for example
in~\citet*{first-order-halting-problem} in the first-order case and
in~\citet*{higher-order-halting-problem,lmcs:1567} (for example) in
the more general setting of the pure simply-typed $\lambda$-calculus with
arbitrary recursive definitions!

\subsection{On-The-Fly rather than Global Termination Checking}

The normalization arguments in previous work on recursive program
schemes are \emph{global} in nature; they reason by normalizing all
mutually-recursive definitions at
once~\citep*{first-order-halting-problem}, or at least they compute
a termination bound that depends on the size of the whole
mutually-recursive system~\citep*{higher-order-halting-problem}.

In our setting, the set of mutually-recursive definitions potentially
contains large type definitions in scope, that are either explicitly
mutually-recursive, or depend on each other through recursive
modules. Normalizing datatype definitions without unboxed constructors
is immediate as they are their own normal forms, but we also have to
normalize through OCaml type \emph{abbreviations} which are widely
used. (We have not included type abbreviations in our
Section~\ref{sec:head-shapes} as the abbreviation \code{type
  t = $\ \tau$} can be understood in this context as syntactic sugar
for \code{type t = Abbrev$^\unboxed\ $ of $\ \tau$}.) Expanding all
abbreviations is also known to potentially generate very large
structural types for some use-cases, so the OCaml type checker uses
careful memoization to only expand on-demand during type inference.

Another issue with a global termination analysis is that OCaml type
definitions \emph{change often} due to functor
applications.\footnote{Functor in the ML-family sense of a module
  (possibly carrying type components) parameterized by another module.}
Some type definitions in a functor body rely on abstract (or concrete)
types from the functor argument. When the functor is applied to
a module parameter, we get a new instance of those definitions where
previously abstract type constructors from the formal argument are
concrete, which may even introduce new recursive dependencies. Any
global termination analysis done on all type definitions would thus
have to be partially recomputed on functor applications -- and
delimiting the part of the computations to rerun may not be obvious
in presence of recursive dependencies.

Performing a global termination analysis thus runs the risk of large
computational costs in practice, which is all the more frustrating
that we expect unboxed constructors themselves to be rarely used,
being an advanced feature. It should come at no cost when not used and
at little cost when used sparingly.

Instead, we propose an \emph{on-the-fly} termination checking
algorithm. Without any static precomputation on the set of
mutually-recursive definitions (which may be large and/or change often
during type checking), our algorithm takes a term and monitors its
reduction sequence: it maintains some information on the side that is
updated during reduction, and may ``block'' the reduction if it
detects that it is about to loop forever. We must provide the
following guarantees:
\begin{itemize}
\item Correctness: reduction sequences that are never blocked by the
  termination monitor are always finite; they cannot diverge.
\item Completeness: if a reduction sequence is blocked by the termination monitor,
  then it would have diverged in absence of monitoring.
\end{itemize}

The existence of a termination monitor that is sound and complete
implies decidability of the halting problem for the reduction being
considered. We have not found termination results in the existing
literature whose proofs would suggest this termination-monitoring
approach; to the best of our knowledge, this approach is novel for the
pure first-order $\lambda$-calculus with recursive definitions. (But the
literature on term rewriting systems is vast and our knowledge of it
very partial.)

We consider this as a notable contribution of our work whose interest
is independent of constructor unboxing. Already in the OCaml compiler,
there are other parts of the type checker that need to normalize type
definitions (including unboxed single-constructor datatypes), and rely
on the unprincipled approach of passing a fixed amount of ``fuel'' and
failing with an error once it is exhausted. We plan to rewrite these
computations using on-the-fly normalization checking. We hope that
other language implementors could use this approach to work with
recursive type definitions, and there may be other use-cases thanks to
the generality of the language considered.

Remark: our termination-monitoring approach means that we are only
adding some bookkeeping logic to a head-shape computation that we need
to do anyway, that happens rarely (single-constructor unboxing is
a rarely used, opt-in feature), and whose cost is bounded by a very
small constant in practice, the depth of type definitions that need to
be unfolded to compute the head shape (at most 5 for reasonable
OCaml programs). In particular, we know with certainty that head shape
computations will add no noticeable compile-time overhead to the
compilation process of real-world OCaml programs.

\subsection{Intuition}

\paragraph{Attempt 1: detect repetition of whole terms}

A first idea to prevent non-termination is to perform a simple cycle
detection: block the reduction sequence if we encounter a term that
was already part of the reduction sequence. This approach is obviously
complete: we have found a cycle that can diverge, but it is not correct in
presence of ``non-regular recursion'', that can generate infinitely
many distinct terms. Consider for example (in $\lambda$-calculus syntax):
\begin{lstlisting}
let rec loop(a) = loop(list a) in loop(int)
\end{lstlisting}
In this environment \code{loop(int)} reduces to
\code{loop(list int)}, then \code{loop(list (list int))}, etc.,
without ever repeating the same term or even the same subterm in
reducible position.

\paragraph{Attempt 2: detect repetition of head functions}

The second idea is that, if detecting repeating of whole terms is too
coarse-grained, we should instead track repetitions of the
\emph{head function} of the term, in the usual sense of the topmost
function/rule/constructor in reducible position. This approach would
prevent the infinite reduction sequence for \code{loop int} in the
example above, by blocking at the second redex with the same head
\code{loop}. It is easy to show that it is sound for termination,
given that the number of distinct heads is finite. However, this
approach is incomplete, it blocks reduction sequences that would have
normalized, for example:
\begin{lstlisting}
let rec id(a) = a in id (id int)
\end{lstlisting}
This reduction sequence needs to reduce the function \code{id} twice
before reaching a normal form.

\paragraph{Solution: trace head functions for each subexpression}

Our solution is a refinement of detecting repetition of heads. Instead
of tracking the heads that have been expanded in the whole term, we
trace a \emph{different} set of heads for each subterm, corresponding
to the set of function heads whose expansions were necessary to have
the subterm appear in the term.

In the first example above, we start with the term \code{loop int}
where all subterms are annotated with the empty trace
(no expansion happened). We can write this as \code{[]loop []int}: the
subterms \code{loop int} and \code{int} are both in the empty
trace. The first step of the reduction results in the annotated term
\code{[loop]loop ([loop]list []int)}: the subterm \code{[]int} was
unchanged by the expansion, but the surrounding context %
\code{loop (list $\ \hole$)} \emph{appeared} in the reduction of
\code{loop} in the empty trace, so this part of the new term gets
annotated with \code{[loop]}. At this point, our algorithm blocks the
redex \code{loop (list int)} as it is already annotated with the
function head \code{[loop]}.

In the second example, \code{[]id ([]id []int)} reduces to its
argument \code{[]id []int} unchanged -- this argument was already
present in the term before the expansions, it did not \emph{appear}
during the reduction. \code{[]id []int} can in turn reduce to
\code{[]int} which is an (annotated) normal form.

\subsection{Formalizing our Algorithm}
\label{subsec:first-order-calculus}

We start from a grammar for programs $p$ in the first-order
$\lambda$-calculus with recursive definitions, containing in
particular terms $t$, and introduce a distinct category of
\emph{annotated} terms $\bar t$ whose function-call subterms carry an
expansion trace $l$ (a list of function names without duplicates).
\begin{mathline}
  \begin{array}{lcll}
    p & \bnfeq & \letrecin D t & \text{recursive programs} \\
    D & \bnfeq & \emptyset \bnfor D, f {\Fam i {x_i}} = t & \text{function definitions} \\
    f & & & \text{function name} \\
    x & & & \text{first-order variable} \\
    t & \bnfeq & x \bnfor f {\Fam i {t_i}} & \text{term} \\
  \end{array}
\end{mathline}
\begin{mathline}
  \begin{array}{lcll}
    \bar p & \bnfeq & \letrecin D {\bar t}                    & \text{annotated programs} \\
    \bar t & \bnfeq & x \bnfor \loc {f {\Fam i {\bar t_i}}} l & \text{annotated term} \\
    l & \bnfeq & \emptyset \bnfor l, f                        & \text{(we require $f \notin l$)}\\
    C[\hole] & \bnfeq & \hole \bnfor \loc {f {(\Fam i {\bar t_i}, C[\hole], \Fam j {\bar t_j}})} l
      & \text{annotated reduction context} \\
  \end{array}
\end{mathline}

We extend the usual notion of $\beta$-reduction $t \rewto t'$ to
annotated terms. Expanding a function call
$f {\Fam i {\bar t_i}}$ is only possible if its trace does not already
contain $f$ -- otherwise this term is stuck, we call it
a \emph{blocked redex}. The arguments $\Fam i {\bar t_i}$ are
annotated terms, but the body $t'$ of the definition of the function
$f$ is a non-annotated term: the body and its subterms \emph{appear}
in the reduction sequence at this point, and we use an
\emph{annotating} substitution $\locsubs {t'} \sigma l$ to annotate
them, where $\sigma$ is a substitution from variables to annotated
terms and $l$ is the trace to use to annotate new subterms.

\begin{definition}[$\bar p \rewto \bar p'$, $\locsubs t \sigma l$]
\begin{mathpar}
\infer
{(f {\Fam i {x_i}} = t') \in D \\ f \notin l}
{
 \letrecin D C[\loc {f {\Fam i {\bar t_i}}} l]
  \rewto
  \letrecin D C[\locsubs {t'} {\Fam i {x_i \leftarrow \bar t_i}} {l,f}]
}

\begin{array}{rcl}
\locsubs x \sigma l
& =
& \sigma(x)
\\
\locsubs {f {\Fam i {t_i}}} \sigma l
& =
& \loc {f {\Fam i {\locsubs {t_i} \sigma l}}} l
\end{array}
\end{mathpar}
\end{definition}

Note that the annotating substitution $\locsubs t \emptyset l$
annotates each function call of an unannotated term $t$ with the trace
$l$.

\begin{notation}[$\floor {\bar t}$]
  Let us write $\floor {\bar t}$ (or $\floor {\bar p}$) for the
  unannotated term (or program) obtained from erasing all annotations
  from $\bar t$ (or $\bar p$).
\end{notation}

We are only interested in annotated terms that were obtained starting
from an initial annotated term with all traces empty. Outside this
subset of annotated terms there are terms with weird/impossible
annotations (for example annotation of functions that do not exist in
the recursive environments) that we sometimes want to rule out from
our statements.

\begin{definition}[Reachable annotated term]\label{def:reachable}
  An annotated term $\bar t$ is \emph{reachable} if it occurs as the
  subterm of a term in a reduction sequence starting from an initial program
  of the form $\letrecin D {\locsubs t \emptyset \emptyset}$.
\end{definition}

\subsection{A Sketch of Correctness (Termination)}

Due to space limitations, we moved our proofs of correctness and
completeness for this annotated reduction algorithm to
\appendixref{appendix:proofs}. It proves the following results.

\begin{lemma*}
  If $\bar t$ is reachable and reduces, in the annotated system, to
  a $\beta$-normal form $\bar v$, then $\floor {\bar v}$ is the
  $\beta$-normal form of $\floor {\bar t}$.
\end{lemma*}

\begin{theorem*}[Correctness]
  Annotated reduction is strongly normalizing: it either reduces to
  a $\beta$-normal form or reduces to a blocked redex in a finite
  number of reduction steps.
\end{theorem*}

\begin{theorem*}[Completeness]
  If an annotated program $\bar t$ contains a blocked redex, then its
  unannotated erasure $\floor {\bar t}$ admits an infinite reduction
  sequence.
\end{theorem*}

In this section we will merely sketch our termination argument.

The general approach is to use a termination measure. We first define
a measure, that is, a function from our terms into a well-ordered
set -- a set with an order relation, such that there do not exist
infinite strictly-decreasing sequences. Then we prove that the our
annotated reduction strictly decreases the measure of terms.

We can first define a measure on our traces $l$. There is a finite set
of type declarations in our system, and a trace can contain each type
constructor at most one, so there is a largest possible trace $L$ that
contains all type constructors. An application annotated with this
trace cannot be reduced. Any trace $l$ can then be measured by the
length difference $\mathsf{length}(L) - \mathsf{length}(l)$,
a natural number.

Then the question is how to extend this measure on traces into
a measure on annotated terms. One could think of using the
(measure of) the trace of the head application, but this does not
decrease during reduction if a subterm contains a larger trace and
ends up in head position. (Note: we state correctness for any
reduction strategy on annotated terms, not necessarily head reduction.)

A second idea is to measure a term $\bar t$ by the set of
(measures of) the traces that occur inside it -- technically the
multiset of traces, using the
\href{https://en.wikipedia.org/wiki/Dershowitz-Manna_ordering}{multiset
  ordering}. But this is not decreasing either: when we reduce an
application $\loc {f {\Fam i {\bar t_i}}} l$, we may duplicate its
arguments $t_i$ arbitrarily many times, and those may contain traces
that are strictly larger than $l$, resulting in a larger overall
measure for the reduced term.

The trick to make the proof work, which was suggested to us by Irène
Waldspurger, is to use multisets of multisets: we measure each subterm
in our annotated term by the \emph{path} from this subterm to the root
of the term, seen as a multiset of traces of applications. And then we
measure our term by the multiset of measures of its subterms.

For example, for the term
$\loc {f (\loc g {l_g}, \loc h {l_h})} {l_f}$, the measure of the root
subterm is $\dbraces {l_f}$ (we use double braces for multisets rather
than sets), the measure of the $g$ subterm is $\dbraces {l_f, l_g}$,
and the measure of the $h$ subterm is $\dbraces {l_f, l_h}$, so the
measure of the whole term is
$\dbraces {\dbraces {l_f}, \dbraces{l_f, l_g}, \dbraces{l_f, l_h}}$.

Consider now a subterm $\bar u$ of an argument of the redex
$\loc {f {\Fam i {\bar t_i}}} l$. The measure of its head application
may be larger than the measure of $l$, but its measure as a subterm is
the \emph{path} to the root, which contains $l$. When the application of
$f$ gets replaced by new subterms of a strictly smaller trace $f, l$,
then the path from $\bar u$ to the root will change, $f$ gets replaced
by these new nodes in the path, so the path measure of $\bar u$ decreases
strictly. This works even if this subterm $\bar u$ gets duplicated by
expansion: we get several copies, but at a strictly smaller measure,
so we are still decreasing for the multiset measure.

For example, if $\loc {f (\loc g {l_g}, \loc h {l_h})} {l_f}$ reduces
into $\loc {f' (\loc g {l_g}, \loc g {l_g})} {(l_f, f)}$, the subterm
$h$ has been erased and the subterm $g$ has been duplicated; its
measure is now $\dbraces {(l_f, f), l_g}$, which is strictly smaller
than its previous measure $\dbraces {l_f, l_g}$. The measure of the whole term changed from
$\dbraces {\dbraces {l_f}, \dbraces{l_f, l_g}, \dbraces{l_f, l_h}}$
to the strictly smaller
$\dbraces {\dbraces {(l_f, f)}, \dbraces{(l_f, f), l_g}, \dbraces{(l_f, f), l_g}}$.

\section{On \texttt{cpp}}
\label{sec:cpp}

Our on-the-fly termination checking algorithm is related to the
macro expansion algorithm of the \code{cpp} preprocessor for C, as
presented in the C11 standard. Quoting the standard:

\begin{quotation}
6.10.3.4 (2) If the name of the macro being replaced is found during this scan of the replacement list (not including the rest of the source file’s preprocessing tokens), it is not replaced. Furthermore, if any nested replacements encounter the name of the macro being replaced, it is not replaced. These nonreplaced macro name preprocessing tokens are no longer available for further replacement even if they are later (re)examined in contexts in which that macro name preprocessing token would otherwise have been replaced.
\end{quotation}

In this context, the ``replacement list'' denotes the body of
a function-like macro definition -- here we only consider function-like
macros, \code{#define FOO(..) ...} rather than \code{#define FOO
  ...}. The standard explains that a macro must not be expanded from
its own definition or from a ``nested replacement''. This corresponds
to our idea of blocking redexes whose function name occur in their own
trace. The idea that this non-replacement information remains active
``in later contexts'' corresponds to the idea of carrying annotations
around in subterms as the computation proceeds.

The phrasing of the standard is not very clear! In the 1980s, Dave
Prosser worked on a strategy to ensure that macro replacement always
terminates by disallowing dangerous cyclic/recursive macros, and wrote
a careful algorithm to allow as much replacement as would be possible
without -- hopefully -- endangering non-termination. This algorithm
was published as pseudo-code in a technical note
\citet*{prosser-86}. The C89 standard committee then translated Dave
Prosser's pseudo-code into the obscure prose that became the standard
text.

We know about David Prosser's pseudo-code today thanks to Diomidis
Spinellis who spent ``five years trying to implement a fully
conforming C preprocessor''; Spinellis wrote an annotated version of
Prosser's pseudo-code that explains the code: \citet*{spinellis-2008}.

Prosser's algorithm is strongly related to our termination-monitoring
algorithm -- it was not an inspiration as we were unfortunately
unaware of the connection when designing our own. We are not aware of
any proof of correctness for Prosser's algorithm -- that it does
guarantee termination.

In this section, we will compare the two algorithms. We will only
consider the ``core'' fragment of \code{cpp} macros, consisting of the
function-like macros that use their formal parameters directly -- no
conditionals, no use of stringization or concatenation, etc.

\subsection{First-Order and Closed-Higher-Order Function Macros}

The \code{cpp} preprocessor performs $\beta$-reduction, but also
parsing: it starts from a linear sequence of tokens instead of an
abstract syntax tree. It is difficult to reason at the level of
sequences of tokens, in particular about macros that generate
unbalanced sequences of parentheses; we will not attempt to do so
here. Let us only consider (core) macros whose terms are
well-parenthesized. What is their expressivity in terms of
a $\lambda$-calculus?

We define the \emph{first-order fragment} of core macros as the
fragment where all bound occurrences of a macro name \code{foo} are
syntactically an application \code{foo(...)} -- \code{foo} is
immediately followed by well-bracketed parentheses.

This is the fragment that the vast majority of C programmers use. But
it is possible to write (well-parenthesized) macros outside that
fragment, whose reduction behavior is less clear. We will mention two
examples, which we call the \code{NIL} example and the \code{a(a)}
example:

\begin{minipage}{0.6\linewidth}
\begin{lstlisting}[language=C]
#define NIL(xxx) xxx
#define G0(arg) NIL(G1)(arg)
#define G1(arg) NIL(arg)

G0(42) // $\rewto$ NIL(G1)(42) $\rewto$ G1(42)
        // $\rewto$ NIL(42) $\rewto$ 42
\end{lstlisting}
\end{minipage}
\begin{minipage}{0.4\linewidth}
\begin{lstlisting}[language=C]
#define a(x) b
#define b(x) x

a(a)(a)(a) // $\rewto$ b(a)(a)
            // $\rewto$ a(a) $\rewto$ b
\end{lstlisting}
\end{minipage}

In the \code{NIL} example, \code{G1} is used in non-applied position
in the definition of \code{G0}. In the \code{a(a)} example, the
second occurrence of \code{a} in \code{a(a)} is not in application
position.

We call this general case a \emph{closed-higher-order} language: it is
\emph{higher-order} in the sense that functions (macro names) can be
passed as parameters and returned as results, but those functions
remain \emph{closed}: functions cannot be declared locally and capture
lexical variables. This language also corresponds to simply-typed
\emph{supercombinators}~\citep*{turner-79,hughes-82}. (It is also
similar to the use of function \emph{pointers} in C, but we use macro
names instead of runtime addresses.)

We can consider typed or untyped versions of this closed-higher-order
language. The simply-typed version is a fragment of the simply-typed
$\lambda$-calculus with recursion, so its halting problem is
decidable. The untyped version is Turing-complete, just like most
extensions with more constructors or richer type systems, so their
halting problem is undecidable. C macro authors probably do not
consider typing their macros, so they work in the untyped version, but
an extension of ML with higher-kinded type definitions would
correspond to the typed version. Our algorithm does not depend on
types, so it can work on either version.

In the rest of this section we will detail the following claims:

\vspace{-.5em}
\begin{fact} Our algorithm gives the same result as Dave Prosser's
  on the first-order macro fragment.
\end{fact}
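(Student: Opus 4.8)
The plan is to exhibit a step-for-step simulation between our annotated reduction and Prosser's hide-set--based expansion, restricted to the first-order fragment. Recall that Prosser's algorithm (as reconstructed by Spinellis) attaches to every token a \emph{hide set} of macro names: expanding a function-like macro \code{f} whose name token carries hide set $HS$ and whose matching closing parenthesis carries hide set $HS'$ is permitted only when $f \notin HS$, and the tokens coming from the macro body are then retagged with $(HS \cap HS') \cup \{f\}$ by \code{hsadd}, while the substituted argument tokens retain their own hide sets. The correspondence I would set up maps an annotated function call $\loc {f {\Fam i {\bar t_i}}} l$ to the token sequence \code{f(...)} in which the \code{f} token (and, crucially, its matching closing parenthesis) carries as hide set exactly the set of names occurring in the trace $l$. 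Since our traces are duplicate-free lists and our blocking test $f \in l$ depends only on the underlying set, this passage from lists to sets is faithful.

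First I would establish the key structural invariant of the first-order fragment: in every token sequence reachable by Prosser's algorithm, a macro-name token and its matching closing parenthesis always carry the \emph{same} hide set. This holds initially (all hide sets empty) and is preserved by expansion, because in the first-order fragment every bound macro occurrence \code{g(...)} appears fully applied \emph{inside a single replacement list}, so \code{g} and its closing parenthesis are emitted together by the same \code{hsadd} call and receive identical tags (the inductive case for applications sitting inside substituted arguments follows from the induction hypothesis). This invariant is precisely what collapses Prosser's intersection: $(HS \cap HS') \cup \{f\}$ becomes simply $HS \cup \{f\}$, which matches the trace $l, f$ that our annotating substitution $\locsubs {t'} {\dots} {l,f}$ places on the body, whose underlying set is the set of $l$ together with $\{f\}$.

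With the invariant in hand, the simulation itself is routine. Corresponding initial states (empty traces against empty hide sets) are related; our reduction contexts $C[\hole]$ correspond to cpp's rescanning of the surrounding token stream; and each admissible step preserves the correspondence, since the body tokens are retagged by $HS \cup \{f\}$ on both sides while the arguments keep their tags/annotations. The blocking conditions then coincide pointwise: the redex $\loc {f {\Fam i {\bar t_i}}} l$ is blocked in our system iff $f$ lies in the set of $l$, iff $f$ lies in the hide set of the \code{f} token, iff Prosser refuses the expansion. Because our Correctness and Completeness results make the accept/reject outcome and the resulting normal form independent of the reduction strategy, I may fix our algorithm to follow cpp's outside-in, argument-pre-expanding strategy and read off that the two algorithms block in exactly the same situations and otherwise produce the same normal form.

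The main obstacle will be modelling Prosser's algorithm faithfully enough to trust the step-for-step matching: in particular, reconciling cpp's argument \emph{pre-expansion} and rescanning with our notion of reduction context, and discharging the shared-hide-set invariant carefully enough that the degeneration of the intersection is genuinely justified rather than assumed. The higher-order examples (\code{NIL}, \code{a(a)}) are exactly the cases where this invariant fails and the intersection becomes non-trivial, so the delicate part is keeping the argument tightly scoped to the first-order fragment and making explicit at each step where first-orderness is used.
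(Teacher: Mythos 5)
Your overall strategy---map traces to hide sets, use the first-order invariant that a macro-name token and its matching closing parenthesis always carry the same hide set to collapse Prosser's intersection $(HS \cap HS') \cup \{f\}$ to $HS \cup \{f\}$, and then match the blocking conditions---is the same route the paper takes, and that invariant is exactly the paper's justification for ignoring the intersection in the first-order case. But your simulation rests on a factual error about Prosser's algorithm: you assert twice that ``the substituted argument tokens retain their own hide sets.'' They do not. In \code{subst}, the expanded actual parameters are appended to the output accumulator \code{OS}, and \code{hsadd(HS, OS)} is then applied to the \emph{entire} accumulator, so the argument tokens also get $HS \cup \{f\}$ added to their hide sets. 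On the authors' side, $\locsubs {t'} {\Fam i {x_i \leftarrow \bar t_i}} {l,f}$ annotates only the nodes coming from the body $t'$ and leaves the arguments' traces untouched. So your invariant ``hide set of a token $=$ set of names in the trace of the corresponding node'' is destroyed for argument positions after a single expansion, and the step-for-step matching as you state it fails.

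The paper closes this gap with an extra observation you would need: because Prosser pre-expands arguments (\code{expand(AP[i])}) before substituting, the argument tokens are already in normal form when their hide sets get enlarged, so the extra entries can never cause a subsequent blocked redex or enable/disable any further expansion; the two systems therefore agree \emph{after erasure} even though the annotations genuinely differ. To repair your proof you must weaken the bisimulation relation to allow hide sets on already-normal subterms to be supersets of the corresponding traces, and argue that such surplus entries are inert. Separately, your appeal to ``Correctness and Completeness'' to get strategy-independence of the accept/reject outcome is thinner than it looks: the paper's completeness theorem is only the weak form (a blocked redex implies the erasure diverges), and strong completeness is left as a conjecture, so equating the outcome of an arbitrary-strategy reduction with that of cpp's call-by-value-like strategy deserves an explicit argument rather than a citation of those theorems. (The paper itself is brief on this point, asserting only that both ``eventually compute the same normal forms.'')
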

\vspace{-1.5em}
\begin{fact} Our algorithm extends to the closed-higher-order
  macro fragment, and remains correct.
\end{fact}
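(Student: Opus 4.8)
The plan is to prove Fact~2 by replaying the first-order termination argument of the previous section, once the higher-order calculus is set up in the right way. First I would extend the syntax so that the head of an application may be a variable as well as a function name, and so that applications may be partial or over-applied (as in the \texttt{NIL} and \texttt{a(a)} examples). The decisive modelling choice is \emph{where the expansion trace lives}: I attach it to each \emph{application} node rather than to a bare function-name occurrence, I block a redex whose head is a function $f$ exactly when $f$ already occurs in the trace of the application being reduced, and when such a redex fires I stamp every application appearing in $f$'s body with the strictly larger trace $l,f$ while the actual arguments---which may now themselves be function names being passed around---keep whatever traces they already carried. This is the faithful generalisation of the first-order rule, in which ``the call'' and ``the application'' coincide.

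I would then stress that this choice is the crux, because the tempting alternative---stamping the head occurrence $f$ and blocking on $f$'s own trace---is unsound. The untyped term $\Omega$, written $R(R)$ with $R(g)=g(g)$, reduces to $R(R)$ with both occurrences of $R$ retaining the empty trace, so under a head-name rule it would diverge without ever being blocked. With the trace on the application, reducing $R(R)$ stamps the body application $g\,g$ with $\{R\}$, so after substituting the argument $R$ the residual redex sits at an application of trace $\{R\}$ and is blocked immediately. Notably the termination measure below uses no typing at all, so \emph{correctness} holds for the whole closed-higher-order fragment, typed or not; the monitor simply over-approximates divergence. It is \emph{completeness} that cannot survive on the Turing-complete untyped fragment (a correct-and-complete monitor would decide its halting problem), so I would only seek completeness for the decidable simply-typed subfragment---but Fact~2 asserts only correctness.

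For correctness itself the measure transports essentially verbatim. I would reuse the multiset-of-multisets measure: each subterm is measured by the multiset of traces of its ancestor application nodes (its \emph{path} to the root), the whole term by the multiset of these path measures, ordered by the nested Dershowitz--Manna ordering built on the well-founded trace measure $\mathrm{length}(L)-\mathrm{length}(l)$. The decrease step is the first-order one: reducing an application of trace $l$ deletes that application and inserts body applications of the strictly larger trace $l,f$, so on the path of every surviving argument one occurrence of $l$ is replaced by finitely many strictly smaller occurrences of $l,f$, a strict multiset decrease (the higher-arity case replacing the innermost consumed applications analogously).

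The hard part, and the only place where the higher-order setting genuinely differs, will be the two new structural phenomena. First, an over-applied redex leaves trailing arguments that are re-parented onto the result; here I must check that the application nodes carrying those arguments \emph{persist} with their original traces, so that their paths do not grow. Second, a projection-like function (body a bare parameter, such as \texttt{NIL}) promotes a function-valued argument into head position; I must verify that, because the governing trace is that of the \emph{consumed application} and not of the migrated argument, promotion only \emph{removes} application nodes from the affected paths and never lengthens a trace---this is exactly the configuration that makes the naive head-name variant increase the measure. Once these two cases are discharged, the Dershowitz--Manna argument closes as before, yielding strong normalisation (Correctness); the read-back lemma then transports by routine bookkeeping over erasure, and the first-order calculus is recovered as the special case in which every bound function name is fully applied, so Fact~2 properly subsumes the first-order Correctness theorem.
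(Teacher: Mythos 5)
Your proposal is correct and matches the paper's own argument in its essentials: the decisive choice of attaching the trace to the application node rather than to the function-name occurrence, the reuse of the multiset-of-multisets (Dershowitz--Manna) path measure, and the case analysis on what the reduct looks like (a copied argument for projection-like bodies, a leaf for a returned function name, new application nodes stamped with the strictly smaller-measured trace $l,f$ otherwise) are exactly how the paper proceeds, the only cosmetic difference being that the paper packages the decrease argument as an instance of its generic ``measured tree expansion'' lemma with an explicit $\mathsf{app}$ node constructor, while you redo the measure directly on terms. Your two ``new structural phenomena'' both dissolve in that formulation --- over-application is just a nested application node sitting above the redex, hence untouched --- so nothing further needs to be discharged.
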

\vspace{-1.5em}
\begin{fact} Neither our algorithm nor Prosser's are complete on the
  closed-higher-order macro fragment.
\end{fact}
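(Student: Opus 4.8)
The plan is to exhibit a single closed-higher-order program that terminates in the pure calculus yet is blocked by \emph{both} monitors; this refutes completeness (the statement that a blocked redex forces the erasure to diverge) for each algorithm at once. The example I have in mind is the self-returning macro \texttt{\#define D(x) D}, written in our notation as $\letrecin{D(x) = D}{(D\,\prim{int})\,\prim{int}}$. This is genuinely outside the first-order fragment, since the bound occurrence of $D$ in its own body sits in non-application position, and the top-level term applies the \emph{result} of $D\,\prim{int}$; it is nonetheless closed-higher-order, as $D$ is a closed function passed around as a value.

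First I would check that the erasure terminates: $(D\,\prim{int})\,\prim{int} \rewto D\,\prim{int} \rewto D$, and the bare name $D$ is a value, so the erasure is strongly normalizing and admits no infinite reduction sequence. Next I would run the annotated reduction of Fact 2. Reducing the inner redex $\loc{D\,\prim{int}}{\emptyset}$ expands $D$ (legal, since $D \notin \emptyset$) and, because the body is the literal function occurrence $D$, the \emph{annotating substitution} tags this fresh occurrence with the trace $(\emptyset, D)$, while the argument $\prim{int}$ is discarded. The resulting reachable term $\loc{D}{(\emptyset,D)}\ \prim{int}$ then contains a blocked redex: applying $\loc{D}{(\emptyset,D)}$ is forbidden because $D$ already occurs in its trace. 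Thus the monitor blocks a reachable term whose erasure normalizes, contradicting completeness for our algorithm.

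For Prosser's algorithm I would argue directly on \texttt{cpp}'s hide-set (``blue paint'') discipline, since no correctness statement is available to appeal to. On input \texttt{D(42)(42)}, the first invocation \texttt{D(42)} rewrites to the single token \texttt{D}, which --- being the name of the macro under replacement and occurring in its own replacement list --- is marked non-replaceable for \texttt{D} by clause 6.10.3.4(2). On rescan this painted \texttt{D} is followed by \texttt{(42)} but is \emph{not} expanded, so the preprocessor halts at \texttt{D(42)} rather than reaching the pure normal form \texttt{D}. The painted, unexpanded invocation is exactly the counterpart of our blocked redex, so \texttt{cpp}'s expansion is likewise incomplete on this example.

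I expect the main obstacle to be making the higher-order annotated calculus precise enough to justify the trace bookkeeping, since Fact 2 only sketches the extension: I must pin down how application nodes and the traces on returned function values interact, and confirm that a function occurrence produced inside a body inherits the expanding function's name in its trace --- this over-approximation of recursion is precisely what causes the spurious block. A secondary, lighter obstacle is aligning the informal paint semantics of \texttt{cpp} with the notion of a blocked redex; I would handle this by reasoning on the expansion trace and hide-set directly, rather than through any claimed invariant of Prosser's code. Finally, I would remark that the phenomenon is robust: any definition of the shape $h(x) = h$, or a mutually recursive analogue, that returns a re-applicable value produces the same mismatch, so the incompleteness is not an artifact of this particular term.
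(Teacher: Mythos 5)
You have located the right difficulty --- ``how application nodes and the traces on returned function values interact'' --- and then resolved it the wrong way, so your example refutes completeness only for Prosser's algorithm, not for the paper's. In the closed-higher-order extension (Section~\ref{subsec:closed-higher-order-calculus}) the trace is attached to the \emph{application node}, not to the function symbol; the paper says explicitly that ``function symbols do not carry an annotation,'' and the side condition $f \notin l$ of the reduction rule reads the trace $l$ of the application node whose head position has been reduced to $f$. On your term \code{D(42)(42)} with $D(x) = D$, expanding the inner call returns the bare name $D$ --- the annotating substitution has no application node to tag --- and the \emph{outer} application node keeps the empty trace it was born with. The second expansion is therefore tested against $\emptyset$, not against $(\emptyset, D)$, and succeeds: the annotated reduction reaches the normal form $D$ without ever blocking. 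The paper's own trace for \code{a(a)(a)(a)} exhibits exactly this phenomenon, with \code{a} expanded twice because the surviving application nodes retain their original empty traces; the same objection defeats your closing robustness claim about the family $h(x) = h$. (Your \code{cpp} half is correct: Prosser's blocking test reads the hide set of the macro-name \emph{token}, which after the first expansion is $\{D\}$, so the rescan stops at \code{D(42)}. What your example actually exhibits is a disagreement between the two algorithms outside the first-order fragment --- interesting, but not what the Fact asserts.)

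The paper's counterexample has to work harder for precisely this reason: with $f(p,q) = p(f(q,q))$, the term \code{f(id,stop)} normalizes to \code{done}, but the application node \code{f(q,q)} is created \emph{inside} the body of $f$ and so genuinely carries $f$ in its trace; reaching the normal form requires a second, nested expansion of $f$ that only a parameter placed in head position can trigger, and both algorithms block at \code{f(stop,stop)}. Note also that the paper's blocked term is only weakly normalizing (it also has an infinite reduction), so it refutes ``accepts all terminating programs'' without refuting the weaker ``blocked implies divergent''; your candidate is strongly normalizing and would have refuted both at once, a strictly stronger conclusion than the paper draws, which should itself have prompted a second look at the trace bookkeeping.
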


\subsection{The C and C++ Standard Bodies on Closed-Higher-Order Macros}

Those two examples, \code{NIL} and \code{a(a)}, come from the Defect
Reports 017 of the C and C++ standard committee~\citep*{dr017}. Since
the C89 standard was published, programmers have asked for
clarifications about the replacement behavior that would dictate how
those examples should behave -- we have indicated possible reduction
sequences in comments, but implementations in the wild would behave
differently and often stop before reaching the normal form.

In the first few years, the C standard committee refused to provide
clarifications -- we understand that those examples were perceived as
unrealistic and absent from real-world C code. See the answers to
questions 17 and 23 in the Defect report 017~\citep*{dr017}. It later
became clear that some C or C++ programmers made real-world use of the
non-first-order fragment, and lately standard bodies have moved
towards actually specifying this behavior, choosing to honor Dave
Prosser's original intent to reduce as much as possible. See in
particular the discussion of the \code{NIL} example above in the
document N3882 from the C++ standard body~\citep*{n3882}.

\subsection{Relating our Algorithm to Dave Prosser's Pseudo-Code}
\label{sec:relating-to-dave-prosser}

In \appendixref{appendix:cpp}, we show and explain Dave Prosser's
algorithm, we relate it to our own termination-monitoring algorithm
(they are different but related), we claim that they provide the same
expansions in the first-order case, and finally we explain how
Dave Prosser's algorithm works outside the first-order fragment.

We are not aware of any previous proof that Dave Prosser's algorithm
terminates on all inputs; our comparison provides a proof in the
first-order case for the core fragment, but the closed-higher-order
case remains open.

\subsection{Adapting our Algorithm to Closed-Higher-Order Macros}
\label{subsec:closed-higher-order-calculus}

For the purposes of OCaml head shape computation, we have only
formulated our termination-monitoring algorithm on the first-order
$\lambda$-calculus with recursion. We now extend our first-order calculus
from Section~\ref{subsec:first-order-calculus} to the
closed-higher-order fragment, and show that our termination-monitoring
algorithm still enforces termination. In the ML world, this would
correspond to extending ML declarations to higher-kinded type
parameters -- a feature present in Haskell.
\begin{mathpar}
  \begin{array}{lcl}
    p & \bnfeq & \letrecin D t \\
    D & \bnfeq & \emptyset \bnfor D, f {\Fam i {x_i}} = t \\
    f & & \text{function name} \\
    x & & \text{first-order variable} \\
    t & \bnfeq & x \bnfor \mathhl{f} \bnfor \mathhl{t} {\Fam i {t_i}} \\
  \end{array}

  \begin{array}{lcl}
    \bar p & \bnfeq & \letrecin D {\bar t}                    \\
    \bar t & \bnfeq & x \bnfor \loc {\mathhl{t} {\Fam i {\bar t_i}}} l \\
    l & \bnfeq & \emptyset \bnfor l, f \quad \text{($f \notin l$)} \\
    C[\hole] & \bnfeq & \hole \bnfor \loc {\mathhl{t} {(\Fam i {\bar t_i}, C[\hole], \Fam j {\bar t_j}})} l \\
  \end{array}
\end{mathpar}

We have highlighted above the changes compared to the first-order
case: we add function names $f$ as first-class terms, and relax our
syntax of application from an application of a known function
$\loc {f \Fam i {t_i}} l$ to the application of an arbitrary term
$\loc {t \Fam i {t_i}} l$. Note that the annotation $l$ here is the
trace of the whole application node, not the trace of the function
symbol -- function symbols do not carry an annotation.\footnote{We
  conjecture that our treatment corresponds to the intersection of the
  hide sets of the function name and closing parenthesis in Dave
  Prosser's algorithm. The function name and closing parenthesis may
  have been generated by different macro expansions, and the
  intersection computes a nearest common ancestor. Our application
  nodes make this nearest common ancestor explicit in the AST.}

The definition of reduction, repeated below for reference, is
unchanged. But it now implicitly restricts redexes to the case where
the left-hand-side of the function application has first been reduced
to a function name $f$.
\begin{mathpar}
\infer
{(f {\Fam i {x_i}} = t') \in D \\ f \notin l}
{
  \letrecin D C[\loc {f {\Fam i {\bar t_i}}} l]
  \rewto
  \letrecin D C[\locsubs {t'} {\Fam i {x_i \leftarrow \bar t_i}} {l,f}]
}
\end{mathpar}

This extension can trivially express the closed-higher-order examples
that we discussed so far, as well as the famous non-terminating
$\lambda$-term $(\lam x {x~x}) ~ (\lam x {x~x})$.

\begin{minipage}{0.4\linewidth}
\begin{lstlisting}
let rec nil(x) = x
    and g0(arg) = nil(g1)(arg)
    and g1(arg) = nil(arg)
in g0(fortytwo)
\end{lstlisting}
\end{minipage}%
\begin{minipage}{0.25\linewidth}
\begin{lstlisting}
let rec a(x) = b
    and b(x) = x
in a(a)(a)(a)
\end{lstlisting}
\end{minipage}%
\begin{minipage}{0.35\linewidth}
\begin{lstlisting}
let rec delta(x) = x(x)
in delta(delta)
\end{lstlisting}
\end{minipage}

The \code{NIL} and \code{a(a)} examples reduce without getting
stuck. The \code{delta(delta)} example gets stuck without looping.

\begin{minipage}{0.4\linewidth}
\begin{lstlisting}
   g0(fortytwo)[]
$\rewto$ nil(g1)[g0](fortytwo)[g0]
$\rewto$ g1(fortytwo)[g0]
$\rewto$ nil(fortytwo)[g0]
$\rewto$ fortywto
\end{lstlisting}
\end{minipage}%
\begin{minipage}{0.25\linewidth}
\begin{lstlisting}
   a(a)[](a)[](a)[]
$\rewto$ b(a)[](a)[]
$\rewto$ a(a)[]
$\rewto$ b
\end{lstlisting}
\end{minipage}%
\begin{minipage}{0.35\linewidth}
\begin{lstlisting}
   delta(delta)[]
$\rewto$ delta(delta)[delta]
\end{lstlisting}
\end{minipage}

\begin{theorem}
  Our termination-monitoring algorithm remains correct for this
  closed-higher-order language: the annotated language is strongly
  normalizing.
\end{theorem}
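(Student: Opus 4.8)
The plan is to reuse the multiset-of-multisets termination measure sketched for the first-order case and verify that it still strictly decreases on every step of the closed-higher-order reduction. Concretely, since $D$ declares finitely many function names and every trace is duplicate-free, there is a longest trace $L$ containing all of them, and I measure a trace $l$ by the natural number $\mathsf{length}(L) - \mathsf{length}(l)$, so that extending a trace strictly decreases its value. I then measure each subterm (node) of an annotated term $\bar t$ by the multiset of traces occurring on the \emph{path} from that node up to the root, and measure $\bar t$ itself by the multiset of the path-measures of all its subterms. Ordering the path-measures by the Dershowitz--Manna multiset ordering over the natural numbers, and whole-term measures by the multiset ordering over path-measures, yields a well-founded order, since the multiset ordering preserves well-foundedness and the underlying naturals are bounded by $\mathsf{length}(L)$.

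The key observation is that the reduction rule is syntactically identical to the first-order one: a redex $\loc{f\,\Fam i{\bar t_i}}{l}$ fires only once its head has been reduced to a bare function name $f$ with $f \notin l$, and it rewrites this single application node (of trace $l$) into the body $\locsubs{t'}{\sigma}{l,f}$ annotated with the strictly longer trace $l,f$, with the annotated arguments $\bar t_i$ substituted for the $x_i$. I would therefore argue exactly as in the first-order case. Every subterm strictly above the redex keeps its path, hence its measure, unchanged. Every surviving subterm $\bar u$ inside an argument $\bar t_i$ keeps its internal path but has the enclosing redex node (contributing the trace $l$) replaced by zero or more body nodes (each contributing $l,f$, of strictly smaller value); by Dershowitz--Manna this strictly decreases its path-measure, and this remains true for each of the finitely many duplicated copies of $\bar u$. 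Likewise, every genuinely new body node, and every standalone function-name leaf in the body, has a path-measure strictly below that of the old redex node. Packaging this as a single multiset replacement — remove the redex node's measure together with the old argument-subterm measures, add the strictly smaller measures of the body nodes and the argument copies — gives a strict decrease of the whole-term measure.

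The main thing to verify, and the point where the higher-order setting genuinely differs, is that this argument never relied on application heads being fixed function symbols, except at the firing redex itself. In the first-order calculus every head is a bare $f$, whereas here the body may contain standalone function names and applications whose head is a variable or a nested application. I would check that the measure inspects only the trace annotations carried by application nodes, and that the annotating substitution $\locsubs{t'}{\sigma}{l,f}$ stamps \emph{every} application node of the body uniformly with $l,f$, irrespective of the shape of its head, while leaves (variables and function names) carry no trace of their own and merely inherit their ancestors'. Consequently the strict-decrease computation is insensitive to the higher-order structure, and no new case arises. I would also note that the danger specific to higher order — that duplicating an argument could copy subterms whose internal traces are larger than $l$ — is precisely what the path-based (rather than head-based) measure neutralizes: the argument never compares argument-internal traces against $l$, it uses only that the redex node sitting above them shrinks.

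Finally, I would conclude strong normalization: every annotated reduction step strictly decreases a measure living in a well-founded order, so every reduction sequence from any annotated term is finite, and reduction therefore reaches either a $\beta$-normal form or a blocked redex in finitely many steps. I expect the only real work to be the careful bookkeeping of paths under substitution and duplication — which copies of which argument subterms appear, and how many body nodes sit above each substituted variable occurrence — whereas the well-foundedness of the order and the observation that the reduction rule is unchanged are immediate, and the higher-order generalization adds nothing beyond confirming the head-shape-independence just described.
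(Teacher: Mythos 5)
Your proof is correct and follows essentially the same route as the paper: the same path-based multiset-of-multisets measure, with the same key observation that the higher-order rule still fires only on a bare function name and stamps every new application node of the body with the strictly longer trace $l,f$, so the first-order decrease argument carries over unchanged. The only difference is presentational: the paper factors the decrease computation through its generic ``measured tree expansion'' theorem, re-encoding applications with an explicit $\mathsf{app}$ node constructor and treating function names as measure-minimal leaves --- which makes the degenerate bodies $t' = x_i$ (collapse to an argument) and $t' = g$ (collapse to a leaf) explicit cases, where your packaging step covers them only implicitly --- whereas you redo the multiset bookkeeping directly on annotated terms.
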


The proof of this result reuses the technical machinery of the
correctness proof for the first-order fragment. It is detailed in
\appendixref{app:closed-higher-order-correctness-proof}.

\subsection{Both Algorithms are Incomplete Outside the First-Order Fragment}

Here is a counterexample to completeness in the closed-higher-order
fragment:
\begin{lstlisting}[language=C]
#define f(p,q) p(f(q,q))
#define id(x) x
#define stop(x) done
f(id,stop) // $\rewto$ id(f(stop,stop)) $\rewto$ f(stop,stop) 
            // $\rewto$ stop(f(stop,stop)) $\rewto$ done
\end{lstlisting}

This example requires two nested expansions of \code{f} to reduce to
its normal form \code{done}. Neither our algorithm nor Prosser's can
reach this normal form; they block at \code{f(stop,stop)}. Note that
this term is an example of a term that is only weakly normalizing: it
has a normal form, but also an infinite reduction sequence
\begin{lstlisting}
f(stop,stop) $\rewto$ stop(f(stop,stop)) $\rewto$ stop(stop(f(stop,stop))) $\rewto$ ...
\end{lstlisting}

\section{Related Work}
\label{sec:related-work}

There is a lot of work on value representations in programming
languages, including questions of how to optimize data
representation. We are not aware, however, of previous academic work
on custom (datatype-specific) representation of sum/coproduct types,
which requires checking that a candidate representation ensures
disjointedness.

We will discuss the state of representation optimization in
neighboring languages with sum types. In particular, the
``niche-filling'' optimizations of Rust are the most closely related
work, as they involve a form of disjointedness. To our knowledge they
have not previously been discussed in an academic context, except for
a short abstract~\cite{filling-a-niche}.

\subsection{Functional Programming Languages}

\paragraph{Haskell}

Haskell offers \code{newtype} for single-constructor unboxing. GHC
supports unboxed sum types among its unboxed value types. Unboxed
value types live in kinds different from the usual kind \code{Type} of
types whose value representation is uniform. GHC also supports an
UNPACK pragma~\citep*{GHC-UNPACK} on constructor arguments to require
that this argument be stored unboxed -- generalizing OCaml's unboxing
of float arrays and records. Haskell would still benefit from
constructor unboxing. Note that lifted types (containing lazy thunks)
would conflict with each other, limiting applicability -- one has to
use explicitly unlifted types, or Strict Haskell, etc.

\paragraph{MLton}

MLton can eliminate some boxing due to aggressive
specialization~\citep*{mlton}; for example, \code{(int * int) array}
will implicitly unbox the \code{(int * int)} tuple. Its relevant
optimizations are
\href{http://www.mlton.org/SimplifyTypes}{SimplifyTypes}, which
performs unboxing for datatypes with a single constructor
(after erasing constructors with uninhabited arguments) and
\href{http://www.mlton.org/DeepFlatten}{DeepFlatten},
\href{http://www.mlton.org/RefFlatten}{RefFlatten} which optimize
combinations of product types and mutable fields. The representation
of sum types with several (inhabited) constructors remains uniform.

\paragraph{Scala}

In Scala, representation questions are constrained by the JVM but also
by the high degree of dynamic introspectability expected. Even the
question of single-constructor unboxing is delicate. A widespread
\code{AnyVal} pattern has disappointing
performance~\citep*{the-high-cost-of-anyval-classes,scala-pre-SIP-unboxed-wrapper-types}
\, and dotty introduced a specific \emph{opaque type synonym}
feature~\citep*{scala-SIP-opaque-types,dotty-PR-opaque-types} to work
around this.

\paragraph{Specialization and representation optimizations}

Both MLton and Rust create opportunities for datatype representation
optimizations by performing aggressive monomorphization. In OCaml,
statically specializing the representation of \code{(int * int) array}
(as MLton does) or \code{Option<Box<T>>} (as Rust does) to be more
compact would not be possible, as polymorphic functions working with
\code{'a array} or \code{'a option} inputs are compiled once for
all instances of these datatypes.

On the other hand, a non-monomorphizing language could perfectly
implement and support representation optimizations for types that are
only used in a specialized context. For example, at the cost of some
code duplication, high-performance code could define its own %
\code{'a option_ref} code that looks similar to \code{('a option) ref}
but with a more compact representation. (In fact people do this today
using foreign or unsafe code; \code{'a ref_option} can be safely
expressed with inline records.) Finally, in
Section~\ref{subsec:shape-constraints-on-type-variables} we described
parameterized type definitions that support a quantification on the
head shape of the type parameter, which could provide a compromise
between genericity and unboxing opportunities.

We thus consider that monomorphization is a separate concern from
representation optimizations such as unboxing. Monomorphization
creates more opportunities for representation optimizations, with
well-known tradeoff in terms of compilation time and code
size. Languages could consider representation optimizations whether or
not they perform aggressive specialization.

\paragraph{Unboxed sums and active/view patterns}

Some languages, such as GHC Haskell and \Fsharp{}, support both
unboxed sums as ``value
types''~\citep*{ghc-unboxed-sums,fsharp-PR-struct-discriminated-unions}
and active/view
patterns~\citep*{fsharp-active-patterns,ghc-view-patterns} to apply
user-defined conversions during pattern matching. Combining those two
features can get us close to unboxed constructors in some contexts,
including our Zarith/gmp example. The idea is to have a very compact
primitive/opaque value representation, and expose a ``view'' of these
values in terms of unboxed sums -- in \Fsharp{} they are called
``struct-based discriminated unions''.

This style can combine a compact in-memory representation, yet provide
the usual convenience of pattern-matching, without extra
allocations -- even when crossing an FFI boundary. The overhead would
typically be higher than native constructor unboxing, but only by
a constant factor.

This approach is very flexible, it can be used to perform
representation optimizations that are not covered by
single-constructor unboxing alone. For example, it can be used to view
native integers into an unboxed sum type of
positive-or-negative-or-zero numbers. On the other hand, constructor
unboxing extends the ranges of memory layouts that can be expressed
directly in the language, as in our Zarith example; to use the
``view'' in this situation, we have to implement the representation
datatype and the view function in unsafe foreign code.

\subsection{Rust: Niche-Filling}
\label{subsec:rust-niche-filling}

Rust performs a form of constructor unboxing by applying so-called
``niche-filling rules''. The paradigmatic example of niche-filling is
unboxing \code{Option<A>} for all types \code{A} whose representation
contains a ``niche'' leaving one value available -- typically, if
\code{A} is a type of non-null pointers, the \code{0} word is never
a valid \code{A} representation and can be used to represent the
option's \code{None}.

Note that niche-filling in Rust is a qualitatively different
optimization from constructor unboxing in OCaml, as it does not remove
indirection through pointers. In Rust datatypes, sums (enums) are
unboxed and pointer indirections are fully explicit through types such
as \code{Box<T>}, and to our knowledge they are never optimized
implicitly. Niche-filling shrinks the width of (value/unboxed) sum
types by removing their tag byte; this is important when using
(unboxed) arrays of enums, and because all values of an enum are
padded to the width of its widest constructor. One could say that it
is an \emph{untagging} rather than \emph{unboxing} optimization -- but
a tag is a sort of box.

The current form of niche-filling was discussed in
\citet*{rust-RFC-1230} and implemented in two steps in
\citet*{rust-PR-45225} and \citet*{rust-PR-94075}. We understand that
it works as follows: among the constructors whose arguments have
maximal total width, try to find one with a ``niche'', a set of
impossible values, that is large enough to store the tag of the
sum. Then check that the arguments of all other constructors can be
placed in the remaining space -- outside the niche.

We can formulate niche-filling in our framework. Head shapes are
abstractions of sets of binary words -- all of the same size. In
particular, the complement of the head shape of a type is its set
of niches. Rust allows unboxing exactly when \emph{all} constructors
of a type can be ``unboxed'' in our sense -- they can be placed in
non-overlapping subsets of binary words. Rust will implicitly change
the placement of some constructor arguments to obtain more
non-overlapping cases, which is an instance of the ``unboxing by
transformation'' approach that we discussed in
Section~\ref{subsec:unboxing-by-transformation}.

Niche-filling has some limitations:
\begin{enumerate}
\item Filling a niche with a type B is a direct inclusion of the tag
  and values of B, possibly placed at non-zero offsets within A; it
  is not possible to transform the representation of B along the way
  (for example, if B has two values, it is not possible to use the
  niches of two separate non-null pointer fields, there must be a full
  padding bit available somewhere.)

  This limitation is shared with our work. (We point it out because
  some people expect magic from Rust optimizations.) One reason is that
  this guarantees that projecting the B out of the value is cheap,
  instead of requiring a possibly-expensive transformation.

\item Today niche-filling is implementation-defined in Rust, purely
  a compiler heuristic, and there is no interface for the user to
  control niche-filling behavior. There seem to be some rare cases
  where niche-filling in fact degrades performance by making
  pattern-matching slightly more complex; the only recourse for
  programmers is to use the \code{enum(C)} attribute to ask for an
  ABI-specified layout that disables all representation
  optimizations. (There is an instance of this workaround in the PR
  that generalized niche-filling,
  \href{https://github.com/rust-lang/rust/pull/94075}{94075}, out of
  concern for a potential performance regression in the Rust compiler
  itself.)

\item Rust currently only allows niche-filling when all constructors
  can be unboxed at once. This is less expressive than our approach
  where some constructors can be unboxed selectively while others are
  not. In the context of Rust, removing the tag bytes from \emph{all}
  constructors is the very point of niche-filling, so unboxing
  (untagging, really) only certain constructors selectively is not
  interesting. This suggests however that niche-filling as currently
  conceived in the Rust community is a specialized form of constructor
  unboxing, that is not necessarily suitable for other programming
  languages. Our presentation is more general.
\end{enumerate}

\subsection{Declarative Languages for Custom Datatype Representations}

\paragraph{Hobbit}

The experimental language Hobbit offered syntax to define algebraic
``bit-datatypes'' together with a mapping into a lower-level bit-level
representation for FFI purposes, with support for constructing and
matching the lower-level value directly. See in particular
\citet{hobbit-bitdata}, Section ``2.2.4 Junk and Confusion?'' on page
9, which refers to the coproduct property as a ``no junk and
confusion'' guarantee, and mentions that their design in fact allows
confusion: they are not forbidden by a static analysis, but a warning
detects them in some cases.

One expressivity limitation is that ``bit-datatypes'' form a ``closed
universe'': all types mentioned in a ``bit-datatype'' declaration must
themselves be bit-data whose representation is available to the
compiler, it does not seem possible to specify only a part of the
variant value as bit-data, and have arbitrary values of the language
(that may not have a bit-data encoding) for other arguments.

\paragraph{Cogent / Dargent}

Dargent~\citep*{dargent} is a data layout language for Cogent, focused
on providing flexible data representation choices for
structure/product types. Users can define custom layout rules for
record types, the system generates custom getters and setters for each
field and verifies that each pair of a getter and setter satisfies the
expected specification of being a lens. Cogent also has sum types, but
Dargent does not allow unboxed layouts; it requires a bitset to store
a unique tag for each case. Extending this formalization to richer
representations of sum types would allow constructor unboxing. This
could be done by generating a pattern-matching function and a family
of constructor functions, and checking that those satisfy the expected
properties of a coproduct -- disjointedness.

(There are many other data-description languages (DDLs), but they tend
to not offer specific support for sum types / disjoint unions,
especially not approaches to unbox them. The Dargent Related Work
mentions the prominent previous works in this area.)

\paragraph{\texttt{Ribbit}: Bit Stealing Made Legal}

Simultaneous to our work, \citet*{bit-stealing-made-legal} propose
a new domain-specific language to express data layouts. It is designed
for efficiency rather than for interoperability with foreign data,
supports flexible representation optimizations for sum types, and
formalizes (on paper) the ``no-confusion'' correctness condition for
sums.

Having a full layout DSL is more expressive than our approach. The
authors had both OCaml and Rust in mind when elaborating their
proposed DSL, called \texttt{Ribbit}, which is thus expressive enough to
express constructor unboxing (Zarith is mentioned as a use-case), as
well as Rust niche-filling optimizations. The language currently
suffers from the same ``closed universe'' limitation as Hobbit, and
needs to be extended to deal with abstract types or non-specialized
type parameters, but those are not fundamental limitations of the
approach.

It would be nice to restructure our work as elaborating the
user-facing feature into an intermediate representation of data layouts,
and \textsc{Ribbit} would be a good candidate for this.

The authors discuss efficient compilation of pattern-matching
involving datatypes with a custom layout; they moved from matrix-based
pattern-matching compilation (as used in the OCaml compiler) to
decision-tree-based compilation to handle more advanced, ``irregular''
representations. (The \textsc{Ribbit} prototype does not yet support
real programs, so it is hard to draw performance conclusions; in
contrast our prototype supports the full OCaml language.) So far our
intuition is that constructor unboxing, and the various extension we
considered, are regular enough that matrix-based compilation can give
good results and no invasive changes are required in the OCaml
compiler. But this question deserves further study.

\subsection{Deciding Termination of Recursive Rewrite Rules}

Our termination-monitoring algorithm provides an alternative proof of
decidability of the halting problem for the first-order recursive
calculus. This result was well-known (inside a different scientific
sub-community than ours), but we want to point out that we provide
a new (to our knowledge) decision \emph{algorithm}, and that having
a simple yet efficient algorithm is important to our application. By
``termination-monitoring'' we mean that our algorithm only adds some
bookkeeping information to computations that we have to perform in any
case to compute a normal form.

The simplest proof of decidability of termination for the first-order
recursive calculus that we have found (thanks to Pablo Barenbaum) is
\citet*{first-order-halting-problem}. Seen as a decision algorithm,
this proof is terrible: it performs an exponential computation of all
normal forms of all types definitions.

On the other hand, other termination proofs exist that also scale to
the higher-order cases, and rely on evaluating the program in
a semantic domain where base types are interpreted as the two-valued
Scott domain $\{\bot, \top\}$, where $\bot$ represents
potentially-non-terminating terms and $\top$ represent terminating
terms. Terms at more complex types are interpreted as more complex
domains, but (in absence of recursive types) they remain finite, so in
particular recursive definitions, interpreted as fixpoints, can be
computed by finite iteration. This approach in particular underlies
the termination proofs in
\citet*{higher-order-halting-problem,lmcs:1567}. This argument has
a computational feel, but naively computing a fixpoint for each
recursive definition in the type environment is still more work than
we want to perform. It is possible that a clever on-demand computation
strategy for those fixpoints would correspond to our algorithm.

\newpage

\begin{acks}
Jeremy Yallop proposed the general idea of this feature in March
2020~\citep*{unboxing-RFC}. There are various differences between our
implementation and Jeremy Yallop's initial proposal, mostly in the
direction of simplifying the feature by keeping independent aspects
for later, and our formal developments were a substantial new
effort. Jeremy Yallop also suggested Zarith as a promising application
of the feature.

Nicolas Chataing and Gabriel Scherer worked on this topic together in
spring-summer 2021, during a master internship of Nicolas Chataing
supervised by Gabriel Scherer. Most of the the ideas presented in
Sections 3, 4, 5 were obtained during this period, along with the
termination-checking algorithm, but without a convincing proof of
termination. An important part of the work not detailed in this
article was the implementation of a prototype within the OCaml
compiler, which required solving difficult software engineering
questions about dependency cycles between various processes in the
compiler (checking type declarations, managing the typing environment,
computing type properties for optimization, accessing type properties
during compilation). Around and after the end of the internship,
Gabriel Scherer implemented the Zarith case study, implemented small
extensions, and set out to prove termination -- soundness of the
on-the-fly termination checking algorithm. Gabriel Scherer also did
the writing, and intends to work on upstreaming the feature.

Stephen Dolan entered the picture with high-quality remarks on the
work, in particular identifying the notion of benign cycles, and
remarking that the termination algorithm is similar to the \code{cpp}
specification, with precise pointers to the work of Dave Prosser via
Diomidis Spinellis that Gabriel Scherer used to propose a detailed
comparison in Section 7.

The most important contribution of Stephen Dolan is probably that he
killed the first four or five attempts at a termination proof.  Once
we collectively got frustrated with Stephen Dolan's repeated ability
to strike down proposed termination arguments, Irène Waldspurger played
the key role of suggesting a correct argument: one has to use
multisets of multisets of nodes, instead of trying to make do with
multisets of nodes.

Jacques-Henri Jourdan remarked on a relation to the Coq
value representation optimization in \code{native_compute}, Guillaume
Melquiond provided extra information and Guillaume Munch-Maccagnoni
tried some early experiments. We have yet to investigate this idea in
full, but it may provide a nice simplification of this low-level
aspect of \code{native_compute}.

Adrien Guatto greatly improved the presentation of a previous version
of this work by a healthy volume of constructive criticism. Anonymous
POPL reviewers also provided excellent feedback and engaging
questions, in particular Section~\ref{subsec:unboxing-existing-tricks}
is due to their curiosity.

Finally, the \texttt{types} mailing-list has helped tremendously in
finding previous work on decidability of recursive rewrite systems
(\href{http://lists.seas.upenn.edu/pipermail/types-list/2023/002488.html}{thread}).
\end{acks}


\bibliography{unboxing}


\begin{thebibliography}{32}


\ifx \showCODEN    \undefined \def \showCODEN     #1{\unskip}     \fi
\ifx \showDOI      \undefined \def \showDOI       #1{#1}\fi
\ifx \showISBNx    \undefined \def \showISBNx     #1{\unskip}     \fi
\ifx \showISBNxiii \undefined \def \showISBNxiii  #1{\unskip}     \fi
\ifx \showISSN     \undefined \def \showISSN      #1{\unskip}     \fi
\ifx \showLCCN     \undefined \def \showLCCN      #1{\unskip}     \fi
\ifx \shownote     \undefined \def \shownote      #1{#1}          \fi
\ifx \showarticletitle \undefined \def \showarticletitle #1{#1}   \fi
\ifx \showURL      \undefined \def \showURL       {\relax}        \fi
\providecommand\bibfield[2]{#2}
\providecommand\bibinfo[2]{#2}
\providecommand\natexlab[1]{#1}
\providecommand\showeprint[2][]{arXiv:#2}

\bibitem[A\u{g}acan(2016)]%
        {ghc-unboxed-sums}
\bibfield{author}{\bibinfo{person}{{\"O}mer~S{\'i}nan A\u{g}acan}.}
  \bibinfo{year}{2016}\natexlab{}.
\newblock \bibinfo{title}{GHC unboxed sums}.
\newblock
\newblock
\urldef\tempurl%
\url{https://github.com/ghc/ghc/commit/714bebff44076061d0a719c4eda2cfd213b7ac3d}
\showURL{%
\tempurl}


\bibitem[Bartell-Mangel(2022)]%
        {filling-a-niche}
\bibfield{author}{\bibinfo{person}{Noah~Lev Bartell-Mangel}.}
  \bibinfo{year}{2022}\natexlab{}.
\newblock \bibinfo{title}{{Filling a Niche: Using Spare Bits to Optimize Data
  Representations}}.
\newblock
\newblock
\urldef\tempurl%
\url{https://www.noahlev.org/papers/popl22src-filling-a-niche.pdf}
\showURL{%
\tempurl}
\newblock
\shownote{POPL'22 student research presentation}.


\bibitem[Baudon, Radanne and Gonnord(2023)]%
        {bit-stealing-made-legal}
\bibfield{author}{\bibinfo{person}{Tha{\"i}s Baudon}, \bibinfo{person}{Gabriel
  Radanne}, {and} \bibinfo{person}{Laure Gonnord}.}
  \bibinfo{year}{2023}\natexlab{}.
\newblock \showarticletitle{{Bit-Stealing Made Legal}}. In
  \bibinfo{booktitle}{\emph{{ICFP}}}.
\newblock
\urldef\tempurl%
\url{https://doi.org/10.1145/3607858}
\showDOI{\tempurl}


\bibitem[Beingessner(2015)]%
        {rust-RFC-1230}
\bibfield{author}{\bibinfo{person}{Aria Beingessner}.}
  \bibinfo{year}{2015}\natexlab{}.
\newblock \bibinfo{title}{Rust RFC 1230: More Exotic Enum Layout
  Optimizations}.
\newblock
\newblock
\urldef\tempurl%
\url{https://github.com/rust-lang/rfcs/issues/1230}
\showURL{%
\tempurl}


\bibitem[Benfield(2022)]%
        {rust-PR-94075}
\bibfield{author}{\bibinfo{person}{Michael Benfield}.}
  \bibinfo{year}{2022}\natexlab{}.
\newblock \bibinfo{title}{rustc PR 94075: Use niche-filling optimization even
  when multiple variants have data}.
\newblock
\newblock
\urldef\tempurl%
\url{https://github.com/rust-lang/rust/pull/94075}
\showURL{%
\tempurl}


\bibitem[Boespflug, Dénès and Grégoire(2011)]%
        {native-compute}
\bibfield{author}{\bibinfo{person}{Mathieu Boespflug}, \bibinfo{person}{Maxime
  Dénès}, {and} \bibinfo{person}{Benjamin Grégoire}.}
  \bibinfo{year}{2011}\natexlab{}.
\newblock \showarticletitle{Full Reduction at Full Throttle}. In
  \bibinfo{booktitle}{\emph{CPP}}.
\newblock
\urldef\tempurl%
\url{https://inria.hal.science/hal-00650940}
\showURL{%
\tempurl}


\bibitem[Burtescu(2017)]%
        {rust-PR-45225}
\bibfield{author}{\bibinfo{person}{Eduard-Mihai Burtescu}.}
  \bibinfo{year}{2017}\natexlab{}.
\newblock \bibinfo{title}{rustc PR 45225: Refactor type memory layouts and
  ABIs, to be more general and easier to optimize}.
\newblock
\newblock
\urldef\tempurl%
\url{https://github.com/rust-lang/rust/pull/45225}
\showURL{%
\tempurl}


\bibitem[Chan(2017)]%
        {scala-pre-SIP-unboxed-wrapper-types}
\bibfield{author}{\bibinfo{person}{Lloyd Chan}.}
  \bibinfo{year}{2017}\natexlab{}.
\newblock \bibinfo{title}{Scala Pre-SIP: Unboxed wrapper types}.
\newblock
\newblock
\urldef\tempurl%
\url{https://contributors.scala-lang.org/t/pre-sip-unboxed-wrapper-types/987}
\showURL{%
\tempurl}


\bibitem[Chen, Lafont, O'Connor, Keller, McLaughlin, Jackson and
  Rizkallah(2023)]%
        {dargent}
\bibfield{author}{\bibinfo{person}{Zilin Chen}, \bibinfo{person}{Ambroise
  Lafont}, \bibinfo{person}{Liam O'Connor}, \bibinfo{person}{Gabriele Keller},
  \bibinfo{person}{Craig McLaughlin}, \bibinfo{person}{Vincent Jackson}, {and}
  \bibinfo{person}{Christine Rizkallah}.} \bibinfo{year}{2023}\natexlab{}.
\newblock \showarticletitle{Dargent: A Silver Bullet for Verified Data Layout
  Refinement}.
\newblock \bibinfo{journal}{\emph{PACMPL}} \bibinfo{volume}{7},
  \bibinfo{number}{POPL}, Article \bibinfo{articleno}{47} (\bibinfo{date}{Jan}
  \bibinfo{year}{2023}), \bibinfo{numpages}{27}~pages.
\newblock
\urldef\tempurl%
\url{https://doi.org/10.1145/3571240}
\showDOI{\tempurl}


\bibitem[Colin, Lepigre and Scherer(2019)]%
        {mutual-unboxing}
\bibfield{author}{\bibinfo{person}{Simon Colin}, \bibinfo{person}{Rodolphe
  Lepigre}, {and} \bibinfo{person}{Gabriel Scherer}.}
  \bibinfo{year}{2019}\natexlab{}.
\newblock \showarticletitle{{Unboxing Mutually Recursive Type Definitions in
  OCaml}}. In \bibinfo{booktitle}{\emph{{JFLA 2019}}}.
\newblock
\urldef\tempurl%
\url{https://hal.inria.fr/hal-01929508}
\showURL{%
\tempurl}


\bibitem[Compall(2017)]%
        {the-high-cost-of-anyval-classes}
\bibfield{author}{\bibinfo{person}{Stephen Compall}.}
  \bibinfo{year}{2017}\natexlab{}.
\newblock \bibinfo{title}{Blog post: the high cost of AnyVal classes}.
\newblock
\newblock
\urldef\tempurl%
\url{https://failex.blogspot.com/2017/04/the-high-cost-of-anyval-subclasses.html}
\showURL{%
\tempurl}


\bibitem[Diatchki, Jones and Leslie(2005)]%
        {hobbit-bitdata}
\bibfield{author}{\bibinfo{person}{Iavor~S. Diatchki}, \bibinfo{person}{Mark~P.
  Jones}, {and} \bibinfo{person}{Rebekah Leslie}.}
  \bibinfo{year}{2005}\natexlab{}.
\newblock \showarticletitle{High-Level Views on Low-Level Representations}. In
  \bibinfo{booktitle}{\emph{ICFP'05}}.
\newblock
\urldef\tempurl%
\url{http://web.cecs.pdx.edu/~mpj/pubs/bitdata-icfp05.pdf}
\showURL{%
\tempurl}


\bibitem[Granlund and contributors(1991)]%
        {gmp}
\bibfield{author}{\bibinfo{person}{Torbjörn Granlund} {and}
  \bibinfo{person}{contributors}.} \bibinfo{year}{1991}\natexlab{}.
\newblock \bibinfo{title}{GMP}.
\newblock
\newblock
\urldef\tempurl%
\url{https://gmplib.org/}
\showURL{%
\tempurl}


\bibitem[Hughes(1982)]%
        {hughes-82}
\bibfield{author}{\bibinfo{person}{John Hughes}.}
  \bibinfo{year}{1982}\natexlab{}.
\newblock \showarticletitle{Super-Combinators a New Implementation Method for
  Applicative Languages}. In \bibinfo{booktitle}{\emph{Proceedings of the 1982
  ACM Symposium on LISP and Functional Programming (LFP)}}.
\newblock
\urldef\tempurl%
\url{https://doi.org/10.1145/800068.802129}
\showURL{%
\tempurl}


\bibitem[Khasidashvil(2020)]%
        {first-order-halting-problem}
\bibfield{author}{\bibinfo{person}{Zurab Khasidashvil}.}
  \bibinfo{year}{2020}\natexlab{}.
\newblock \showarticletitle{{A short proof of the decidability of normalization
  in recursive program schemes}}. In \bibinfo{booktitle}{\emph{Shalva
  Pkhakadze's Festschrift, AMIM Vol. 25 No. 2}}.
\newblock
\urldef\tempurl%
\url{http://www.viam.science.tsu.ge/Ami/2020_2/5_zura.pdf}
\showURL{%
\tempurl}


\bibitem[Marlow(2003)]%
        {GHC-UNPACK}
\bibfield{author}{\bibinfo{person}{Simon Marlow}.}
  \bibinfo{year}{2003}\natexlab{}.
\newblock \bibinfo{title}{GHC's UNPACK pragma}.
\newblock
\newblock
\urldef\tempurl%
\url{https://github.com/ghc/ghc/commit/abbc5a0be1df84a33015470319062ed7a3aa3153}
\showURL{%
\tempurl}


\bibitem[Min{\'e} and Leroy(2012)]%
        {zarith}
\bibfield{author}{\bibinfo{person}{Antoine Min{\'e}} {and}
  \bibinfo{person}{Xavier Leroy}.} \bibinfo{year}{2012}\natexlab{}.
\newblock \bibinfo{title}{Zarith}.
\newblock
\newblock
\urldef\tempurl%
\url{https://github.com/ocaml/Zarith/}
\showURL{%
\tempurl}


\bibitem[Odersky and Moors(2018)]%
        {dotty-PR-opaque-types}
\bibfield{author}{\bibinfo{person}{Martin Odersky} {and}
  \bibinfo{person}{Adriaan Moors}.} \bibinfo{year}{2018}\natexlab{}.
\newblock \bibinfo{title}{dotty PR 5300: Opaque types}.
\newblock
\newblock
\urldef\tempurl%
\url{https://github.com/lampepfl/dotty/pull/5300}
\showURL{%
\tempurl}


\bibitem[Osheim, Cantero and Doeraene(2017)]%
        {scala-SIP-opaque-types}
\bibfield{author}{\bibinfo{person}{Erik Osheim}, \bibinfo{person}{Jorge~Vicente
  Cantero}, {and} \bibinfo{person}{Sébastien Doeraene}.}
  \bibinfo{year}{2017}\natexlab{}.
\newblock \bibinfo{title}{Scala SIP 35: Opaque types}.
\newblock
\newblock
\urldef\tempurl%
\url{https://contributors.scala-lang.org/t/pre-sip-unboxed-wrapper-types/987}
\showURL{%
\tempurl}


\bibitem[Peyton-Jones(2007)]%
        {ghc-view-patterns}
\bibfield{author}{\bibinfo{person}{Simon Peyton-Jones}.}
  \bibinfo{year}{2007}\natexlab{}.
\newblock \bibinfo{title}{GHC view patterns}.
\newblock
\newblock
\urldef\tempurl%
\url{https://gitlab.haskell.org/ghc/ghc/-/wikis/view-patterns}
\showURL{%
\tempurl}


\bibitem[Plotkin(2022)]%
        {higher-order-halting-problem}
\bibfield{author}{\bibinfo{person}{Gordon Plotkin}.}
  \bibinfo{year}{2022}\natexlab{}.
\newblock \bibinfo{title}{{Recursion does not always help}}.
  (\bibinfo{year}{2022}).
\newblock
\urldef\tempurl%
\url{https://arxiv.org/pdf/2206.08413.pdf}
\showURL{%
\tempurl}


\bibitem[Prosser(1986)]%
        {prosser-86}
\bibfield{author}{\bibinfo{person}{Dave Prosser}.}
  \bibinfo{year}{1986}\natexlab{}.
\newblock \bibinfo{title}{{X3J11/86-196}: Complete macro expansion algorithm}.
\newblock
\newblock
\urldef\tempurl%
\url{https://www.spinellis.gr/blog/20060626/x3J11-86-196.pdf}
\showURL{%
\tempurl}


\bibitem[Salvati and Walukiewicz(2015)]%
        {lmcs:1567}
\bibfield{author}{\bibinfo{person}{Sylvain Salvati} {and} \bibinfo{person}{Igor
  Walukiewicz}.} \bibinfo{year}{2015}\natexlab{}.
\newblock \showarticletitle{{Using models to model-check recursive schemes}}.
\newblock \bibinfo{journal}{\emph{{Logical Methods in Computer Science}}}
  \bibinfo{volume}{{Volume 11, Issue 2}} (\bibinfo{date}{June}
  \bibinfo{year}{2015}).
\newblock
\urldef\tempurl%
\url{https://doi.org/10.2168/LMCS-11(2:7)2015}
\showDOI{\tempurl}


\bibitem[Spinellis(2003)]%
        {CSout}
\bibfield{author}{\bibinfo{person}{Diomidis Spinellis}.}
  \bibinfo{year}{2003}\natexlab{}.
\newblock \bibinfo{title}{{CSout}}.
\newblock
\newblock
\urldef\tempurl%
\url{https://www.spinellis.gr/cscout/}
\showURL{%
\tempurl}


\bibitem[Spinellis(2008)]%
        {spinellis-2008}
\bibfield{author}{\bibinfo{person}{Diomidis Spinellis}.}
  \bibinfo{year}{2008}\natexlab{}.
\newblock \bibinfo{title}{A corrected and annotated version of the
  {X4J11/86-196} document}.
\newblock
\newblock
\urldef\tempurl%
\url{https://www.spinellis.gr/blog/20060626/}
\showURL{%
\tempurl}


\bibitem[Syme(2016)]%
        {fsharp-PR-struct-discriminated-unions}
\bibfield{author}{\bibinfo{person}{Don Syme}.} \bibinfo{year}{2016}\natexlab{}.
\newblock \bibinfo{title}{Fsharp PR 1395: struct discriminated unions}.
\newblock
\newblock
\urldef\tempurl%
\url{https://github.com/dotnet/fsharp/pull/1395}
\showURL{%
\tempurl}


\bibitem[Syme, Neverov and Margetson(2007)]%
        {fsharp-active-patterns}
\bibfield{author}{\bibinfo{person}{Don Syme}, \bibinfo{person}{Gregory
  Neverov}, {and} \bibinfo{person}{James Margetson}.}
  \bibinfo{year}{2007}\natexlab{}.
\newblock \showarticletitle{Extensible Pattern Matching via a Lightweight
  Language Extension}. In \bibinfo{booktitle}{\emph{ICFP'07}}
  \emph{(\bibinfo{series}{ICFP '07})}.
\newblock
\urldef\tempurl%
\url{https://www.microsoft.com/en-us/research/wp-content/uploads/2016/02/p29-syme.pdf}
\showURL{%
\tempurl}


\bibitem[{The C++ standard committee, working group SG12}(2014)]%
        {n3882}
\bibfield{author}{\bibinfo{person}{{The C++ standard committee, working group
  SG12}}.} \bibinfo{year}{2014}\natexlab{}.
\newblock \bibinfo{title}{n3882; An update to the preprocessor specification}.
\newblock
\newblock
\urldef\tempurl%
\url{https://www.open-std.org/jtc1/sc22/wg21/docs/papers/2014/n3882.pdf}
\showURL{%
\tempurl}


\bibitem[{The C standard committee, working group WG14}(1992)]%
        {dr017}
\bibfield{author}{\bibinfo{person}{{The C standard committee, working group
  WG14}}.} \bibinfo{year}{1992}\natexlab{}.
\newblock \bibinfo{title}{Defect report 017}.
\newblock
\newblock
\urldef\tempurl%
\url{https://www.open-std.org/Jtc1/sc22/wg14/www/docs/dr_017.html}
\showURL{%
\tempurl}


\bibitem[Turner(1979)]%
        {turner-79}
\bibfield{author}{\bibinfo{person}{David~A. Turner}.}
  \bibinfo{year}{1979}\natexlab{}.
\newblock \showarticletitle{A new implementation technique for applicative
  languages}. In \bibinfo{booktitle}{\emph{Software - Practice and
  Experience}}.
\newblock


\bibitem[Weeks(2006)]%
        {mlton}
\bibfield{author}{\bibinfo{person}{Stephen Weeks}.}
  \bibinfo{year}{2006}\natexlab{}.
\newblock \showarticletitle{Whole-Program Compilation in MLton}. In
  \bibinfo{booktitle}{\emph{ML Workshop 2006}}.
\newblock
\urldef\tempurl%
\url{http://www.mlton.org/References.attachments/060916-mlton.pdf}
\showURL{%
\tempurl}


\bibitem[Yallop(2020)]%
        {unboxing-RFC}
\bibfield{author}{\bibinfo{person}{Jeremy Yallop}.}
  \bibinfo{year}{2020}\natexlab{}.
\newblock \bibinfo{title}{OCaml RFC: constructor unboxing}.
\newblock
\newblock
\urldef\tempurl%
\url{https://github.com/ocaml/RFCs/pull/14}
\showURL{%
\tempurl}


\end{thebibliography}

\begin{version}{\Extended}
\newpage
\appendix

\section{Correctness and Completeness of our Termination-Checking Algorithm}
\label{appendix:proofs}

\begin{notation}[$\bar t \rewto \bar t'$]
  We often write $\bar t \rewto \bar t'$, reducing terms instead of
  whole programs, when we in fact mean
  $\letrecin D {\bar t} \rewto \letrecin D {\bar t'}$ for some ambient
  environment of recursive definitions $D$.
\end{notation}

We should first establish precisely that our notion of annotated terms
and annotated reduction correspond to an on-the-fly
termination-checking algorithm as we suggested above: we track more
information about the term we are reducing, certain redexes are
blocked, but underneath this is just the standard $\beta$-reduction.

\begin{notation}[$\floor {\bar t}$]
  Let us write $\floor {\bar t}$ (or $\floor {\bar p}$) for the
  unannotated term (or program) obtained from erasing all annotations
  from $\bar t$ (or $\bar p$).
\end{notation}

\begin{fact}
  In a given recursive environment $D$, if a term $\bar t$ that cannot
  take another reduction step, then either:
  \begin{itemize}
  \item It is a $\beta$-normal form in the sense that its erasure
    $\floor {\bar t}$ is a $\beta$-normal form.
  \item Or it contains a \emph{blocked redex} of the form
    $\loc {f (\dots)} l$ with $f \in l$.
  \end{itemize}
\end{fact}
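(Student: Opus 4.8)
The plan is to prove the fact by a direct case analysis on whether the erasure $\floor{\bar t}$ contains a $\beta$-redex, exploiting that erasure does not change the tree structure of a term but only forgets the trace annotations $l$. First I would make this structural correspondence precise: erasure is a bijection between the subterm occurrences of $\bar t$ and those of $\floor{\bar t}$, sending an annotated application $\loc{f\,\Fam i{\bar t_i}}{l}$ to the plain application $f\,\Fam i{\floor{\bar t_i}}$ and a variable $x$ to itself. In particular, $\floor{\bar t}$ has a subterm $f\,\Fam i{t_i}$ whose head has a definition $(f\,\Fam i{x_i} = t') \in D$ exactly when $\bar t$ has a corresponding subterm $\loc{f\,\Fam i{\bar t_i}}{l}$ with that same head $f$ defined in $D$ (for some trace $l$).

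Next I would record what ``can reduce'' and ``$\beta$-normal form'' unfold to in this first-order calculus. The only redexes of an unannotated term are applications $f\,\Fam i{t_i}$ whose head $f$ is defined in $D$, so $\floor{\bar t}$ is a $\beta$-normal form iff it contains no such application. On the annotated side, the reduction rule fires at a subterm $\loc{f\,\Fam i{\bar t_i}}{l}$ precisely when $f$ is defined in $D$ and $f \notin l$. The one auxiliary observation needed is that the reduction contexts $C[\hole]$ can focus on an arbitrary subterm occurrence: the grammar of contexts lets the hole descend through any argument position, so by induction on the depth of an occurrence, every subterm of $\bar t$ arises as the hole of some context $C$. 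Consequently $\bar t$ admits a reduction step iff it has some subterm $\loc{f\,\Fam i{\bar t_i}}{l}$ with $f$ defined in $D$ and $f \notin l$.

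With these facts in hand the argument is immediate. Assume $\bar t$ takes no reduction step. If $\floor{\bar t}$ contains no redex, it is a $\beta$-normal form and we are in the first case. Otherwise $\floor{\bar t}$ has a redex $f\,\Fam i{t_i}$ with $f$ defined in $D$; by the structural correspondence this comes from a subterm $\loc{f\,\Fam i{\bar t_i}}{l}$ of $\bar t$ whose head $f$ is defined in $D$. Since $\bar t$ cannot reduce, this subterm cannot be a live redex, so by the characterization above we must have $f \in l$ --- that is, $\bar t$ contains a blocked redex, and we are in the second case.

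I do not expect any real obstacle: the statement is purely syntactic and every ingredient is routine. The only point deserving care is the ``context completeness'' observation --- that every subterm occurrence of $\bar t$ can be exhibited as the hole of some reduction context $C[\hole]$ --- since it is what lets us pass from the existence of an unblocked annotated application \emph{anywhere} in $\bar t$ to an actual reduction step of the whole program $\letrecin D {\bar t}$. This is a short induction on the grammar of contexts, but it is the step that connects the local notion of redex to the global reducibility of the program.
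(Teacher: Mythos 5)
Your proof is correct, and it is essentially the argument the paper takes for granted: this Fact is stated in the appendix without proof, precisely because its only content is the observation you isolate --- erasure preserves the term's tree structure, the reduction contexts $C[\hole]$ reach every subterm occurrence, and so each application subterm whose head is defined in $D$ is either fireable ($f \notin l$) or blocked ($f \in l$). Your explicit identification of the ``context completeness'' step as the one point needing an induction is exactly right, and nothing is missing.
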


\begin{fact}
The erasure of an annotating substitution is a standard
substitution
\begin{mathline}
  \floor {\locsubs t {\Fam i {x_i \leftarrow \bar t'_i}} l}
  \quad=\quad
  \subs t {\Fam i {x_i \leftarrow \floor {\bar t'_i}}}
\end{mathline}
and thus $\bar t \rewto \bar t'$ implies
$\floor {\bar t} \rewto \floor {\bar t'}$ for the usual notion
$t \rewto t'$ of $\beta$-reduction.
\end{fact}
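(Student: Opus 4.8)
The plan is to establish the displayed equation first, by structural induction on the term $t$ into which we substitute, and then to read off the reduction consequence as an immediate corollary. For the induction, the variable case $t = x_i$ is governed by the defining clause $\locsubs {x_i} {\Fam i {x_i \leftarrow \bar t'_i}} l = \bar t'_i$, so the left-hand side erases to $\floor{\bar t'_i}$; the ordinary substitution of the erased map sends $x_i$ to $\floor{\bar t'_i}$ as well, and a variable outside the domain is left unchanged on both sides, with erasure of a variable being itself. In the application case $t = f {\Fam j {t_j}}$, the annotating substitution yields $\loc {f {\Fam j {\locsubs {t_j} \sigma l}}} l$, whose erasure strips the annotation $l$ and recurses into the arguments to give $f {\Fam j {\floor{\locsubs {t_j} \sigma l}}}$; the induction hypothesis rewrites each $\floor{\locsubs {t_j} \sigma l}$ as the ordinary substitution applied to $t_j$, which is precisely the erasure of $\subs {f {\Fam j {t_j}}} {\Fam i {x_i \leftarrow \floor{\bar t'_i}}}$.

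The point that needs care is the mixed nature of the annotating substitution. At a variable it injects an already-annotated term $\bar t'_i$ verbatim, so erasure simply recovers $\floor{\bar t'_i}$; at an application node it creates a fresh annotation $l$, which erasure then discards. These two behaviours match exactly the two clauses of ordinary substitution once everything is erased, and this is why the equation holds independently of the particular trace $l$ used to annotate the fresh subterms.

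For the reduction consequence I would first record the routine fact that erasure commutes with plugging into a reduction context, $\floor{C[\bar u]} = \floor C [\floor {\bar u}]$, by a short induction on $C[\hole]$, where $\floor C$ denotes the unannotated context obtained by erasing all annotations. Given an annotated step $\letrecin D {C[\loc {f {\Fam i {\bar t_i}}} l]} \rewto \letrecin D {C[\locsubs {t'} {\Fam i {x_i \leftarrow \bar t_i}} {l,f}]}$ with $(f {\Fam i {x_i}} = t') \in D$, erasing the left-hand side gives $\floor C [f {\Fam i {\floor {\bar t_i}}}]$, which exhibits an ordinary $\beta$-redex for the definition of $f$; erasing the right-hand side and applying the equation just proved (with trace $l,f$) gives $\floor C [\subs {t'} {\Fam i {x_i \leftarrow \floor {\bar t_i}}}]$, which is exactly the contractum of that redex. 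Hence $\floor {\bar t} \rewto \floor {\bar t'}$.

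I do not expect a genuine obstacle here: the statement is a bookkeeping lemma confirming that the annotation machinery sits on top of ordinary $\beta$-reduction. The one feature worth emphasizing is that the side condition $f \notin l$ of the annotated rule plays no role after erasure — the unannotated calculus never blocks, so every annotated step erases to a bona fide reduction step, and in particular the blocking discipline and the choice of trace are invisible to $\floor{\cdot}$. This is precisely the direction needed to relate the monitored reduction back to the underlying calculus.
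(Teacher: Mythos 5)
Your proof is correct and is exactly the routine argument the paper has in mind: the paper states this as a \emph{fact} without proof, and your structural induction on $t$ (variable case via the clause $\locsubs x \sigma l = \sigma(x)$, application case via erasure discarding the fresh annotation $l$), followed by the observation that erasure commutes with context plugging and that the side condition $f \notin l$ is invisible after erasure, fills in precisely the omitted bookkeeping. Nothing is missing.
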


\begin{corollary}
  If $\bar t$ is reachable and reduces, in the annotated system, to
  a $\beta$-normal form $\bar v$, then $\floor {\bar v}$ is the
  $\beta$-normal form of $\floor {\bar t}$.
\end{corollary}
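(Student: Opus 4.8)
The plan is to exploit the observation, already recorded in the two Facts immediately above, that annotated reduction is nothing more than \emph{instrumented} ordinary $\beta$-reduction: under erasure it simulates the standard calculus step for step. Combining this simulation with uniqueness of $\beta$-normal forms will pin down $\floor{\bar v}$ as \emph{the} normal form of $\floor{\bar t}$.

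First I would lift the given annotated reduction sequence into the unannotated world. By hypothesis $\bar t$ reduces to $\bar v$ in the annotated system, say $\bar t \rewto^* \bar v$. Applying the second Fact to each individual step of this sequence — each annotated step $\bar s \rewto \bar s'$ erases to an ordinary step $\floor{\bar s} \rewto \floor{\bar s'}$ — and composing these steps, I obtain $\floor{\bar t} \rewto^* \floor{\bar v}$ for the usual notion of $\beta$-reduction. Since $\bar v$ is an annotated $\beta$-normal form, the first case of the first Fact guarantees that its erasure $\floor{\bar v}$ is an ordinary $\beta$-normal form. Hence $\floor{\bar v}$ is \emph{a} $\beta$-normal form reachable from $\floor{\bar t}$.

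It then remains only to strengthen ``a'' into ``the''. For this I would invoke uniqueness of normal forms in the first-order $\lambda$-calculus with recursion. This calculus is an orthogonal rewrite system: every function name $f$ has exactly one defining equation in $D$, and the left-hand sides $f{\Fam i {x_i}}$ carry only distinct variables as argument patterns, so the rules are left-linear and no two redexes overlap (there are no critical pairs). Orthogonality yields confluence (Church--Rosser), and confluence yields uniqueness of normal forms: if $\floor{\bar t} \rewto^* u$ and $\floor{\bar t} \rewto^* u'$ with both $u,u'$ normal, a common reduct forces $u = u'$. Applied to the normal form $\floor{\bar v}$ exhibited above, this shows $\floor{\bar v}$ is \emph{the} $\beta$-normal form of $\floor{\bar t}$.

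The only ingredient not delivered directly by the two Facts is the appeal to confluence, so that is where I would expect any subtlety to reside; but it is a standard consequence of orthogonality of the recursive-definition rewrite system and poses no real difficulty. I would also remark that the \emph{reachable} hypothesis on $\bar t$ plays no essential role in this particular argument — the second Fact holds for arbitrary annotated terms — and is carried along here only for uniformity with the deeper correctness and completeness results, where reachability genuinely matters.
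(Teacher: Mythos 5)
Your proof is correct and follows exactly the route the paper intends: the paper states this as an immediate corollary of the two preceding Facts (step-by-step erasure simulation plus the characterization of annotated normal forms), leaving the argument implicit. The only ingredient you supply beyond what the paper records is the explicit appeal to confluence of the orthogonal first-order system to upgrade ``a normal form'' to ``the normal form,'' which is the right (and standard) justification; your observation that reachability is not actually needed here is also accurate.
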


\subsection{On-The-Fly Termination: Correctness}

We shall now prove \emph{correctness} of our termination-monitoring
approach, that is, that there exist no infinite reduction traces for
the annotated reduction $\bar t \rewto \bar t'$.

We found this to be a non-trivial result, it took us several attempts
to get right. Our most convincing proof is formulated in a more
general setting of well-founded rewriting on $n$-ary trees.

\subsubsection{$n$-ary Trees}

\begin{definition}[Open $n$-ary trees] Given two sets $\param{N}$
  (node constructors) and $\param{V}$ (variables), we define the open
  $n$-ary trees $a \in \TreeSet{\param{N}}{\param{V}}$ by the
  following grammar:
  \begin{mathpar}
    \begin{array}{lcl@{\qquad\qquad}l}
      \TreeSet{\param N}{\param V} \ni a
      & \bnfeq & \node c {\Fam {i \in [0; n]} {a_i}} & (c \in \param{N}) \\
      & \bnfor & \var x & (x \in \param{V}) \\
    \end{array}
  \end{mathpar}
\end{definition}
A tree node is either a node constructor $c \in \param{N}$ followed by
an arbitrary (finite) number of sub-trees, or a variable
$x \in \param{V}$.

\begin{definition}[$\join$]
Open trees have a monadic structure, one can define a $\join$ operation:
\begin{mathpar}
  \join : \TreeSet{\param N}{\param V \uplus \TreeSet{\param N}{\param V}} \to \TreeSet{\param N}{\param V}
\\
  \infer
  {~}
  {\join (\node c {\Fam i {a_i}}) \eqdef \node c {\Fam i {\join(a_i)}}}

  \infer
  {x \in \param V}
  {\join (\var x) \eqdef \var x}

  \infer
  {x \in \TreeSet{\param N}{\param V}}
  {\join (\var x) \eqdef x}
\end{mathpar}
\end{definition}

\begin{definition}[$\pi$, $\atpath \pi b a$, $\SubtreeSet{a}$,
  $\pi . \pi'$, $\pi' \geq \pi$] We define a \emph{path} $\pi$, the
  relation $\atpath \pi b a$ if the path $\pi$ relates the sub-tree
  $b$ to a parent tree $a$, and the multiset $\SubtreeSet a$ of
  sub-trees of $a$.
  \begin{mathpar}
    \begin{array}{llll}
      \pi & \bnfeq & \emptyset & \\
      & \bnfor & \pi . i & \qquad (i \in \mathbb{N}) \\
    \end{array}

    \infer
    {~}
    {\atpath \emptyset a a}

    \infer
    {\atpath \pi b {a_i}}
    {\atpath {\pi . i} b {\node c {\Fam i {a_i}}}}

    \SubtreeSet a \eqdef \dbraces{ b \mid \exists \pi,\ \atpath \pi b a }
  \end{mathpar}

  Notice that the relation $\atpath \pi b a$ is compatible
  with the usual notion of path concatenation $\pi . \pi'$: if
  $\atpath {\pi_{12}} {a_1} {a_2}$ and
  $\atpath {\pi_{23}} {a_2} {a_3}$, then
  $\atpath {\pi_{12} . \pi_{23}} {a_1} {a_3}$.

  Finally, we write that $\pi'$ is an \emph{extension} of $\pi$, using the notation $\pi' \geq \pi$,
  if and only if $\pi'$ is of the form $\pi'' . \pi$.
  \begin{mathline}
    \pi' \geq \pi

    \eqdef

    \exists \pi'',\ \pi' = \pi'' . \pi
  \end{mathline}
\end{definition}

\begin{definition}[$\pathsubs a \pi b$]
  $\pathsubs a \pi b$ is the tree $a$ where the $\pi$-rooted sub-tree is replaced by $b$:
  \begin{mathpar}
    \infer
    {~}
    {\pathsubs a \emptyset b \eqdef b}

    \infer
    {\pathsubs {a_j} \pi b = a'}
    {\pathsubs {\node c {\Fam {i \in I} {a_i}}} {\pi . j} b
     \eqdef
     \node c {\left(\Fam {i \in I \setminus \{j\}} {a_i}, (j \mapsto a')\right)}}
  \end{mathpar}
\end{definition}

We use $n$-ary trees to model our annotated $\lambda$-terms.

\begin{example}[Trees of annotated terms]
  \label{ex:trees-of-annotated-terms}
  Let us consider our annotated terms $\bar t$ as $n$-ary trees
  $\tree(\bar t)$; these trees belong to the set
  $\TreeSet{\param T}{\emptyset}$ for a set of tree node constructors
  $\param T$ defined by the following grammar:
  \begin{mathpar}
    \param T \ni c \bnfeq x \bnfor \loc f l
  \\
    \tree(x)
    \eqdef
    \node {x} \emptyset

    \tree (\loc {f {\Fam i {\bar t_i}}} l)
    \eqdef
    \node {\loc f l} {\Fam i {\tree (\bar t_i)}}
  \end{mathpar}

  For example the annotated term $\loc {f (\loc {g ()} {l_g}, x)} {l_f}$
  becomes the tree
  \begin{mathline}
    \node {\loc f {l_f}} {\left(
        0 \mapsto \node {\loc g {l_g}} \emptyset,
        1 \mapsto \node {x} \emptyset
      \right)}
  \end{mathline}
\end{example}

\subsubsection{Tree Expansion}

We now define an expansion process for our $n$-ary trees, that
generalizes our reduction of terms. Informally, we consider the following expansion process:
  \begin{itemize}
  \item A node (in arbitrary position) in the tree is chosen to be expanded.
    We call \emph{expanded sub-tree} the sub-tree rooted at this node.;
  \item The expansion process replaces this expanded sub-tree by another
    sub-tree containing:
    \begin{itemize}
    \item zero, or several \emph{new} nodes, that do not correspond
      to nodes from the previous tree, and
    \item an arbitrary number of copies of some sub-trees of the expanded sub-tree.
    \end{itemize}
  \end{itemize}
  A sub-tree of the expanded sub-tree can disappear during an
  expansion, or be duplicated into one or several sub-trees, placed
  inside the new sub-tree that replaces the expanded sub-tree.

Formally:
\begin{definition}[Tree expansion]
  A \emph{head tree expansion} $b \hrewto b'$ replaces
  a tree $b$ by another tree $b'$, some sub-trees of which are
  strict sub-trees of $b$ -- not $b$ itself.
  \begin{mathpar}
    \infer
    {b \in \TreeSet{\param N}{\param V}
     \\
     b_s \in \TreeSet{\param N}{\param V \uplus (\SubtreeSet b \setminus \{b\})}}
    {b \hrewto {\join(b_s)}}
  \end{mathpar}

  A \emph{tree expansion} $a \rewto a'$ is a head tree expansion in a sub-tree of $a$.
  \begin{mathpar}
    \infer
    {\atpath {\pi_b} b a
      \\
     b \hrewto b'}
    {a \rewto \pathsubs a {\pi_b} {b'}}
  \end{mathpar}
\end{definition}

\begin{figure}[!htb]
\begin{tikzpicture}[->,>=stealth',level/.style={sibling distance = 1.4cm/#1,
  level distance = 0.7cm}]

\tikzset{
  treenode/.style = {align=center, inner sep=0pt, circle, text centered, font=\sffamily, text width=1em},
  old/.style = {treenode, circle, draw=black},
  new/.style = {treenode, circle, draw=red}
}

\node [old] {$a$}
child{ node [old] {$b$} }
child{ node [old] {$c$}
  child{ node [old] {$d$} }
  child{ node [old] {$e$}
    child{ node [old] {$f$} }
  }
};

\node[xshift=2.5cm,yshift=-1cm] {$\rewto$};

\begin{scope}[xshift=5cm]
\node [old] {$a$}
child{ node [old] {$b$} }
child{ node [new] {$c'$}
  child{ node [old] {$d$} }
  child{ node [old] {$e$}
    child{ node [old] {$f$} }
  }
};
\end{scope}

\begin{scope}[xshift=8cm]
\node [old] {$a$}
child{ node [old] {$b$} }
child{ node [old] {$e$}
  child{ node [old] {$f$} }
};
\end{scope}

\begin{scope}[xshift=11cm]
\node [old] {$a$}
child{ node [old] {$b$} }
child{ node [new] {$c'$}
  child { node [new] {$c''$}
    child{ node [old] {$e$}
      child{ node [old] {$f$} }
    }
    child{ node [old] {$f$} }
  }
  child{ node [old] {$e$}
    child{ node [old] {$f$} }
  }
};
\end{scope}
\end{tikzpicture}

\caption{Tree expansions}
\label{fig:expansions}
\end{figure}

\begin{example}
  Figure~\ref{fig:expansions} shows a tree on the left, and three
  possible expansions of it on the right, all from the node with
  labeled by the constructor $c$, with new nodes shown in red. In the
  first expansion, the node $c$ is replaced by the new node $c'$,
  which carries the children of $c$ (they were neither duplicated
  nor erased). In the second expansion, there is no new node, $c$ has
  been replaced by its sub-tree rooted in $e$ and $d$ was erased. In
  the third expansion, there are two new nodes $c'$ and $c''$, two
  copies of the sub-tree rooted in $e$, one additional copy of the
  sub-tree rooted in $f$, and $d$ is erased: the sub-tree of
  constructor $c$ has been replaced by a sub-tree of the form
  $c'(c''(x,y),x)$, where the variable $x$ is the sub-sub-tree $e(f)$,
  and $y$ is the sub-sub-tree $f$.
\end{example}

\begin{fact}
  The tree expansion process generalizes $\beta$-reduction of our
  annotated first-order terms: for any definition environment $D$,
  \begin{mathpar}
    \letrecin D {\bar t} \rewto \letrecin D {\bar t'}

    \implies

    \tree(\bar t) \rewto \tree(\bar t')
  \end{mathpar}
  In particular, any infinite reduction sequence of annotated terms
  could be mapped to an infinite sequence of tree expansions on the
  corresponding trees.
\end{fact}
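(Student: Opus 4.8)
The plan is to prove the single-step implication and then obtain the statement about infinite sequences for free, since mapping an infinite annotated reduction to an infinite sequence of tree expansions is just the pointwise application of the single-step correspondence. So I would focus on showing that one annotated $\beta$-reduction step induces one tree expansion.

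First I would establish the correspondence between annotated reduction contexts and tree paths. By induction on $C[\hole]$, the empty context corresponds to the empty path $\emptyset$, and the context $\loc{f(\Fam i {\bar t_i}, C'[\hole], \Fam j {\bar t_j})}{l}$, whose hole sits in position $k$, corresponds to extending the path of $C'[\hole]$ by the index $k$. This yields a path $\pi_b$ with $\atpath{\pi_b}{b}{\tree(\bar t)}$, where $b = \tree(\loc{f \Fam i {\bar t_i}}{l})$ is the tree of the redex, and moreover $\tree(\bar t') = \pathsubs{\tree(\bar t)}{\pi_b}{b'}$, where $b'$ is the tree of the contractum. By the definition of tree expansion (a head expansion performed at a path), it then suffices to exhibit the head expansion $b \hrewto b'$.

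For the head case, note that $b = \node{\loc f l}{\Fam i {\tree(\bar t_i)}}$, so each argument tree $\tree(\bar t_i)$ is a proper child of $b$ and hence lies in $\SubtreeSet b \setminus \{b\}$. Writing $\sigma = \Fam i {x_i \leftarrow \bar t_i}$, the contractum is $\locsubs{t'}{\sigma}{l,f}$, where $t'$ is the body of $f$. I would construct the witness $b_s \in \TreeSet{\param T}{\SubtreeSet b \setminus \{b\}}$ by recursion on $t'$: a variable $x_i$ becomes the leaf $\var{\tree(\bar t_i)}$, and an application $g \Fam k {s_k}$ becomes $\node{\loc g {l,f}}{\Fam k {(b_s)_k}}$, reusing the recursively built subtrees $(b_s)_k$. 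The crux is a short structural induction on $t'$ showing $\join(b_s) = \tree(\locsubs{t'}{\sigma}{l,f})$: the base case uses $\join(\var{\tree(\bar t_i)}) = \tree(\bar t_i) = \tree(\sigma(x_i))$, and the application case uses that $\join$ commutes with node construction and that $\locsubs{\cdot}{\sigma}{l,f}$ annotates every application node with the trace $l,f$. Since the variables of $b_s$ are drawn from $\SubtreeSet b \setminus \{b\}$, the head-expansion rule applies and delivers $b \hrewto \join(b_s) = b'$, as required.

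The fact is essentially bookkeeping, so I do not expect a deep obstacle; the point demanding the most care is matching the possibly non-linear use of the argument variables $x_i$ in the body $t'$ against the duplication and erasure of argument subtrees that the expansion framework permits. Concretely, one must check that the variables of $b_s$ genuinely range over the strict subtrees $\SubtreeSet b \setminus \{b\}$ rather than over $b$ itself --- which holds because every $\tree(\bar t_i)$ is a proper child of the redex node --- and that the multiset nature of $\SubtreeSet b$ causes no ambiguity, which is fine because $\join$ substitutes tree values and so is insensitive to which of several equal subtrees a given variable is taken to name.
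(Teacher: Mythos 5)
Your proof is correct, and it is essentially the intended argument: the paper states this as an unproved \emph{Fact} (the adjacent proof of the ``reduction of annotated terms is measured'' lemma relies on exactly the same correspondence, sketched at a higher level), and your write-up just supplies the routine details -- the context-to-path translation and the witness $b_s$ built by recursion on the function body with $\join(b_s) = \tree(\locsubs{t'}{\sigma}{l,f})$. The two points you flag as needing care (the argument trees being \emph{strict} subtrees of the redex node, and duplication/erasure of arguments being absorbed by the expansion framework) are indeed the only places where anything could go wrong, and you handle them correctly.
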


\begin{remark}
  $\beta$-reductions in our annotated terms replace variables $x_i$ by
  function parameters $\bar t_i$. In terms of tree, this corresponds
  to a subset of expansions where the erased/duplicated sub-trees are
  always exactly the immediate children of the expanded node, not
  sub-trees of arbitrary depth. We believe that the ability to
  manipulate arbitrary sub-trees, which does not make the proof more
  difficult, is in fact useful to express expansion processes outside
  our model. In particular, it would be useful to model \emph{type
    constraints} in the OCaml programming language:
  \begin{lstlisting}
  type 'a t = ('b * bool) constraint 'a = 'b * 'b * int
  \end{lstlisting}
  which effectively allow to designate arbitrary sub-expressions of
  the type parameters and use them directly in an abbreviation
  definition. In rewriting systems it is also not uncommon to have
  complex left-hand-side that match on their arguments in-depth.
\end{remark}

\begin{definition}[$\CTreeSet{\param N}$] Variables $\param{V}$ were
  used to define expansion by substitution. In the rest of this
  document we will mostly work with \emph{closed} $n$-ary trees in
  $\TreeSet{\param N}{\emptyset}$, which we write as just
  $\CTreeSet{\param N}$.
\end{definition}

\begin{definition}[$\constr{a}$]
  For $a \in \CTreeSet{\param N}$ we define $\constr{a}$ as the head
  constructor of $a$ (which cannot be a variable as $a$ is in
  $\TreeSet{\param N}{\emptyset}$):
\begin{mathline}
  \constr{\node c {\Fam i {a_i}}} \eqdef c
\end{mathline}
\end{definition}

\subsubsection{Measured Expansions}

\begin{definition}[Measure] For a set $A$, we call \emph{measure on A}
  a function $m_A : A \to M$ into a set $M$ equipped with
  a well-founded partial order $(<_M)$.\footnote{A (strict) partial
    order $(<)$ is well-founded if it does not admit any infinite
    descending chain, any sequence $\Fam {i \in \mathbb{N}} {m_i}$
    with $\forall i \in \mathbb{N},\ m_i > m_{i+1}$.}.
\end{definition}

\begin{definition}[Measured expansion]
  Given a measure $m_{\param N}$ on $\param N$, we write that an expansion
  $a \rewto a' \in \CTreeSet{\param N}$ is \emph{measured} when the
  new nodes of the expansion have strictly smaller measure than the
  expanded node. Formally, an expansion
  \begin{mathpar}
    \infer
      {\atpath {\pi_b} b a
        \\
       \infer*
       {b_s \in \TreeSet{\param N}{\SubtreeSet{a} \setminus \{a\}}}
       {b \hrewto {\join(b_s)}}}
      {a \rewto \pathsubs a \pi {\join(b_s)}}
  \end{mathpar}
  is measured if the nodes of $b_s$ are strictly below those of $b$:
  \begin{mathline}
    \forall {\pi},\ \atpath {\pi} {\node c \_} {b_s}
    \implies
    m_{\param N}(c) < m_{\param N}(\constr{b})
  \end{mathline}
\end{definition}

\begin{lemma}
  \label{lem:annotated-terms-measured}
  The reduction of annotated terms is measured.
\end{lemma}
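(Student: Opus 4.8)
The plan is to exhibit an explicit measure $m_{\param T}$ on the node constructors $\param T$ and then verify the defining inequality of a measured expansion directly against the single reduction rule, extracting the required strict decrease from its side-condition $f \notin l$.

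First I would fix the measure. Because the recursive environment $D$ declares only finitely many function names, and a trace is by construction a duplicate-free list of such names, there is a maximal trace $L$ enumerating all of them, and every trace $l$ satisfies $0 \le \mathsf{length}(l) \le \mathsf{length}(L)$. Taking $M = \mathbb{N}$ with its usual well-founded order, I set
\[
  m_{\param T}(\loc f l) \eqdef \mathsf{length}(L) - \mathsf{length}(l),
  \qquad
  m_{\param T}(x) \eqdef 0,
\]
so that a strictly longer trace yields a strictly smaller measure. The value chosen for variable nodes is immaterial, since (as the next step shows) they never arise as new nodes.

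Next I would read a single annotated step through the $\tree(\cdot)$ encoding. By the Fact that tree expansion generalizes annotated reduction, the step
\[
  \letrecin D {C[\loc {f {\Fam i {\bar t_i}}} l]}
  \;\rewto\;
  \letrecin D {C[\locsubs {t'} {\Fam i {x_i \leftarrow \bar t_i}} {l,f}]}
\]
is a head expansion at the subtree $b = \tree(\loc {f {\Fam i {\bar t_i}}} l)$, whose head constructor is $\constr b = \loc f l$, replacing it by $\tree(\locsubs {t'} {\Fam i {x_i \leftarrow \bar t_i}} {l,f})$. The heart of the argument is to pin down which nodes of this replacement are genuinely \emph{new}. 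Reading off the two clauses defining $\locsubs{\cdot}{\cdot}{\cdot}$: each occurrence of a parameter $x_i$ in the body $t'$ is replaced by the argument $\bar t_i$, a strict subtree of $b$ and hence an \emph{old} subtree; while each function-call subterm of $t'$ is re-annotated with one and the same trace $l, f$. Consequently every new node is a function node of the form $\loc {f'} {l, f}$, and no copied or erased argument ever counts as new.

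Finally I would discharge the inequality. The redex fires only when $f \notin l$, so $l, f$ is again a valid trace with $\mathsf{length}(l,f) = \mathsf{length}(l) + 1$; hence every new node satisfies
\[
  m_{\param T}(\loc {f'} {l,f})
  = \mathsf{length}(L) - \mathsf{length}(l) - 1
  < \mathsf{length}(L) - \mathsf{length}(l)
  = m_{\param T}(\constr b),
\]
which is exactly the condition for the expansion to be measured. The only delicate point, and the step I would be most careful about, is the bookkeeping of the previous paragraph: one must be scrupulous in attributing duplicated or discarded arguments $\bar t_i$ to the old subtrees rather than to the new nodes, since those arguments may themselves carry short traces (large measure). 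This separation between new nodes and copied old subtrees is precisely what the measured-expansion framework is designed to exploit; once it is respected the arithmetic is immediate and independent of the reduction strategy.
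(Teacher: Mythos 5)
Your proposal is correct and follows essentially the same route as the paper's proof: translate the single reduction rule into a head tree expansion, observe that the copied arguments $\bar t_i$ are old subtrees while every genuinely new node carries the trace $l,f$, and conclude from the side-condition $f \notin l$. The only cosmetic difference is that you measure traces by the length difference $\mathsf{length}(L) - \mathsf{length}(l)$ in $\mathbb{N}$ (as in the paper's informal sketch) whereas the appendix proof orders traces by anti-inclusion with a minimal element for variables; both orders are well-founded and yield the same strict decrease.
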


\begin{proof}
  Given a fixed environment $D$ of recursive definitions, we consider
  the subset of tree expansions of the form
  $\tree (\bar t) \rewto \tree (\bar t')$ coming from a reduction of
  annotated terms $\letrecin D {\bar t} \rewto \letrecin D {\bar t'}$.

  Such a reduction of annotated terms is of the form
  \begin{mathpar}
    \infer
    {(f {\Fam i {x_i}} = t') \in D \\ f \notin l}
    {
      \letrecin D C[\loc {f {\Fam i {\bar t_i}}} l]
      \rewto
      \letrecin D C[\locsubs {t'} {\Fam i {x_i \leftarrow \bar t_i}} {l,f}]
    }
  \end{mathpar}

  As a measure of an annotated node $\loc \dots l$, we use the trace
  $l$ ordered by anti-inclusion: $l \leq l'$ if and only if
  $l \supseteq l'$. Our traces may not contain duplicates, and the
  number of function names in $D$ is finite; its cardinality is the
  maximal size of any trace, so the inclusion order is
  well-founded. We extend this order with a minimal element for
  all free variables $x$, that cannot reduce further.

  The replacement of $\loc {f {\Fam i {\bar t_i}}} l$ by
  $\locsubs {t'} {\Fam i {x_i \leftarrow \bar t_i}} {l,f}$ corresponds to
  a measured expansion, as the new nodes are (variables or) annotated
  with the trace $l, f$ which is strictly smaller than $l$ for our
  order.
\end{proof}

\subsection{A Measure on Trees Meant for Measured Expansions}

To prove our termination result, we use the measure $m_{\param N}$ on
nodes to build a measure on trees into a well-chosen set with
a well-founded partial order, such that measured expansions strictly
decrease the tree measure.

Given a set $(S, <_S)$ with a well-founded partial order, we can build
a well-founded partial order $(<_{\Set(S)})$ on the set $\Set(S)$ of
sets of elements of $S$, and a well-founded partial order
$(<_{\MSet(S)})$ on the set $\MSet(S)$ of multisets of elements of
$S$. The multiset order is attributed to
\href{https://en.wikipedia.org/wiki/Dershowitz-Manna_ordering}{Dershowitz,
  Manna, Huet and Oppen}).
\begin{itemize}
\item $S_1 <_{\Set(S)} S_2$ if, for any $a \in S_1$, there exists $b \in S_2$ such that $a <_S b$.
\item $M_1 <_{\MSet(S)} M_2$ if $M_1$ is of the form $M + N_1$
  and $M_2$ is of the form $M + N_2$ with
  $\msettoset{N_1} <_{\Set(S)} \msettoset{N_2}$.
\end{itemize}

Given a measure $m_{\param N} : \param N \to (O, <_O)$ on the node
constructors of our tree (where $(<_O)$ is a well-founded
partial order), we define for any closed tree
$a \in \CTreeSet{\param N}$ a measure
$m_{\Node a} : \SubtreeSet{a} \to (\MSet(O), <_{\MSet(O)})$ on the
nodes of $a$, by considering for each node the path from this node
(included) to the root of the tree, seen as a multiset of nodes.

Formally, we define our measure $m_{\Node a}(b)$ into $\MSet(O)$, for
each node $b \in \SubtreeSet{a}$, as the multiset sum of the measure
of the constructor of $b$ and a measure $m_{\Path{a}}$ also into
$\MSet(O)$ defined by induction on the predicate $\atpath \pi b a$.
\begin{mathpar}
  m_{\Node a} (b)
  \quad\eqdef\quad
  \dbraces {m_{\param N} (\constr b)} + m_{\Path{a}}(\atpath \pi b a)

  m_{\Path{a}} \left(
    \infer
    {}
    {\atpath \emptyset a a}
  \right) \eqdef \emptyset

  m_{\Path{\node c {\Fam i {a_i}}}} \left(
  \infer
  {\atpath \pi b {a_i}}
  {\atpath {\pi . i} b {\node c {\Fam i {a_i}}}}
  \right)
  \eqdef
    \dbraces{m_{\param N}(c)}
    +
    m_{\Path{a_i}}\left(\atpath \pi b {a_i} \right)
\end{mathpar}

Note that the measure of the root $a$ is not taken into account by our
path measure: we define $m_{\Path{a}} (\atpath \emptyset a a)$ as
$\emptyset$ rather than $\dbraces{m_{\param N}(a)}$. This is essential
for the measure to be compatible with path concatenation:

\begin{fact}
\begin{mathpar}
  \atpath {\pi_{12}} {a_1} {a_2}
  \quad\wedge\quad
  \atpath {\pi_{23}} {a_2} {a_3}
  \\
  \implies

  m_{\Path {a_3}}(\atpath {\pi_{12}.\pi_{23}} {a_1} {a_3})
  \quad=\quad
  m_{\Path {a_2}}(\atpath {\pi_{12}} {a_1} {a_2})
  +
  m_{\Path {a_3}}(\atpath {\pi_{23}} {a_2} {a_3})
\end{mathpar}
\end{fact}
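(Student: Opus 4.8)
The plan is to prove the identity by induction on the derivation of $\atpath {\pi_{23}} {a_2} {a_3}$ --- equivalently, by peeling node constructors off the root $a_3$ --- which is exactly the recursion used to define $m_{\Path{\cdot}}$. The conceptual content is that $m_{\Path a}(\atpath \pi b a)$ is the multiset of constructor measures along the path from the root of $a$ down to $b$, counting the root but \emph{not} the target $b$. The gluing node $a_2$ therefore appears exactly once on the right-hand side: it is excluded (as the target) from $m_{\Path {a_3}}(\atpath {\pi_{23}} {a_2} {a_3})$, but included (as the root of its own subtree) in $m_{\Path {a_2}}(\atpath {\pi_{12}} {a_1} {a_2})$. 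This is precisely why the definition sets $m_{\Path a}(\atpath \emptyset a a) \eqdef \emptyset$ rather than $\dbraces{m_{\param N}(\constr a)}$, as flagged in the remark just above the statement; the additive decomposition would otherwise either omit or double-count $a_2$.

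Before the induction I would record the routine list fact that $\pi . i$ is the append of the one-element path $i$ and that path concatenation is associative, so that in the step case $\pi_{12} . (\pi_{23}' . j) = (\pi_{12} . \pi_{23}') . j$. The base case is $\pi_{23} = \emptyset$, where $a_2 = a_3$ and $\pi_{12} . \emptyset = \pi_{12}$; then $m_{\Path {a_3}}(\atpath \emptyset {a_2} {a_3}) = \emptyset$ by definition, and both sides reduce to $m_{\Path {a_2}}(\atpath {\pi_{12}} {a_1} {a_2})$.

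For the step case, write $\pi_{23} = \pi_{23}' . j$ with $a_3 = \node c {\Fam i {a_{3,i}}}$ and premise $\atpath {\pi_{23}'} {a_2} {a_{3,j}}$. Unfolding the definition of $m_{\Path{\cdot}}$ at the root $c$ gives, on the one hand, $m_{\Path {a_3}}(\atpath {(\pi_{12} . \pi_{23}') . j} {a_1} {a_3}) = \dbraces{m_{\param N}(c)} + m_{\Path {a_{3,j}}}(\atpath {\pi_{12} . \pi_{23}'} {a_1} {a_{3,j}})$, and, on the other hand, $m_{\Path {a_3}}(\atpath {\pi_{23}' . j} {a_2} {a_3}) = \dbraces{m_{\param N}(c)} + m_{\Path {a_{3,j}}}(\atpath {\pi_{23}'} {a_2} {a_{3,j}})$. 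Applying the induction hypothesis to the subtree $a_{3,j}$ rewrites $m_{\Path {a_{3,j}}}(\atpath {\pi_{12} . \pi_{23}'} {a_1} {a_{3,j}})$ as $m_{\Path {a_2}}(\atpath {\pi_{12}} {a_1} {a_2}) + m_{\Path {a_{3,j}}}(\atpath {\pi_{23}'} {a_2} {a_{3,j}})$; substituting and using commutativity and associativity of multiset sum makes both sides coincide.

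I do not expect a genuine obstacle here: the result is a bookkeeping lemma and the induction is forced by the shape of the definitions. The only points that need care are orienting the path-concatenation convention correctly (confirming that the rightmost index of a path is the step taken nearest the root, so that $\pi_{23}$ sits on the ``inner''/right side of $\pi_{12} . \pi_{23}$) and checking that the single constructor measure $\dbraces{m_{\param N}(c)}$ shed at the root lines up identically on both sides. Getting the $a_2$ accounting right --- one inclusion, one exclusion --- is the whole point of the lemma and the reason it is worth isolating before the main termination argument, where it will let path measures be composed freely.
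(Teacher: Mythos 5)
Your proof is correct. The paper states this as an unproved \emph{fact} (immediately after the remark motivating the base case $m_{\Path{a}}(\atpath \emptyset a a) \eqdef \emptyset$), so there is no proof to compare against; your induction on the derivation of $\atpath {\pi_{23}} {a_2} {a_3}$ is exactly the argument the definitions force, and you correctly identify both the orientation of path concatenation (rightmost index nearest the root) and the single-counting of the gluing node $a_2$ as the only points requiring care.
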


From this measure on nodes
$m_{\Node{a}} : \SubtreeSet{a} \to (\MSet(O), <_{\MSet(O)})$
we now define a measure on trees
$m_\Tree : \CTreeSet{\param N} \to (\MSet(\MSet(O)), <_{\MSet(\MSet(O))})$
by considering for each tree the multiset of measures of its nodes.
Our trees are thus measured as multisets of multisets of node
constructors, themselves equipped with a well-founded partial order.
\begin{mathline}
  m_\Tree(a) \eqdef \dbraces {m_{\Node a}(b) \mid b \in \SubtreeSet a }
\end{mathline}

\subsection{Termination of Measured Expansions}

\begin{theorem}[Termination]\label{thm:termination}
  If $\param M$ has a measure $m_{\param N}$, then the tree expansions
  that are measured for $m_{\param N}$ are strictly decreasing for the
  order $m_\Tree$. In particular, there are no infinite sequences of
  measured expansions.
\end{theorem}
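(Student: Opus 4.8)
The plan is to prove the stronger local statement that a single measured expansion $a \rewto a'$ strictly decreases the tree measure, i.e.\ $m_\Tree(a') <_{\MSet(\MSet(O))} m_\Tree(a)$; strong normalization, and hence the ``in particular'' clause, then follows from well-foundedness of the Dershowitz--Manna order $<_{\MSet(\MSet(O))}$. Fix the expansion: it replaces the subtree $b$ at some path $\pi_b$ (so $a' = \pathsubs a {\pi_b} {b'}$ with $b \hrewto b'$). First I would \emph{localize} the comparison. Write $\rho = m_{\Path a}(\atpath {\pi_b} b a)$ for the path-measure of the context above $b$. Every node $y$ lying outside the $\pi_b$-rooted subtree keeps exactly the same node-measure in $a'$, since replacing that subtree alters nothing along $y$'s path to the root; these nodes contribute a common multiset $M$ to both $m_\Tree(a)$ and $m_\Tree(a')$. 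For a node $x$ \emph{inside} the subtree, the path-concatenation fact gives $m_{\Node a}(x) = m_{\Node b}(x) + \rho$ and likewise $m_{\Node {a'}}(x') = m_{\Node {b'}}(x') + \rho$. Thus $m_\Tree(a) = M + N_2$ and $m_\Tree(a') = M + N_1$ with $N_2 = \dbraces{m_{\Node b}(x) + \rho \mid x \in \SubtreeSet b}$ and $N_1 = \dbraces{m_{\Node{b'}}(x') + \rho \mid x' \in \SubtreeSet{b'}}$, and by the definition of the order it suffices to establish $\msettoset{N_1} <_{\Set} \msettoset{N_2}$: every element of $N_1$ is strictly below some element of $N_2$.

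The heart of the argument is then a pointwise comparison of node-measures inside $b$ and $b'$. The key observation is that $m_{\Path b}$ accumulates the constructors from a node up to \emph{and including} the root of $b$, so \emph{every} node $x \in \SubtreeSet b$ satisfies $m_{\param N}(\constr b) \in m_{\Node b}(x)$; in particular the root gives the singleton $m_{\Node b}(b) = \dbraces{m_{\param N}(\constr b)}$. I would split the nodes $x'$ of $b' = \join(b_s)$ into two kinds. If $x'$ is a \emph{new} node (a constructor of $b_s$), all of its ancestors in $b'$ are new nodes too, so $m_{\Node{b'}}(x')$ is a multiset of measures all strictly below $m_{\param N}(\constr b)$, whence $m_{\Node{b'}}(x') <_{\MSet(O)} \dbraces{m_{\param N}(\constr b)} = m_{\Node b}(b)$. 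If instead $x'$ is the copy of a node $x$ of a substituted strict subtree $s$ of $b$, then $m_{\Node{b'}}(x') = m_{\Node s}(x) + N_s$, where $N_s$ is the multiset of (new) constructors threaded above that copy, all strictly below $m_{\param N}(\constr b)$; whereas the original satisfies $m_{\Node b}(x) = m_{\Node s}(x) + P_s$ with $P_s = m_{\Path b}(\atpath {\pi_s} s b)$ containing $m_{\param N}(\constr b)$. Since $N_s <_{\MSet(O)} \dbraces{m_{\param N}(\constr b)} \le_{\MSet(O)} P_s$, compatibility of the order with multiset sum yields $m_{\Node{b'}}(x') <_{\MSet(O)} m_{\Node b}(x)$. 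Adding the common $\rho$ preserves each strict inequality, so every element of $N_1$ is dominated by one of $N_2$, as required.

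The step I expect to be the main obstacle is exactly the one the measure is engineered to survive: an expansion may \emph{duplicate} a subtree of $b$ (the third case in Figure~\ref{fig:expansions}), so there is no measure-preserving injection from the nodes of $b'$ into those of $b$, and a naive node count or head-trace measure would increase. The resolution is structural -- because each node is weighed by the \emph{whole path to the root} rather than by its own trace, duplicating a subtree merely produces several copies, each sitting beneath a strictly smaller collection of new constructors, and the Dershowitz--Manna order tolerates replacing one element by arbitrarily many strictly smaller ones. Along the way I would record the standard facts the argument consumes: compatibility of $<_{\MSet}$ with multiset sum, the mixed transitivity $N <_{\MSet} N' \le_{\MSet} N'' \Rightarrow N <_{\MSet} N''$, and well-foundedness of $<_{\MSet(\MSet(O))}$. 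Combined with Lemma~\ref{lem:annotated-terms-measured}, which exhibits annotated reduction as a measured expansion, this establishes correctness: annotated reduction admits no infinite sequence.
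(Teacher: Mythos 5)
Your proposal is correct and follows essentially the same route as the paper's proof: split the node measures into a common part (nodes outside $\pi_b$) and the parts under $b$ and $b'$, then dominate each node of $b'$ by a node of $b$ --- new nodes by the root $b$ itself, copied nodes by their originals, using the fact that the segment of the path through $b$'s root (containing $m_{\param N}(\constr b)$) is replaced by strictly smaller new constructors. Your treatment of the new-node case is if anything slightly more explicit than the paper's, but there is no substantive difference in the argument.
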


\begin{proof}
A measured expansion $a \rewto a'$ is necessarily of the following
form, where a sub-tree $b$ at a path $\pi_b$ in $a$ is replaced by
a sub-tree $\join(b_s)$, where the constructors of $b_s$ are of
strictly smaller measure than $b$:

\begin{mathpar}
  \infer
  {b_s \in \TreeSet{\param N}{\SubtreeSet{a} \setminus \{a\}}}
  {\infer
    {\atpath {\pi_b} b a
      \\ b \hrewto {\join(b_s)}}
    {a \rewto \pathsubs a {\pi_b} {\join(b_s)} = a'}}

  \forall \pi, \atpath \pi {(\node c \_)} {b_s}
  \implies
  m_{\param N}(c) < m_{\param N}(\constr{b})
\end{mathpar}

We have to show that $m_\Tree(a) > m_\Tree(a')$.

\paragraph{Nodes inside or outside $\pi_b$}

We can partition the paths $\pi$ valid inside $a$ or $a'$ in two
disjoint sets:
\begin{itemize}
\item the paths to nodes \emph{inside} $\pi_b$, that is the paths
  $\pi \geq \pi_b$ that extend $\pi_b$; they denote a sub-tree of $b$
  (in $a$) or a sub-tree of $\join(b_s)$ (in $a'$), and
\item the paths to nodes \emph{outside} $\pi_b$, that is the paths
  $\pi \ngeq \pi_b$ that do not extend $\pi_b$, they reach an
  unrelated part of the tree.
\end{itemize}
\begin{mathpar}
  \begin{array}{ll}
    & m_\Tree(a) \\
    = & \dbraces {m_{\Node a}(b') \mid \exists b' \in \SubtreeSet a } \\
    = & \dbraces {m_{\Node a}(b') \mid \exists b' \exists \pi,\ \atpath \pi {b'} a } \\
    = &
        \dbraces {m_{\Node a}(b') \mid \exists b' \exists \pi \geq \pi_b,\ \atpath \pi {b'} a }
        +
        \dbraces {m_{\Node a}({b'}) \mid \exists b' \exists \pi \ngeq \pi_b,\ \atpath \pi {b'} a }
    \\[1em]

    & m_\Tree(a') \\
    = &
        \dbraces {m_{\Node {a'}}({b'}) \mid \exists b' \exists \pi \geq \pi_b,\ \atpath \pi {b'} {a'} }
        +
        \dbraces {m_{\Node {a'}}({b'}) \mid \exists b' \exists \pi \ngeq \pi_b,\ \atpath \pi {b'} {a'} }
    \\
  \end{array}
\end{mathpar}

The measure $m_{\Node a}(b')$ of a sub-tree $b'$ of $a$ depends only
on the constructors along the path from the root $a$ to the node
$b'$ included. In particular, if the node $b'$ is at a path
$\pi \ngeq \pi_b$ that does not extend $\pi_b$, it is also present
at the same path as a sub-tree of $a'$, where it has the same
measure, as $a'$ and $a$ only differ inside $\pi_b$.
\begin{mathline}
  \dbraces {m_{\Node a}({b'}) \mid \exists b' \exists \pi \ngeq \pi_b,\ \atpath \pi {b'} a }

  =

  \dbraces {m_{\Node {a'}}({b'}) \mid \exists b' \exists \pi \ngeq \pi_b,\ \atpath \pi {b'} {a'} }
\end{mathline}

Recall that our multiset order is defined by inequalities
$M + N_1 <_{\MSet(S)} M + N_2$ with
$\msettoset{N_1} <_{\Set(S)} \msettoset{N_2}$. We use the measures
of the nodes outside $\pi_b$ as the multiset $M$ in that
definition. It remains to prove the set ordering between the
measures of nodes inside $\pi_b$, playing the role of $N_1$ and
$N_2$ in the definition:
\begin{mathline}
  \braces {m_{\Node a}(b') \mid \exists b' \exists \pi \geq \pi_b,\ \atpath \pi {b'} a }

  >_{\Set(\MSet(O))}

  \braces {m_{\Node {a'}}({b'}) \mid \exists b' \exists \pi \geq \pi_b,\ \atpath \pi {b'} {a'} }
\end{mathline}

\paragraph{A classification of paths under $\join(b_s)$}

For each sub-tree under $\pi_b$ in $a'$, that is each sub-tree of
$\join(b_s)$, we have to show that there exists a sub-tree under
$\pi_b$ in $a$, that is a sub-tree of $b$, of strictly larger
measure.

Let us remark (and prove) that a sub-tree $b'$ of $a'$ has a path of
the form $\pi_s . \pi_n . \pi_b$, with, reading right to left from
the root to the leaves:
\begin{itemize}
\item the path $\pi_b$ from $b$ to the root $a$,
\item a path $\pi_n$ (\emph{n}ew node) inside zero, one or several
  ``new'' nodes introduced by the expansion,
\item a path $\pi_s$ (\emph{s}ubtree), possibly empty,
  inside a copy of a strict subtree of $b$.
\end{itemize}
When $\pi_s$ does indeed reach inside a subtree copied from $b$,
we furthermore define a path $\pi_o$ (\emph{o}ld node) from $b$ to the first
old node copied -- otherwise $\pi_v$ is defined to be empty.

For example, in Figure~\ref{fig:expansions}, some paths into the
nodes of the third expansion of $c$ are as follows -- for readability we
write paths by marking the constructor of the node, rather than its
child index:
\begin{itemize}
\item $[a, c']$, which has the decomposition $\pi_b = [a]$, $\pi_n = [c']$,
  $\pi_s = []$ with $\pi_o = []$.
\item $[a, c', c'', f]$, which has the decomposition $\pi_b = [a]$,
  $\pi_n = [c', c'']$, $\pi_s = [f]$ with $\pi_v = [e]$.
\end{itemize}

The presence of $\pi_b$ is by hypothesis, we are classifying the
paths $\pi \geq \pi_b$ which necessarily have $\pi = \pi' . \pi_b$
for some $\pi'$ such that
$\atpath {\pi'} {b'} {\join(b_s)}$. Formally, we just defined how to
split this sub-path $\pi'$ into a pair $(\pi_s, \pi_n)$ by defining
a function $\splitjoin {b_s} {\atpath {\pi'} {b'} {\join(b_s)}}$ by induction
on $b_s$ and inversion on the definition of $\join(b_s)$:
\begin{mathpar}
  \splitjoin{\var x}{
    \infer
    {x \in \TreeSet{\param N}{\param V}}
    {\atpath {\pi''} {b'} \join (\var x) = x}
  } \eqdef (\pi'', \emptyset)

  \splitjoin{\node c \_}{
    \infer
    {~}
    {\atpath \emptyset {b'} {\node c \_ = b'}}
  } \eqdef (\emptyset, \emptyset)

  \begin{array}{r}
  \splitjoin{\node c {\Fam i {a_i}}}{
    \infer
    {\atpath {\pi''} {b'} {\join (a_i)}}
    {\atpath {\pi'' . i} {b'} {\join (\node c {\Fam i {a_i}})}}
  } \eqdef (\pi''_s, \pi''_n . i)
  \\ \quad\text{where}\quad (\pi''_s, \pi''_n) = \splitjoin {a_i} {\atpath {\pi''} {b'} {\join (a_i)}}
  \end{array}
\end{mathpar}

By definition we have $\pi_s.\pi_n = \pi'$, so $\atpath {\pi_s . \pi_n} {b'} {\join(b_s)}$, and we also have
 $\atpath {\pi_s} {b'} {\join(b_s')}$ and $\atpath {\pi_n} {b_s'} {b_s}$ for a sub-tree
$b_s'$ of $b_s$ such that:
\begin{enumerate}
\item[(1)] Either $b'$ is not of the form $\var x$: the path $\pi'$
  stops inside $b_s$ before reaching a node $\var x$. In this case
  $b'$ is a ``new node'' of $b_s$ -- in particular, we know that
  expansion generated at least one new node, otherwise we would have
  $b_s = \var \_$. On have $\pi_s = \emptyset$ by definition of
  $\splitjoin \_ \_$, and we additionally define $\pi_o$ as
  $\emptyset$: there is no path to a corresponding ``old node'' in
  $b$.
\item[(2)] Or the path $\pi'$ reaches a node $b_s' = \var {b''}$ of
  $b_s$: $\pi_n$ is the path between $b_s'$ and $b_s$, and the rest of
  $\pi'$ is a path $\atpath {\pi_s} {b'} {b''}$, possibly empty,
  reaching a strict sub-tree $b''$ of $b$. We then define $\pi_o$ as
  the path of $b''$ in $b$, non-empty as $b''$ is a strict sub-tree of
  $b$.  strict.
\end{enumerate}
We will come back to this distinction between those two cases (1) and
(2) in the rest of the proof.

We shall now conclude the proof by finding a sub-tree of $a$ under
$\pi_b$ (thus inside $b$) whose measure is strictly larger than the
measure of our sub-tree $b'$ of path $\pi_s . \pi_n . \pi_b$ inside
$a'$.

\paragraph{Larger path}

Our candidate sub-tree, that we name $b'_v$, is at the path
$\pi_s . \pi_o . \pi_b$ in $a$.

\paragraph{Case (1)} In the case (1) above
($\pi_s = \pi_o = \emptyset$), $b'$ is a new node, and $b'_o$
is the sub-tree at path $\emptyset . \emptyset . \pi_b$, that is
$\pi_b$: the node $b'_o$ is exactly $b$. We have to show
$m_{\Node{a}}(b) > m_{\Node{a'}}(b')$.
\begin{mathpar}
  \begin{array}{l}
    m_{\Node{a}}(b'_o)
    \\ =
    m_{\Node{a}}(b)
    \\ =
    m_{\Path{a}}(\pi_b) + \dbraces{ m_{\param N}(\constr{b}) }
    \\[1em]
  \end{array}

  ~

  \begin{array}{l}
    m_{\Node{a'}}(b')
    \\ =
    m_{\Path{a'}}(\pi_n . \pi_b) + \dbraces{ m_{\param N}(\constr{b'}) }
    \\ =
    m_{\Path{\join(b_s)}}(\pi_n) + m_{\Path{a'}}(\pi_b) + \dbraces{ m_{\param N}(\constr{b'}) }
    \\ =
    m_{\Path{\join(b_s)}}(\pi_n) + m_{\Path{a}}(\pi_b) + \dbraces{ m_{\param N}(\constr{b'}) }
  \end{array}
\end{mathpar}
The expansion from $a$ to $a'$ is measured, so we have
$m_{\param N}(\constr{b}) > m_{\param N}(\constr{b'})$, which concludes the proof in this case.

\paragraph{Case (2)} In the case (2) above, $b'$ is a sub-tree of
$b''$, a strict sub-tree of $b$ copied by the expansion generated by $b_s$.
$\pi_s$ is the path of $b'$ inside $b''$, and $\pi_o$
is the (non-empty) path of $b''$ inside $b$. We have:
\begin{mathpar}
  \begin{array}{ccccccccc}
  m_{\Node{a}}(b'_o)
  & = & m_{\Path{a}}(\pi_b)
  & + & m_{\Path{b}}(\pi_o)
  & + & m_{\Path{b''}}(\pi_s)
  & + & m_{\param N}(b'_o)
  \\[.6em]

  & 
  & (=) 
  & 
  &
  & 
  & (=) 
  & 
  & (=) 
  \\[.6em]

  m_{\Node{a'}}(b')
  & = & m_{\Path{a'}}(\pi_b)
  & + & m_{\Path{b_s}}(\pi_n)
  & + & m_{\Path{b''}}(\pi_s)
  & + & m_{\param N}(b')
  \\
  \end{array}
\end{mathpar}
To conclude we have to show
$m_{\Path{b}}(\pi_o) > m_{\Path{b_s}}(\pi_n)$. This is true because
$\pi_o$ is non-empty, so $m_{\Path{b}}(\pi_o)$ is a sum that contains in particular
the measure of its root $m_{\param N}(b)$, which is strictly larger
than the measure of all the nodes of $b_s$ as the expansion is measured.
\end{proof}

\begin{corollary}
  Annotated reduction is strongly normalizing.
\end{corollary}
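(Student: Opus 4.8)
The plan is to derive strong normalization of annotated reduction as an immediate consequence of the termination result for measured tree expansions (Theorem~\ref{thm:termination}), transported through the tree encoding $\tree(\cdot)$. Essentially all of the substantive work has already been carried out in the Theorem; what remains is to assemble three ingredients already in hand and close the argument with a well-foundedness appeal.

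First I would argue by contradiction. Suppose there were an infinite annotated reduction sequence $\letrecin D {\bar t_0} \rewto \letrecin D {\bar t_1} \rewto \cdots$ in some fixed environment $D$. Applying the Fact that annotated reduction is a special case of tree expansion, each step $\letrecin D {\bar t_n} \rewto \letrecin D {\bar t_{n+1}}$ yields $\tree(\bar t_n) \rewto \tree(\bar t_{n+1})$, so the encoding produces an infinite sequence of tree expansions $\tree(\bar t_0) \rewto \tree(\bar t_1) \rewto \cdots$ living in $\CTreeSet{\param T}$.

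Next I would invoke Lemma~\ref{lem:annotated-terms-measured}, which states that every tree expansion arising from an annotated reduction is \emph{measured} for the node measure $m_{\param N}$ given by the trace ordered by anti-inclusion. This measure is well-founded because traces are duplicate-free lists over the finite set of function names of $D$, so anti-inclusion admits no infinite descending chain (and free variables are assigned a minimal element). Hence every step of the tree sequence above is a measured expansion.

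Finally, Theorem~\ref{thm:termination} guarantees that each measured expansion strictly decreases the tree measure $m_\Tree$, so the putative infinite sequence would yield an infinite strictly-descending chain $m_\Tree(\tree(\bar t_0)) > m_\Tree(\tree(\bar t_1)) > \cdots$ in $\MSet(\MSet(O))$. But this order is well-founded, being the Dershowitz--Manna multiset order built twice over the well-founded trace order, so no such chain can exist -- a contradiction. It follows that there is no infinite annotated reduction sequence, i.e.\ annotated reduction is strongly normalizing. The genuine difficulty, namely that the naive head-trace measure fails to decrease and that one must pass to multisets of multisets of path-measures, has already been absorbed into Theorem~\ref{thm:termination}; I therefore expect no real obstacle at this final step beyond verifying that the encoding $\tree(\cdot)$ and the measure of Lemma~\ref{lem:annotated-terms-measured} line up exactly with the hypotheses of the Theorem.
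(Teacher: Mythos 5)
Your proposal is correct and follows exactly the route the paper intends: the corollary is an immediate consequence of the Fact that annotated reduction embeds into tree expansion via $\tree(\cdot)$, Lemma~\ref{lem:annotated-terms-measured} showing these expansions are measured, and Theorem~\ref{thm:termination} giving strict decrease of the well-founded measure $m_\Tree$. The paper leaves this assembly implicit, and your write-up simply makes it explicit with no deviation in substance.
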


We have proved that our on-the-fly termination checking strategy is
\emph{sound}, in the sense that it does enforce termination: each
reduction sequence terminates in a finite number of step, either
because it reaches a normal form (also normal as an
un-annotated term), or because it is blocked with all reducible
position trying to reduce a function already in the trace.

\subsection{On-The-Fly Termination: Completeness}

In this section we establish \emph{completeness} of our on-the-fly
termination checking strategy, in the sense that it does not prevent
any normalizing term from reaching its normal form.

\begin{theorem}[Weak completeness]
  If a reachable annotated program $\letrecin D {\bar t}$ contains
  a blocked redex $\loc {f (\Fam i {\bar t'_i})} l$ with $f \in l$,
  then its unannotated erasure $\floor {\bar t}$ admits an infinite
  reduction sequence.
\end{theorem}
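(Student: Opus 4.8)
The plan is to read a cycle in the ``call graph'' of $D$ off the trace $l$ of the blocked redex, and then turn that cycle into an infinite unannotated reduction sequence by repeatedly unfolding around it. Write the trace of the blocked redex as $l = g_1, \dots, g_m$ (a duplicate-free list), and say that a function name $a$ \emph{calls} $b$ when the body $t'$ in the definition $(a\,\Fam i {x_i} = t') \in D$ contains a function-call subterm whose head is $b$. The core observation is that the trace of any reachable node records exactly such a chain of calls leading to that node.

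First I would establish the following \textbf{trace invariant}: in any reachable annotated program, every node $\loc {h {\Fam i {\bar t_i}}} {g_1, \dots, g_m}$ satisfies that $g_1$ calls $g_2$, \dots, $g_{m-1}$ calls $g_m$, and $g_m$ calls $h$. I would prove this by induction on the reduction sequence witnessing reachability (Definition~\ref{def:reachable}). The base case is the initial program $\letrecin D {\locsubs t \emptyset \emptyset}$, where every node carries the empty trace and the statement is vacuous. For the inductive step, the reduction expands some node $\loc {f {\Fam i {\bar t_i}}} l$ with $l = g_1, \dots, g_m$ and $f \notin l$, replacing it by $\locsubs {t'} {\Fam i {x_i \leftarrow \bar t_i}} {l, f}$, where $t'$ is the body of $f$. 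Nodes lying outside the rewritten subterm are unchanged and keep their invariant. The nodes inside split into two kinds, matching the two clauses defining $\locsubs t \sigma l$: the function-call nodes arising from the body $t'$ all receive the fresh trace $l, f$, and their heads are precisely the functions that $f$ calls; the nodes coming from the substituted arguments $\bar t_i$ retain their original traces and were already present, so their invariant follows directly from the induction hypothesis. For a new body node with head $h'$ and trace $g_1, \dots, g_m, f$, the induction hypothesis applied to the expanded node $\loc {f {\Fam i {\bar t_i}}} {g_1, \dots, g_m}$ gives the chain ``$g_1$ calls $g_2$, \dots, $g_m$ calls $f$'', and $f$ calls $h'$ by construction, which is exactly the chain required for the new trace.

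Applying the invariant to the blocked redex $\loc {f {\Fam i {\bar t'_i}}} l$ with $l = g_1, \dots, g_m$ yields that $g_m$ calls $f$ and that the $g_i$ call one another along the list. Since $f \in l$, we have $f = g_j$ for some $j$, so the suffix $g_j, g_{j+1}, \dots, g_m$ closes up into a genuine cycle in the call graph: $g_j$ calls $g_{j+1}$, \dots, $g_m$ calls $g_j = f$. Finally I would turn this cycle into divergence of $\floor {\bar t}$. The key step is the claim that if an unannotated term contains a subterm headed by a cycle function $g_k$, then one $\beta$-step produces a term containing a subterm headed by the next cycle function $g_{k+1}$: unfolding the $g_k$-redex substitutes its arguments into the body of $g_k$, and since substitution in this first-order calculus only replaces variable leaves and never alters the head symbol of a function call, the call to $g_{k+1}$ sitting in the body of $g_k$ survives as a $g_{k+1}$-headed subterm. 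As $\floor {\bar t}$ contains the erasure $f {\Fam i {\floor {\bar t'_i}}}$ of the blocked redex, iterating this step around the cycle produces an infinite reduction sequence for $\floor {\bar t}$.

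I expect the main obstacle to be the precise formulation and proof of the trace invariant, specifically getting the bookkeeping right for the two cases of the annotating substitution: new body nodes all inherit the single fresh trace $l, f$ (so that their heads are exactly the callees in the body of $f$), whereas substituted-argument nodes must be argued to keep both their traces \emph{and} their invariant when relocated into a new context. Everything else --- extracting the cycle and running the unfolding --- is then a short syntactic argument; the only point needing care there is the observation that the head symbol of a function call is preserved under first-order substitution, which is what keeps the cycle alive indefinitely.
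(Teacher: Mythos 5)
Your proof is correct, but it takes a genuinely different route from the paper's. The paper works \emph{dynamically}: it rewinds the reduction history of the reachable term to exhibit an earlier redex $f(u)$ whose expansion leads, in at least one step, to a context containing the blocked call; it then generalizes $u$ to a fresh variable (using the first-order restriction to argue that substituting $u$ cannot create new redexes, so the $f$-repeating sequence survives generalization) and closes the loop by iterating the resulting sequence $f(x) \rewto t_f \rewto^\star C''[f(t'')]$ inside itself. You instead work \emph{statically}: your trace invariant turns every annotation into a path in the call graph of $D$ ending at the node's head, so the condition $f \in l$ at a blocked redex immediately yields a cycle in the call graph through $f$, and walking that cycle one unfolding at a time diverges from any term containing a call to $f$. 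Both arguments lean on first-order-ness at the analogous point --- for you, that substitution only fills variable leaves and so preserves the head of every call in the body; for the paper, that it creates no new redexes --- and both correctly break on the closed-higher-order counterexample. Your version is shorter and avoids the paper's somewhat delicate four-case induction for ``rewinding'' one trace entry, at the price of producing an infinite sequence unrelated to the actual reduction history: the paper's construction repeats a suffix of the real reduction sequence, which is exactly what its subsequent strong-completeness argument about the leftmost-outermost strategy relies on, whereas your cycle-walking sequence would not directly support that follow-up. One small point worth making explicit in your write-up: every function name occurring in a trace of a reachable term was expanded at some earlier step and hence is defined in $D$, which is what guarantees that each call you unfold along the cycle is indeed a redex.
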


\begin{proof}
  To make notations lighter, we will assume that $f$ takes a single
  argument. This does not change the proof argument in the least.

  Our proof proceed in three steps.
  \begin{enumerate}
  \item We work backwards from the blocked redex $f (t')$ with a trace
    of the form $l_1, f, l_2$, we rewind the trace, to exhibit
    a previous redex $f(u)$ with trace $l_1$ that reduces in at least
    one step to our term $C'[f(t')]$:
    \begin{mathline}
      f(u) \rewto \locsubs {t_{f}} {x \leftarrow u} {l,f} \rewto^\star C'[f(t')]
    \end{mathline}
    At this point we can erase the trace annotations and consider the
    corresponding unannotated reduction sequence.
  \item We \emph{generalize} this reduction sequence to a valid,
    non-empty reduction sequence that does not depend on $u$, with
    instead a fresh variable $x$:
    \begin{mathline}
      f(x) \rewto t_{f} \rewto^+ C''[f(t'')]
    \end{mathline}
  \item At this point we can build an infinite reduction on unannotated terms
    by repeating this generalized reduction sequence inside its own conclusion:
    \begin{mathline}
      f \floor{t'}
      \rewto^+
      \subs {C''\left[f(t'')\right]} {x \leftarrow t'}
      \rewto^+
      \subs {C''\left[\subs {C''\left[f(t'')\right]} {x \leftarrow t''}\right]} {x \leftarrow t'}
      \rewto^\infty
      \dots
    \end{mathline}
  \end{enumerate}

  \paragraph{Step 1: rewinding the trace}

  We have a blocked redex of the form $\loc {f (t')} {(l_1, f, l_2)}$
  that is reachable, that is (Definition~\ref{def:reachable}), that
  appears a reduction sequence starting from a term
  $\locsubs t \emptyset \emptyset$ with empty traces -- we consider
  the recursive environment $D$ fixed once and for all in this proof.

  We claim that if a reachable term $\bar t$ has a trace of the form
  $l, f'$ for some function call $f'$ (let us assume $f'$ unary again
  for simplicity), then it necessarily arises from the reduction of
  a call of the function $f'$, followed possibly by some further
  reduction steps. More precisely, there exists a non-empty reduction
  sequence:
  \begin{mathline}
    \loc {f'(u)} l \rewto \locsubs {t_{f'}} {x \leftarrow u} {l,f'} \rewto^\star C[\bar t]
  \end{mathline}
  We show this by induction on the reduction sequence reaching
  $\bar t$ (which must be non-empty as it has a non-empty trace),
  doing a case analysis on the reductions that can result in a term of
  the form $C[\loc {\bar t} {(l, f')}]$:
  \begin{itemize}
  \item If the last reduction was precisely the call to $f'$ whose
    expansion generated $\bar t$, we are done.
  \item If the last reduction occurred in a strict subterm of
    $\bar t$, then its source term $\bar t'$ also has trace $l, f'$;
    we use our induction hypothesis to get a reduction from
    a $\loc {f'(u)} l$ to $C[\bar t']$ and complete with the last
    reduction step.
  \item If the last reduction step occurred in the context outside
    $\bar t$ without touching $\bar t$, we forget it and apply the
    induction hypothesis on the strictly smaller (but still non-empty)
    reachable sequence.
  \item Finally, if the last reduction step substituted $\bar t$
    (unchanged) from a different position in the term, our induction
    hypothesis immediately gives us a suitable reduction sequence
    reaching this earlier occurrence of $\bar t$, and we are done.
  \end{itemize}

  Now that we know how to rewind the last element from the trace, we
  can iterate this process on the trace $l_1, f, l_2$ of our blocked
  redex $f(t')$, once per element of $l_2$ and once for $f$. We thus
  get a non-empty, $f$-repeating reduction sequence
  \begin{mathline}
    \loc {f(u)} {l_1} \rewto \locsubs {t_f} {x \leftarrow u} {l_1, f} \rewto^\star C'[\loc {f(t')} {(l_1, f, l_2)}]
  \end{mathline}

  \paragraph{Let us drop the annotations now}

  We crucially used our trace annotations to build this $f$-repeating
  reduction sequence. In the rest of the proof we will not use the
  annotations anymore, so we move to the lighter world on
  unannotated terms: we redefine
  \begin{mathline}
    u \eqdef \floor u

    C' \eqdef \floor {C'}

    t' \eqdef \floor {t'}
  \end{mathline}
  and consider an unannotated, non-empty, $f$-repeating reduction sequence:
  \begin{mathline}
    f(u) \rewto \subs {t_f} {x \leftarrow u} \rewto^\star C'[f(t')]
  \end{mathline}

  \paragraph{Step 2: generalizing the $f$-repeating sequence}
  We claim that our $f$-repeating reduction sequence (now unannotated)
  starting in $f(u)$ can be generalized in a sequence starting in
  $f(x)$ for any fresh variable $x$.

  This proof step relies on the first-order restriction on our
  $\lambda$-calculus: this works easily because substitution $u$
  inside the body of $f$ cannot create new redexes. After the
  substitution, we can reduce redexes in copies of $u$ (if any), or
  reduce redexes already present in the body of $f$ (possibly with
  a variable $x$ now replaced by $u$), but substituting $u$ does not
  make new function calls possible.

  In consequence, we can systematically replace $u$ by a fresh
  variable $x$ in our reduction sequence, dropping any reduction step
  that is internal to $u$. Notice that the first step in our sequence
  \begin{mathline}
    f(u) \rewto \subs {t_f} {x \leftarrow u} \rewto^\star C'[f(t')]
  \end{mathline}
  is \emph{not} internal to $u$, so we know that it remains in the
  filtered reduction sequence -- we need it non-empty. Modulo
  $\alpha$-renaming we will assume that our fresh variable $x$ is also
  the name of the bound variable in the declaration of $f$. We get
  a filtered sequence of the form:
  \begin{mathline}
    f(x) \rewto t_f \rewto^\star C''[f(t'')]
  \end{mathline}

  \paragraph{Step 3: profit}

  Our generalized $f$-repeating reduction sequence starts from $f(x)$
  for a fresh variable $x$ and, in at least one step, reduces to
  a term that contains a call to $f$ again. We can immediately build
  an infinite reduction sequence by repeatedly composing our
  generalized sequence with itself.
\end{proof}

What we have shown so far is that blocked redexes could generate
infinite reduction sequences. But our reduction is non-deterministic,
so some terms have both terminating and non-terminating reduction
sequences -- they are normalizing but not strongly normalizing. Using
standard(ization) results on the $\lambda$-calculus we believe that we
can get a stronger completeness result -- the proof below is a bit
sketchy, so we degraded our claim from Theorem to Conjecture.

\begin{conjecture}[Strong completeness]
  If an annotated program $\letrecin D {\bar t}$ has a weakly
  normalizing erasure, that is, if $\floor t$ has \emph{some}
  reduction path to a normal form, then $\bar t$ has an annotated
  reduction sequence to a (non-blocked) $\beta$-normal form.
\end{conjecture}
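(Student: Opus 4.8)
The plan is to route the argument through \emph{leftmost-outermost} (normal-order) reduction and the classical normalization theorem, lifting a normalizing \emph{unannotated} reduction into the annotated system while proving that it never meets a blocked redex. Since our first-order recursive calculus is a fragment of the $\lambda$-calculus, the normalization theorem applies: if $\floor{\bar t}$ is weakly normalizing, its leftmost-outermost reduction is finite and ends at the normal form $w$. I would then replay this reduction step by step in the annotated system. An annotated term has exactly the tree shape of its erasure, so each leftmost-outermost redex of $\floor{\bar t}$ sits at a definite position of $\bar t$; the only way the lift can fail is if the annotation at that position \emph{blocks} the redex. The entire proof therefore reduces to ruling this out for weakly normalizing erasures.

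The key lemma, and the main obstacle, is a \emph{head} refinement of the weak-completeness theorem: \textbf{if a reachable annotated term has a blocked defined-function call at its outermost (head) position, then its erasure has no normal form.} I would prove it by revisiting Step~1 of the weak-completeness proof. If the top node is $\loc{f(\dots)}{l}$ with $f \in l$, then $f$ must have been added to the trace by an expansion of $f$ that took place \emph{at the top position} (that is the only way the head node can carry $f$ in its own trace); hence, after that expansion, head reduction of the body eventually brought another call to $f$ back to the top, so $f(u) \rewto^+ f(t')$ with the regenerated $f$ again at the head. Generalizing $u$ to a fresh variable as in Step~2 -- which is sound precisely because the calculus is first-order, so substitution into the body creates no new redexes -- yields a head-position $f$-repeating sequence $f(x)\rewto^+ f(t'')$, and iterating it as in Step~3 produces an \emph{infinite head reduction}, hence no head normal form, hence no normal form. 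The delicate point, and the reason the authors degrade this to a conjecture, is the trace bookkeeping: the occurrence of $f$ in $l$ may have been created by an expansion at an \emph{ancestor} position that only later surfaced at the head through further head reductions, so one must chase the annotation through nested expansions and verify that the relevant $f$ is genuinely kept in head position throughout, rather than merely somewhere in the term as in the original (context-$C'$) statement.

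Granting this lemma, the lift proceeds by induction on the length of the finite unannotated leftmost-outermost reduction of $\floor{\bar t}$. At each step the contracted redex is the outermost one. If the head is a defined function, the lemma applies -- the current term is a reachable reduct of $\bar t$, hence weakly normalizing by Church--Rosser -- so the head call is non-blocked, the annotated step is available, and it erases to the intended unannotated step. If instead the head is a free or primitive symbol $g(t_1,\dots,t_n)$, then reaching the normal form forces each $t_i$ to normalize, so I descend into the arguments, which are themselves reachable subterms and weakly normalizing, and apply the induction hypothesis. Since no step is ever blocked, the annotated leftmost-outermost reduction tracks the unannotated one exactly and terminates at an annotated normal form erasing to $w$. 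The auxiliary facts -- reducts of reachable terms are reachable, reducts of weakly normalizing terms are weakly normalizing with the same normal form (Church--Rosser), and the normalization theorem for this fragment -- I would treat as standard; all the real difficulty is concentrated in the head non-blocking lemma above.
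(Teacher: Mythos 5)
Your proposal follows essentially the same route as the paper's own (deliberately sketchy) argument: run leftmost-outermost reduction in the annotated system, and rule out blocked redexes in that position by showing that the weak-completeness machinery would otherwise produce an infinite reduction sequence that is itself leftmost-outermost (a head-repeating $f$-cycle), contradicting the normalization theorem for this fragment. The delicate trace-chasing point you flag -- that the $f$ in the head trace may have been installed by an ancestor expansion and must be tracked through the rewinding and generalization steps -- is exactly the gap that leads the authors to downgrade the statement from Theorem to Conjecture, so your proposal is neither more nor less complete than theirs.
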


\begin{proof}
  A standard fact about the $\lambda$-calculus, which applies to our
  setting, is that the leftmost-outermost reduction strategy is
  minimally diverging, in particular it reaches the normal forms of
  any weakly normalizing term.

  Our argument relies on studying the annotated leftmost-outermost
  reduction sequence of the annotated term $\bar t$ (under the
  definitions $D$). If this reduction sequence does not reach
  a blocked redex in leftmost-outermost position, we are done: it
  reaches a $\beta$-normal form of $\floor {\bar t}$.

  We now argue that $\bar t$ \emph{cannot} reach a blocked redex by
  following the leftmost-outermost strategy. If $\bar t$ could reach
  a blocked redex, then the proof of the previous theorem would
  provide an infinite reduction sequence from that point. This
  infinite reduction sequence is obtained by repeating a suffix of the
  reduction sequence to this block redex, which follows the
  leftmost-outermost strategy. In consequence, the infinite reduction
  sequence is itself a leftmost-outermost sequence. This is in
  contradiction with the assumption that $\floor t$ is weakly
  normalizing.
\end{proof}

Finally, it is interesting to study how those proofs break down when
we move from our first-order calculus its closed-higher-order
extension presented in
Section~\ref{subsec:closed-higher-order-calculus}. In that section, we
show that completeness does not hold in the closed-higher-order
fragment, with the following counterexample:
\begin{lstlisting}
let rec f(p,q) = p(f(q,q))
    and id(x) = x
    and stop(x) = done
in f(id, stop) (* $\rewto$ id(f(stop,stop)) $\rewto$ f(stop,stop)
                    $\rewto$ stop(f(stop,stop)) $\rewto$ done *)
\end{lstlisting}

The first blocked redex in this reduction sequence is the application
of \code{f} in the second term \code{id(f(stop,stop))} of the
sequence. Using the reasoning of our proof of weak completeness, we can
``generalize'' the $f$-repeating sequence to a sequence of the form
\code{f(k,stop) $\rewto$ k(f(stop,stop))} for any \code{k}, whose
instantiations can be repeated into an infinite reduction sequence:
\begin{lstlisting}
f(id,stop) $\rewto$ id(f(stop,stop)) $\rewto$ id(stop(f(stop,stop))) $\rewto$ id(stop(stop(...)))
\end{lstlisting}
On the other hand, this first occurrence of a blocked redex is
\emph{not} in head position for the leftmost-outermost reduction
strategy, so our proof of the strong completeness theorem does not
applies.

Another blocked redex does appear in head position in the third term
of the reduction sequence, \code{f(stop,stop)}. But then the reasoning
of our proof of weak completeness cannot be applied anymore: the reduction
sequence
\begin{lstlisting}
f(id,stop) $\rewto$ id(f(stop,stop)) $\rewto$ f(stop,stop)
\end{lstlisting}
crucially uses a parameter of \code{f} to create a new redex, and this
sequence cannot be generalized by replacing \code{id} by an arbitrary
parameter {k}.

\subsection{Proof of Correctness for the Closed-Higher-Order Fragment}
\label{app:closed-higher-order-correctness-proof}

This is a proof missing from \ref{subsec:closed-higher-order-calculus}.

\begin{theorem*}
  Our termination-monitoring algorithm remains correct for this
  closed-higher-order language: the annotated language is strongly
  normalizing.
\end{theorem*}

\begin{proof}
  The proof uses the same machinery as the first-order proof -- see
  \appendixref{appendix:proofs}. We translate our terms to trees with
  measured expansions. We can then immediately reuse our result that
  measured expansions are strongly normalizing.

  Let us recall the translation of our higher-order terms into trees
  (Example~\ref{ex:trees-of-annotated-terms}):
  \begin{mathpar}
    \param T \ni c \bnfeq x \bnfor \loc f l
  \\
    \tree(x)
    \eqdef
    \node {x} \emptyset

    \tree (\loc {f {\Fam i {\bar t_i}}} l)
    \eqdef
    \node {\loc f l} {\Fam i {\tree (\bar t_i)}}
  \end{mathpar}
  This translation would use the function name $f$ directly as the
  node constructor for an application node.

  We now use an explicit $\mathsf{app}$ constructor for application
  nodes, as the function name is not known at this point. The function
  term $t$ becomes the first child of the application node;
  \begin{mathpar}
    \param T \ni c \bnfeq x \bnfor f \bnfor \loc {\mathsf{app}} l
    \\
    \tree(x)
    \eqdef
    \node {x} \emptyset

    \tree(f)
    \eqdef
    \node {f} \emptyset

    \tree (\loc {t {\Fam i {\bar u_i}}} l)
    \eqdef
    \node {\loc {\mathsf{app}} l} {t :: \Fam i {\tree (\bar u_i)}}
  \end{mathpar}

  To conclude the proof we only have to remark that reductions of our
  closed-higher-order terms become measured expansions of the
  corresponding trees, extending
  Lemma~\ref{lem:annotated-terms-measured} to the closed-higher-order
  case. For each $f(\Fam i {x_i}) = t'$ in the ambient set of
  recursive definitions $D$, we have tree expansions of the form
  \begin{mathpar}
    \node {\loc {\mathsf{app}} l} {\node {f} \emptyset :: \Fam i {a_i}}
    \rewto
    \subs {\tree(\locsubs {t'} {\emptyset} {l,f})} {\Fam i {x_i \leftarrow a_i}}
  \end{mathpar}
  If $t'$ is a variable, then this reduces to a strict sub-tree, one of
  the arguments $a_i$. If $t'$ is a function name $f$, this reduces to
  a leaf node. Otherwise $t'$ starts with an application node, which
  gets annotated with the strictly larger trace $l,f$ in the resulting
  tree. In all three cases, this is a measured tree expansion.
\end{proof}

\section{More on \code{cpp}}
\label{appendix:cpp}

\subsection{Relating our Algorithm to Dave Prosser's Pseudo-Code}
\label{appendix:relating-to-dave-prosser}

The pseudo-code defines two mutually-recursive functions,
\code{expand} which computes the macro expansion of a full token list,
and \code{subst} which expands a single macro call. Let us quote the
pseudo-code, slightly simplified, with all parts related to irrelevant
\code{cpp} features (stringization, concatenation) elided:

\begin{lstlisting}
expand(TS)
{
  if TS is {} then
    return {};
  else if TS is T^HS $\cdot$ TS' and T is in HS then
    return T^HS $\cdot$ expand(TS');
  else if TS is T^HS $\cdot$ ( $\cdot$ TS' and T is a function-like macro then
    check TS' is Actuals $\cdot$ )^HS' $\cdot$ TS'' and actuals are correct for T,
    return expand(subst(ts(T), fp(T), Actuals, (HS $\cap$ HS') $\uplus$ {T}, {}) $\cdot$ TS'');
  else TS must be T^HS $\cdot$ TS'
    return T^HS $\cdot$ expand(TS');
}
\end{lstlisting}

\code{expand(TS)} processes a token sequence \code{TS} by expanding
macro invocations. (\code{TS $\cdot$ TS'} is concatenation of token
sequences.) The tokens \code{T^HS} of the sequence are raw tokens
\code{T} annotated with a ``hide set'' \code{HS}, which is a set of
macro names. Hide sets \code{HS} correspond to our traces $l$, and
token sequences \code{TS} to our annotated terms $\bar t$.

In the definition of \code{expand}, an annotated macro name
\code{T^HS} is \emph{not} expanded if its name occurs in its own
trace. This corresponds to our blocked redexes. If $T$ is not blocked
by the hide set, then the function \code{subst} is called to perform
a $\beta$-reduction that also expands the macro parameters, and
expansion is rerun on the result. \code{ts(T)} is the code/body of
the macro \code{T}, \code{fp(T)} is an array of formal parameters,
\code{Actuals} an array of actual parameters, the hide set of the
invocation adds \code{T} as expected. It computes an intersection of
\code{HS} and \code{HS'} instead of just using \code{HS}; this only
matters in the closed-higher-order case, which we discuss in another
section.

\begin{lstlisting}
subst(IS, FP, AP, HS, OS)
{
  if IS is {} then
    return hsadd(HS, OS);
  else if IS is T $\cdot$ IS' and T is FP[i] then
    return subst(IS', FP, AP, HS, OS $\cdot$ expand(AP[i]));
  else IS must be T^HS' $\cdot$ IS'
    return subst(IS', FP, AP, HS, OS $\cdot$ T^HS');
}
\end{lstlisting}

The function \code{subst(IS, FP, AP, HS, OS)} computes a single macro
invocation by processing the token sequence \code{IS} coming from the
function body, replacing the formal parameters from \code{FP} by the
expansion of the actual parameters \code{AP}, with hide set \code{HS};
it appends its results to the accumulator token sequence \code{OS} --
which is annotated with hide sets.

In the definition of \code{subst}, formal parameters found in the
macro body are replaced by the recursive expansion of the actual
parameters. Note that the name \code{T} of the macro being expanded
was not added to the hide set of the actual parameters. After the
whole macro body \code{IS} has been traversed, we call the auxiliary
function \code{hsadd} on the output accumulator \code{OS}; this
function arguments the hide set of all tokens of \code{OS} by the
invocation-site hide set \code{HS}.

(Note: we are not sure why \code{IS} is taken to be annotated with
hide sets, while it comes from a macro definition that should not be
the result of any expansion yet. As far as we can tell \code{HS'}
should always be empty in this function, and this is the behavior of
Spinellis' CSout~\citep*{CSout} implementation.)

This \code{subst} function is related to our annotated substitution
$\locsubs t \sigma l$, but not exactly the same: our definition only
uses the trace $l$ for the parts of the term coming from the function
body $t$, not for the substituted parameters in $\sigma$ -- which
remain unreduced. In contrast \code{subst} adds the hide set \code{HS}
to all the output, but only \emph{after} recursively expanding the
actual parameters. In our algorithm this would correspond to first
reducing the arguments in $\sigma$ to a normal form, and then
extending their trace with $l$.

\subsection{Relating the Algorithms on the First-Order Fragment}

For a precise comparison, here is our own rule for function calls, and
a big-step presentation of Dave Prosser's algorithm:
\begin{mathpar}
\infer[our-rule]
{(f {\Fam i {x_i}} = t') \in D \\ f \notin l}
{
  \letrecin D C[\loc {f {\Fam i {\bar t_i}}} l]
  \rewto
  \letrecin D C[\locsubs {t'} {\Fam i {x_i \leftarrow \bar t_i}} {l,f}]
}

\infer[Prosser-rule]
{(f {\Fam i {x_i}} = t') \in D
  \\ f \notin l
  \\ \forall i, \bar t_i \Downarrow_D \bar u_i
  \\ t'' \eqdef (\locsubs {t'} {\Fam i {x_i \leftarrow \bar u_i}} \emptyset) ^{l' \mapsto l' \cup l,f}
  \\ C[t''] \Downarrow_D u'
}
{
  C[\loc {f {\Fam i {\bar t_i}}} l]
  \Downarrow_D
  u'
}
\end{mathpar}

In the definition of $t''$ we use a notation
${\bar t}^{l' \mapsto \dots}$, which we mean as a map over the
annotations of $\bar t$, transforming the trace $l'$ of each syntactic
node into another trace.

We claim that the two approaches give the same results on our
first-order calculus, or equivalently on the first-order macro
fragment: the erasure of the normal forms on both side are the
same. Let us review the two differences between the two rules:
\begin{itemize}
\item The normalization of the $\bar t_i$ into $\bar u_i$ at
  substitution time in Prosser's rule corresponds to a call-by-value
  reduction strategy. Our small-step reduction can model any reduction
  strategy, but it eventually computes the same normal forms.
\item As we discussed, the evolution of traces differ: Prosser's rule
  adds the current trace $l,f$ to all the nodes of the substitution
  $\subs {t'} {\Fam i {x_i \leftarrow \bar u_i}}$, while we only use
  it for the unannotated syntactic nodes coming from $t'$. The
  annotations on the nodes of $u_i$ thus differ in the final result --
  this is the only difference. But this does not change the reduction
  dynamics, given that the $u_i$ are already in normal form: no
  further reduction happens in them, so in particular there are not
  more blocked redexes. After erasure we get the same results.
\end{itemize}

\subsection{Expanding Closed-Higher-Order Macros}

Dave Prosser's algorithm uses as hide set for a function call
\code{T(Actuals)} the intersection of the hide set \code{HS} of the
function-macro symbol \code{T} and the hide set \code{HS'} of the
closing parenthesis.

In the first-order case, those two hide-sets are always the same: each
syntactic occurrence of \code{T} in the source is followed by the
closing parenthesis, so their hide sets evolve in the exact same way
during expansion.

This is not the case anymore on higher-order examples. Consider our
\code{NIL} example:
\begin{lstlisting}
#define NIL(xxx) xxx
#define G0(arg) NIL(G1)(arg)
#define G1(arg) NIL(arg)
G0(42)  // ~> NIL(G1)(42) ~> G1(42) ~> NIL(42) ~> 42
\end{lstlisting}
In the \code{NIL} reduction sequence, the third term is
\code{G1(42)}; at this point the hide set of \code{G1} contains
\code{G0} and \code{NIL}, but the hide set of the closing parenthesis
contains only \code{G0}. So the function call is expanded in the
intersected hide set \code{G0}, and the occurrence of \code{NIL} that
it creates can be reduced. (C programmers were complaining to the
C standard bodies about implementations that would only consider the
hide set of the function token, and thus refuse to reduce
\code{NIL(42)} further.)

Note that \code{G1} has the hide set \code{G0, NIL} because this
occurrence was created by the expansion of \code{G0}, and then taken
as parameter and returned by the macro \code{NIL}. With our own
algorithm, we would only have \code{G0} in the trace at this point,
as we do not change the trace of parameters when expanding a function
application. In other words, the \code{NIL} comes from the call to
\code{hsadd} on both the function body and the parameters, and the
hide-set intersection can be presented as compensating this extension
of the parameter hide sets -- which we have shown is unnecessary in
the first-order case.

\end{version} 

\end{document}